\newtheorem{theorem}{Theorem}
\newtheorem{corollary}[theorem]{Corollary}
\newtheorem{proposition}[theorem]{Proposition}
\newtheorem{lemma}[theorem]{Lemma}
\newtheorem{remark}[theorem]{Remark}
\newtheorem{definition}[theorem]{Definition}
\DeclareMathOperator*{\argmax}{arg\,max}
\DeclareMathOperator*{\argmin}{arg\,min}
\title{Scheduling Policies for Minimizing Age of Information in Broadcast Wireless Networks}
\author{Authors}
\author{Igor Kadota, Abhishek Sinha, Elif Uysal-Biyikoglu, Rahul Singh and Eytan Modiano %\\
%Laboratory for Information and Decision Systems\\ 
%Massachusetts Institute of Technology% <-this % stops a space
\thanks{The authors are with the Massachusetts Institute of Technology and with the Middle East Technical University. (e-mail: kadota@mit.edu; sinhaa@mit.edu; uelif@metu.edu.tr; rsingh12@mit.edu; modiano@mit.edu)}% <-this % stops a space
\thanks{This paper was presented in part at the Allerton Conference in 2016 \cite{AoI_broadcast}.}% <-this % stops a space
}
\begin{document}
%\onecolumn

\maketitle

\begin{abstract}
We consider a wireless broadcast network with a base station sending time-sensitive information to a number of clients through unreliable channels. The Age of Information (AoI), namely the amount of time that elapsed since the most recently delivered packet was generated, captures the freshness of the information. We formulate a discrete-time decision problem to find a transmission scheduling policy that minimizes the expected weighted sum AoI of the clients in the network. %To the best of our knowledge, this is the first work to provide a scheduling policy that optimizes AoI in a wireless network with unreliable channels. 

We first show that in symmetric networks a Greedy policy, which transmits the packet with highest current age, is optimal. For general networks, we develop three low-complexity scheduling policies: a randomized policy, a Max-Weight policy and a Whittle's Index policy, and derive performance guarantees as a function of the network configuration. To the best of our knowledge, this is the first work to derive performance guarantees for scheduling policies that attempt to minimize AoI in wireless networks with unreliable channels. Numerical results show that both Max-Weight and Whittle's Index policies outperform the other scheduling policies in every configuration simulated, and achieve near optimal performance. 
%are presented to demonstrate the AoI performance of the scheduling policies. The results show the superior performance of Max-Weight and Whittle-Index policies.%
\end{abstract}

\begin{IEEEkeywords}
Age of Information, Scheduling, Optimization, Quality of Service, Wireless Networks.
\end{IEEEkeywords}

\section{INTRODUCTION}\label{sec.Intro}
\IEEEPARstart{A}{ge} of Information (AoI) has been receiving increasing attention in the literature \cite{AoI_update,AoI_multiple,AoI_MG1,AoI_management,AoI_path,AoI_errors,AoI_gamma,AoI_nonlinear,AoI_energy15,AoI_energy17,AoI_lazy,UpdateOrWait17,AoI_scheduling,PAoI_scheduling,AoI_cache,AoI_multiaccess,AoI_sync,AoI_design,AoI_LGFS16,AoI_LGFS17,AoI_VANET,AoI_buffer,AoI_emulation,AoI_LUPMAC}, particularly for applications that generate time-sensitive information such as position, command and control, or sensor data. An interesting feature of this performance metric is that it captures the freshness of the information \emph{from the perspective of the destination}, in contrast to the long-established packet delay, that represents the freshness of the information with respect to individual packets. In particular, AoI measures the \emph{time that elapsed since the generation of the packet that was most recently delivered to the destination}, while packet delay measures the time interval between the generation of a packet and its delivery. %As discussed in Section \ref{sec.2}, the AoI of a given client is determined by the delays and inter-delivery times of the packets generated by this client.

The two parameters that influence AoI are packet delay and packet inter-delivery time. In general, controlling only one is insufficient for achieving good AoI performance. For example, consider an M/M/1 queue with a low arrival rate and a high service rate. In this setting, the queue is often empty, resulting in low packet delay. Nonetheless, the AoI can still be high, since infrequent packet arrivals result in outdated information at the destination. Table \ref{tab.example} provides a numerical example of an M/M/1 queue with fixed service rate $\mu=1$ and a variable arrival rate $\lambda$. The first and third rows represent a system with a high average AoI caused by high inter-delivery time and high packet delay, respectively. The second row shows the queue at the point of minimum average AoI \cite{AoI_update}.

\begin{table}[b]
\caption{Expected delay, expected inter-delivery time and average AoI of a M/M/1 queue with $\mu=1$ and variable $\lambda$.\vspace{-0.3cm}}\label{tab.example}
\begin{center}
\begin{tabular}{cccc} \toprule
$\lambda$ & $\mathbb{E}$[delay] & $\mathbb{E}$[inter-delivery] & Average AoI \\
(pkt/sec) & (sec) & (sec) & (sec) \\
\midrule
$0.01$ & $1.01$ &  $\mathbf{100.00}$ & $\mathbf{101.00}$ \\
$\mathbf{0.53}$ & $2.13$ & $1.89$ & $3.48$ \\
$0.99$ &  $\mathbf{100.00}$ & $1.01$ & $\mathbf{100.02}$ \\
\bottomrule
\end{tabular}
\end{center}
\end{table}

\emph{A good AoI performance is achieved when packets with low delay are delivered regularly}. It is important to emphasize the difference between delivering packets regularly and providing a minimum throughput. Figure~\ref{fig.Regularity} illustrates the case of two sequences of packet deliveries that have the same throughput but different delivery regularity. In general, a minimum throughput requirement can be fulfilled even if long periods with no delivery occur, as long as those are balanced by periods of consecutive deliveries. 

\begin{figure}[b]
\begin{center}
\includegraphics[height=2.5cm]{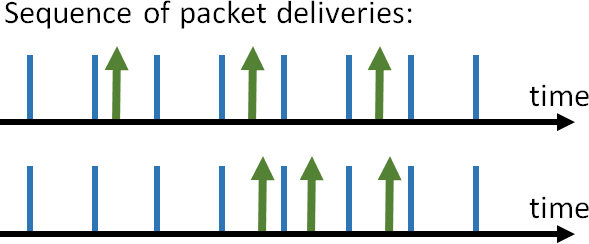}
\end{center}
\caption{Two sample sequences of packet deliveries are represented by the green arrows. Both sequences have the same throughput, namely 3 packets over the interval, but different delivery regularity.}\label{fig.Regularity}
\end{figure}

The problem of minimizing AoI was introduced in \cite{AoI_update} and has been explored using different approaches. Queueing Theory is used in \cite{AoI_update,AoI_multiple,AoI_MG1,AoI_management,AoI_path,AoI_errors,AoI_gamma,AoI_nonlinear} for finding the optimal server utilization with respect to AoI. The authors in \cite{AoI_energy15,AoI_energy17,AoI_lazy,UpdateOrWait17} consider the problem of optimizing the times in which packets are generated at the source in networks with energy-harvesting or maximum update frequency constraints. Link scheduling optimization with respect to AoI has been recently considered in \cite{AoI_scheduling,PAoI_scheduling,AoI_cache,AoI_multiaccess,AoI_sync,AoI_design,AoI_LGFS16,AoI_LGFS17}. Applications of AoI are studied in \cite{AoI_VANET,AoI_buffer,AoI_emulation,AoI_LUPMAC}.

The problem of optimizing link scheduling decisions in broadcast wireless networks with respect to throughput and delivery times has been studied extensively in the literature. Throughput maximization of traffic with strict packet delay constraints has been addressed in \cite{single_link,theoryofQoS, delay,index_schedule}. Inter-delivery time is considered in \cite{TSLS_15,TSLS_16,index_regularity,regularity_reliable,regularity_smooth,regularity_frequency,regularity_round_robin} as a measure of service regularity. Age of Information has been considered in \cite{AoI_scheduling,PAoI_scheduling,AoI_cache,AoI_multiaccess,AoI_sync,AoI_design,AoI_LGFS16,AoI_LGFS17}. 

In this paper, %we extend the results in \cite{AoI_broadcast} substantially. Contrasting with previous work on AoI \cite{AoI_update,AoI_multiple,AoI_MG1,AoI_management,AoI_path,AoI_errors,AoI_gamma,AoI_nonlinear,AoI_energy15,AoI_energy17,AoI_lazy,UpdateOrWait17,AoI_scheduling,PAoI_scheduling,AoI_cache,AoI_multiaccess,AoI_sync,AoI_design,AoI_LGFS16,AoI_LGFS17,AoI_VANET,AoI_buffer,AoI_emulation,AoI_LUPMAC}, 
we consider a network in which packets are generated periodically and transmitted through unreliable channels. Minimizing the AoI is particularly challenging in wireless networks with unreliable channels due to transmission errors that result in packet losses. Our main contribution is the development and analysis of four low-complexity scheduling policies: a Greedy policy, a randomized policy, a Max-Weight policy and a Whittle's Index policy. We first show that Greedy achieves minimum AoI in symmetric networks. Then, for general networks, we compare the performance of each policy against the optimal AoI and derive the corresponding performance guarantees. To the best of our knowledge, this is the first work to derive performance guarantees for policies that attempt to minimize AoI in wireless networks with unreliable channels. A preliminary version of this work appeared in \cite{AoI_broadcast}.

%and providing insight into strengths and drawbacks of each policy.
%Minimizing the AoI is particularly challenging in wireless networks with unreliable channels due to transmission errors that result in packet losses. In this work, we consider the problem of optimizing link scheduling decisions to minimize the expected weighted sum AoI of the clients in the network. To the best of our knowledge, this is the first work to provide a transmission scheduling policy that optimizes AoI in a wireless network with unreliable channels.
%\begin{itemize}
%\item periodic traffic
%\item unreliable channels
%\end{itemize}
The remainder of this paper is outlined as follows. In Sec.~\ref{sec.Model}, the network model is presented. In Sec.~\ref{sec.Symmetric}, we find the optimal scheduling policy for the case of symmetric networks. In Sec.~\ref{sec.General}, we consider the general network case and derive performance guarantees for the Greedy, Randomized and Max-Weight policies. In Sec.~\ref{sec.Whittle}, we establish that the AoI minimization problem is indexable and obtain the Whittle's Index in closed-form. Numerical results are presented in Sec.~\ref{sec.Simulation}. The paper is concluded in Sec.~\ref{sec.Conclusion}.
%\newpage
\section{SYSTEM MODEL}\label{sec.Model}
Consider a single-hop wireless network with a base station (BS) sending time-sensitive information to $M$ clients. Let the time be slotted, with $T$ consecutive slots forming a frame. At the beginning of every frame, the BS generates one packet per client $i \in \{1,2,\cdots,M\}$. Those new packets replace any undelivered packets from the previous frame. Denote the frame index by the positive integer $k$. Packets are periodically generated at every frame $k$ for each client $i$, thus, each packet can be unequivocally identified by the tuple $(k,i)$.

Let $n \in \{1,\cdots,T\}$ be the index of the slot within a frame. A slot is identified by the tuple $(k,n)$. In a slot, the BS transmits a packet to a selected client $i$ over the wireless channel. The packet is successfully delivered to client $i$ with probability $p_i \in (0,1]$ and a transmission error occurs with probability $1-p_i$. The probability of successful transmission $p_i$ is fixed in time, but may differ across clients. The client sends a feedback signal to the BS after every transmission. The feedback (success / failure) reaches the BS instantaneously and without errors.

The transmission scheduling policies considered in this paper are non-anticipative, i.e. policies that do not use future knowledge in selecting clients. Let $\Pi$ be the class of non-anticipative policies and $\pi \in \Pi$ be an arbitrary admissible policy. In a slot $(k,n)$, policy $\pi$ can either idle or select a client with an undelivered packet. Clients that have already received their packet by slot $(k,n)$ can only be selected in the next frame $k+1$. Scheduling policies attempt to minimize the expected weighted sum AoI of the clients in the network. Next, we discuss this performance metric.

\subsection{Age of Information Formulation}\label{sec.AoI}
Prior to introducing the expected weighted sum AoI, we characterize the Age of Information of a single client in the context of our system model. Let $AoI_i$ be the positive real number that represents the Age of Information of client $i$. The $AoI_i$ increases linearly in time when there is no delivery of packets to client $i$. At the end of the frame in which a delivery occurs, the $AoI_i$ is updated to $T$. In Fig.~\ref{fig.AoI}, the evolution of $AoI_i$ is illustrated for a given sample sequence of deliveries to client $i$.

\begin{figure}[b!]
\begin{center}
\includegraphics[height=4.9cm]{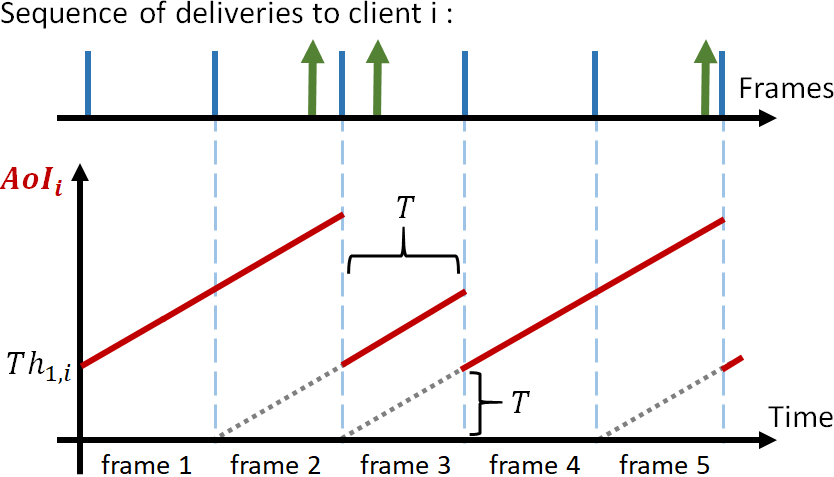}
\end{center}
\caption{On the top, a sample sequence of deliveries to client $i$ during five frames. The upward arrows represent the times of packet deliveries. On the bottom, the associated evolution of the $AoI_i$.}\label{fig.AoI}
\end{figure}

In Fig.~\ref{fig.AoI_close}, the $AoI_i$ is shown in detail. Let $\hat{s}_k$ denote the set of clients that successfully received packets during frame $k$ and let the positive integer $h_{k,i}$ represent the number of frames since the last delivery to client $i$. At the beginning of frame $k+1$, the value of $h_{k,i}$ is updated as follows
\begin{equation}\label{eq.evolution_h}
h_{k+1,i}=\left\{ \begin{array}{cl}
h_{k,i}+1 &, \mbox{ if $i \notin \hat{s}_{k}$ ; } \\
1 &, \mbox{ if $i \in \hat{s}_{k}$ . } \end{array} \right.
\end{equation}

\begin{figure}[b!]
\begin{center}
\includegraphics[height=3.2cm]{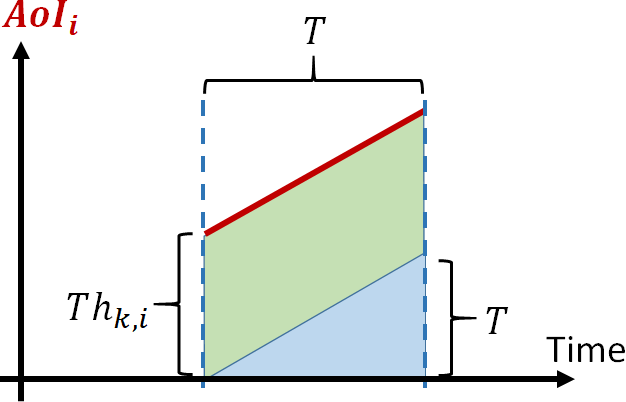}
\end{center}
\caption{Area under $AoI_i$ during any frame $k$ in terms of $h_{k,i}$ and $T$.}\label{fig.AoI_close}
\end{figure}

As can be seen in Fig.~\ref{fig.AoI_close}, during frame $k$ the area under the $AoI_i$ curve can be divided into a triangle of area $T^2/2$ and a parallelogram of area $h_{k,i}T^2$. This area, averaged over time, captures the average Age of Information associated with client~$i$. A network-wide metric for measuring the freshness of the information is the Expected Weighted Sum AoI, namely
\begin{align}
\mbox{EWSAoI}&=\frac{1}{KTM}\mathbb{E}\left[\sum_{k=1}^{K}\sum_{i=1}^{M}\alpha_i\left(\frac{T^2}{2}+T^2h_{k,i}\right) \left| \; \vec{h}_1 \right. \right]\nonumber\\
&=\frac{T}{2M}\sum_{i=1}^{M}\alpha_i +\frac{T}{KM}\mathbb{E}\left[\sum_{k=1}^{K}\sum_{i=1}^{M}\alpha_ih_{k,i}\left| \; \vec{h}_1 \right. \right] \; , \label{eq.EWSAoI}
\end{align}
where $\alpha_i$ is the positive real value that denotes the client's weight and the vector $\vec{h}_1=[h_{1,1},\cdots,h_{1,M}]^T$ represents the initial values of $h_{k,i}$ in \eqref{eq.evolution_h}. For notation simplicity, we omit $\vec{h}_1$ hereafter. Manipulating the expression of EWSAoI gives us the objective function
\begin{equation}\label{eq.Objective}
\min_{\pi \in \Pi}\mathbb{E}\left[J_K^{\pi}\right], \mbox{ where } J_K^{\pi}=\frac{1}{KM}\sum_{k=1}^{K}\sum_{i=1}^{M}\alpha_i \; h_{k,i}^\pi \; ,
\end{equation}
where \eqref{eq.Objective} is obtained by subtracting the constant terms from \eqref{eq.EWSAoI} and dividing the result by $T$. As can be seen by the relationship between \eqref{eq.EWSAoI} and \eqref{eq.Objective}, the scheduling policy that minimizes $\mathbb{E}\left[J_K^{\pi}\right]$ is the same policy that minimizes EWSAoI. Henceforth in this paper, we refer to this policy as \emph{AoI-optimal}. With the definitions of AoI\footnote{For ease of exposition, in this paper, the value of $AoI_i$ is updated at the beginning of the frame that follows a successful transmission to client $i$, rather than immediately after the successful transmission. This update mechanism simplifies the problem while maintaining the features of interest.} and objective function presented, in the next section we introduce the Greedy policy. Table~\ref{tab.notation} summarizes key notation.

\begin{table}[b]
\caption{Description of key notation.\vspace{-0.3cm}}\label{tab.notation}
\begin{center}
\begin{tabular}{cl} \toprule
$M$ & number of clients. Client index is $i \in \{1,2,\cdots,M\}$ \\
$K$ & number of frames. Frame index is $k \in \{1,2,\cdots,K\}$ \\
$T$ & number of slots in a frame. Slot index is $n \in \{1,2,\cdots,T\}$ \\
$p_i$ & probability of successful transmission to client $i$ \\
$\pi$ & admissible non-anticipative scheduling policy \\
$AoI_i$ & Age of Information associated with client $i$ \\
$h_{k,i}$ & number of frames since the last packet delivery to client $i$ \\
$\hat{s}_k$ & set of clients that received a packet during frame $k$ \\
$\alpha_i$ & weight of client $i$. Represents the relative importance of $AoI_i$ \\
$\mathbb{E}[J_K^\pi]$ & objective function that represents the performance of policy $\pi$ \\
$L_B$ & Lower Bound on $\mathbb{E}[J_K^\pi]$ for any admissible policy $\pi$\\
$U_B^\pi$ & Upper Bound on $\mathbb{E}[J_K^\pi]$ for a particular policy $\pi$ \\
$\rho^\pi$ & AoI guarantee associated with policy $\pi$ \\
$D_i(K)$ & number of packet deliveries to client $i$ up to frame $K$ \\
$A_i(K)$ & number of packet transmissions to client $i$ up to frame $K$ \\
$I_i[m]$ & number of frames between consecutive deliveries to client $i$ \\
$R_i$ & number of frames remaining after the last delivery to client $i$ \\
$\bar{\mathbb{M}}[.]$ & operator that calculates the sample mean of a set of values \\
$\bar{\mathbb{V}}[.]$ & operator that calculates the sample variance of a set of values \\
\bottomrule
\end{tabular}
\end{center}
\end{table}

%With the definitions of Age of Information\footnote{For ease of exposition, in this paper, the value of $AoI_i$ is updated at the beginning of the frame that follows a successful transmission to client $i$, rather than immediately after the successful transmission. This update mechanism simplifies the problem while maintaining the features of interest.} and objective function presented, in the next section we introduce the Greedy policy, discuss its properties and show that it is AoI-optimal when the network is \emph{symmetric}, namely all clients have the same channel reliability $p_i=p \in (0,1]$ and weight $\alpha_i=\alpha \geq 0$. 

%\textbf{[Things we may add]} This system model represents two common types of traffic: periodic and on-demand. The network model can also model uplink traffic. Notice that replacing packets can only improve AoI. (FIFO, LIFO and LIFO with preemption)

%New packets are generated at the clients, substituting outdated packets when they are generated. This packet management capability in the clients benefits the timeliness of the information at the BS \cite{AoI_management}.
%\newpage
\section{OPTIMALITY OF GREEDY}\label{sec.Symmetric}
%\section{Optimality of the Greedy policy for Symmetric Networks}\label{sec.Symmetric}
In this section, we introduce the Greedy policy and show that it minimizes the AoI of the finite-horizon scheduling problem described in Sec.~\ref{sec.Model} under some conditions on the underlying network. The Greedy policy is defined next.

%\begin{policy}\label{policy.Greedy}
\emph{Greedy policy schedules in each slot $(k,n)$ a transmission to the client with highest value of $h_{k,i}$ that has an undelivered packet, with ties being broken arbitrarily.}
%\end{policy}

Denote the Greedy policy as $G$. Observe that Greedy is non-anticipative and work-conserving, i.e. it only idles after all packets have been delivered during frame $k$. %Next, we discuss a few properties of $G$ that provide insight into the evolution of the AoI over time. %and are used in showing that $G$ minimizes the EWSAoI for every symmetric network. Note that Remark \ref{rem.Greedy}, Lemma \ref{lem.Greedy} and Corollary \ref{cor.Greedy} are valid for any network, symmetric or otherwise.
Next, we discuss a few properties of the Greedy policy that lead to the optimality result in Theorem~\ref{theo.Greedy}.

\begin{remark}\label{rem.Greedy}
The Greedy policy switches scheduling decisions only after a successful packet delivery.
\end{remark}
%\begin{proof}
In slot $(k,n)$, Greedy selects client $i=\argmax_{j} \{ h_{k,j} \}$ from the set of clients with an undelivered packet. Assume that this packet transmission fails and the subsequent slot is in the same frame $k$. Since $\vec{h}_{k}$ remains unchanged and client $i$ still has an undelivered packet, the Greedy policy selects the same client $i$. Alternatively, if the next slot is in frame $k+1$, then $\vec{h}_{k+1,i}$ evolves according to \eqref{eq.evolution_h} and Greedy selects $\argmax_{j} \{ h_{k+1,j} \}$ from the set of all clients. It follows from \eqref{eq.evolution_h} that client $i$ is selected again. Hence, the Greedy policy selects the same client $i$, uninterruptedly, until its packet is delivered.
%\end{proof}

\begin{lemma}[Round Robin]\label{lem.Greedy}
%Consider a network with $M$ clients, channel reliability $p_i \in (0,1], \forall i \in \{1,2,\cdots,M\}$ and initial vector $\vec{h}_1$. 
Without loss of generality, reorder the client index $i$ in descending order of $\vec{h}_1$, with client $1$ having the highest $h_{1,i}$ and client $M$ the lowest $h_{1,i}$. The Greedy policy \textbf{delivers} packets according to the index sequence $(1,2,\cdots,M,1,2,\cdots)$ until the end of the time-horizon $K$, i.e. Greedy follows a Round Robin pattern.
\end{lemma}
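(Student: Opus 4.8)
The plan is to prove, by induction on the frame index $k$, an ordering invariant that pins down exactly which clients Greedy delivers in each frame, and then read off the global delivery sequence from it. Two facts set up the induction. First, by Remark~\ref{rem.Greedy}, once Greedy selects a client it retransmits to that same client until the packet is delivered. Second, $h_{k,i}$ is constant throughout frame $k$ and changes only at the frame boundary via \eqref{eq.evolution_h}. Together these imply that, within a single frame, Greedy delivers packets to clients strictly in order of decreasing $h_{k,i}$: it clears the client with largest $h$, removes it from contention for the rest of the frame, then moves to the next largest, and so on until either the frame ends or every packet has been delivered. In particular, if $j$ deliveries occur in frame $k$, they are to the $j$ clients with the largest values of $h_{k,i}$, in that order.

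The invariant I would carry is the following. At the start of frame $k$ there is a pointer $c_k \in \{1,\dots,M\}$ such that, listing clients by decreasing $h_{k,i}$, one obtains the cyclic rotation $c_k, c_k+1, \dots, M, 1, \dots, c_k-1$, with any ties (clients sharing the same $h$) ordered consistently with this rotation. The base case is immediate: the relabeling in the statement makes $h_{1,1}\ge h_{1,2}\ge\cdots\ge h_{1,M}$, so the invariant holds at $k=1$ with $c_1=1$ (ties at frame $1$ absorbed into the choice of labeling).

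For the inductive step, suppose the invariant holds at frame $k$ and that Greedy achieves $j$ deliveries during that frame, $0 \le j \le M$. By the decreasing-$h$ service order, the delivered clients are exactly the first $j$ entries of the rotation, namely $c_k, c_k+1, \dots, c_k+j-1$, with indices taken cyclically in $\{1,\dots,M\}$. Applying \eqref{eq.evolution_h}, each of these $j$ clients resets to $h=1$, while each undelivered client increments its $h$ by one. The undelivered clients keep their mutual order, since a common $+1$ shift is order-preserving, and, crucially, their incremented values are at least $2$ and hence sit strictly above the freshly reset clients, which are now tied at $h=1$ at the bottom. Therefore the decreasing-$h$ list at frame $k+1$ reads $c_k+j, \dots, c_k-1$ followed by the tied block $c_k, \dots, c_k+j-1$, which is precisely the rotation with pointer $c_{k+1}=c_k+j$ reduced into $\{1,\dots,M\}$. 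This establishes the invariant at $k+1$ and shows that the deliveries in frame $k$ continue the sequence $1,2,\dots,M,1,2,\dots$ seamlessly from where the previous frame left off; concatenating over all frames yields the claimed Round Robin delivery order.

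The one delicate point, and the step I expect to require the most care, is the handling of ties. When several clients are delivered within the same frame they reset to $h=1$ simultaneously, so their relative service order in the subsequent round is governed by the tie-breaking rule rather than by the $h$ values. To keep the literal index sequence equal to $1,2,\dots,M$, I would fix a consistent tie-break (for instance, lowest index first), under which the tied block above is served in the order $c_k, c_k+1, \dots$ demanded by the rotation; since the statement permits ties to be broken arbitrarily, this is without loss of generality up to relabeling the simultaneously delivered clients. I would also explicitly check the edge case $j=0$, a frame with no successful delivery, where the pointer is unchanged and every $h$ merely shifts up by one, so the invariant is preserved trivially, as well as the companion observation that the client Greedy was retrying when a frame ended remains the unique largest-$h$ client at the next frame, so Greedy resumes with it, consistent with Remark~\ref{rem.Greedy}.
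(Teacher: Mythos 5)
Your proof is correct, and it takes a genuinely different route from the paper's. The paper first restricts to error-free channels ($p_i=1$), splits into the cases $M<T$ and $M\ge T$, and for the latter explicitly computes the reordered vector $\vec{h}_k$ frame by frame --- each frame applying a cyclic shift of $T$ elements --- until it reaches the steady-state profile \eqref{eq.vectorh_k}; it then lifts the conclusion to unreliable channels by observing, via Remark~\ref{rem.Greedy}, that transmission failures delay deliveries but cannot reorder them. Your rotation-pointer invariant replaces all of this with a single induction that treats reliable and unreliable channels uniformly, since the number $j$ of deliveries in a frame is left arbitrary; this is cleaner, avoids the case split on $M$ versus $T$, and makes the ``delivered clients drop to the tied bottom block, undelivered clients shift up in lockstep'' mechanism explicit. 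What the paper's more concrete computation buys is the closed-form steady-state vector \eqref{eq.vectorh_k}, which is exactly what Corollary~\ref{cor.Greedy} and the sum \eqref{eq.Greedy_steady} are read off from; your argument establishes the Round Robin delivery order but would need additional bookkeeping to recover that formula. One small caution on ties: your suggested ``lowest index first'' rule does not serve a wrapped tied block (e.g.\ $c_k=M-1$, $j=3$, block $M-1,M,1$) in rotation order; the condition you actually need --- and do state --- is that ties are broken consistently with the rotation (equivalently, in the order the tied clients were last delivered), which is the same convention the paper adopts when it ``chooses to select clients in the same order'' from frame $2$ onward.
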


The proof of Lemma~\ref{lem.Greedy} is in Appendix \ref{app.Lemma_Greedy}. Together, Remark~\ref{rem.Greedy} and Lemma~\ref{lem.Greedy}, provide a complete description of the behavior of Greedy. Consider a network with $\vec{h}_1$ reordered as in Lemma~\ref{lem.Greedy}, the Greedy policy schedules client $1$, repeatedly, until one packet is delivered, then it schedules client $2$, repeatedly, until one packet is delivered, and so on, following the Round Robin pattern until the end of the time-horizon. The Greedy policy only idles when all $M$ packets are delivered in the same frame. Figure \ref{fig.Greedy} illustrates a sequence of scheduling decisions of Greedy in a network with error-free channels. %When $\vec{h}_{1}$ is in descending order, it becomes evident that $G$ delivers packets in circular order. 

\begin{figure}[b]
\begin{center}
\includegraphics[height=2.1cm]{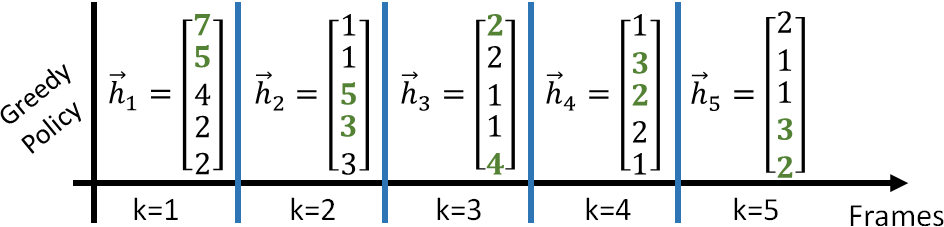}
\end{center}
\caption{Evolution of $\vec{h}_k$ when the Greedy policy is employed in a network with $M=5$ clients, $T=2$ slots per frame, error-free channels, $p_i=1, \forall i$, and $\vec{h}_{1}=[7~5~4~2~2]^T$. In each frame, the Greedy policy transmits packets of two clients. The elements of $\vec{h}_{k}$ associated with the clients that received a packet during frame $k$ are depicted in bold green. All elements in $\vec{h}_{k}$ change according to \eqref{eq.evolution_h}: green elements are updated to 1 while black elements are incremented by 1. In this figure, the Round Robin pattern is evident.}\label{fig.Greedy}
\end{figure}

\begin{corollary}[Steady-State of Greedy for error-free channels]\label{cor.Greedy}
%Under the assumptions of Lemma \ref{lem.Greedy}
Consider a network with error-free channels, $p_i=1, \forall i$. The Greedy policy drives this network to a steady-state in which the sum of the elements of $\vec{h}_k$ is constant. Let $m_1 \in \mathbb{N}$ and $m_2 \in \{0,1,\cdots,T-1\}$ be the quotient and remainder of the division of $M$ by $T$, namely $M=m_1T+m_2$. The steady-state is achieved at the beginning of frame $k=m_1+2$ and the sum of $\vec{h}_k$ is given by
\begin{equation}\label{eq.Greedy_steady}
\sum_{i=1}^M h_{k,i} = \frac{Tm_1\left(m_1+1\right)}{2}+m_2(m_1+1) \; .
\end{equation}
\end{corollary}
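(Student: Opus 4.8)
The plan is to track the \emph{multiset} of values $\{h_{k,i}\}_{i=1}^{M}$ rather than the individual coordinates, since the quantity of interest $S_k := \sum_{i=1}^{M} h_{k,i}$ depends only on this multiset. First I would record the per-frame dynamics of $S_k$. Because the channels are error-free and Greedy is work-conserving, whenever $M \ge T$ exactly $T$ clients are served in every frame; the degenerate case $M < T$ (i.e. $m_1 = 0$) is handled directly, as all packets are delivered in frame $1$ and $h_{k,i}=1$ for every $i$ from frame $2$ onward, matching \eqref{eq.Greedy_steady}. Using \eqref{eq.evolution_h}, the $M-T$ unserved clients each add $1$ while each served client $i \in \hat{s}_k$ changes by $1 - h_{k,i}$, so
\[
S_{k+1} = S_k + M - \sum_{i \in \hat{s}_k} h_{k,i}.
\]
Hence $S_k$ is constant exactly when the $T$ ages selected by Greedy sum to $M$, and the task reduces to exhibiting a steady-state multiset with this property.

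Next I would identify that multiset using the Round Robin structure of Lemma~\ref{lem.Greedy}. The central observation is that once a client has been delivered, its value of $h$ resets to $1$ and thereafter equals the number of frames elapsed since its most recent delivery, a quantity dictated entirely by the cyclic delivery cadence and \emph{independent of the initial vector $\vec{h}_1$}. Since Greedy delivers clients cyclically at $T$ per frame, past the transient there are exactly $T$ clients last served $1$ frame ago, $T$ served $2$ frames ago, and so on up to $m_1$ frames ago, with the remaining $m_2$ clients served $m_1+1$ frames ago; the counts are consistent because $T m_1 + m_2 = M$. I would then verify invariance by self-consistency: Greedy serves the $T$ largest ages, namely the $m_2$ clients at age $m_1+1$ together with $T-m_2$ clients at age $m_1$, whose ages sum to $m_2(m_1+1) + (T-m_2)m_1 = M$, so $S_{k+1}=S_k$, and after the resets and increments the identical multiset reappears. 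Summing the steady-state ages gives
\[
\sum_{i=1}^{M} h_{k,i} = T\sum_{j=1}^{m_1} j + m_2(m_1+1) = \frac{T m_1(m_1+1)}{2} + m_2(m_1+1),
\]
which is \eqref{eq.Greedy_steady}.

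To pin down the onset, note that by Lemma~\ref{lem.Greedy} the clients are delivered in the order $(1,2,\dots,M,1,2,\dots)$ at $T$ per frame, so the last client in this order is first served in frame $\lceil M/T\rceil$; only after every client has been served once are the initial ages fully flushed out and every client's age governed purely by the Round Robin cadence. The steady-state multiset is therefore in force from the start of frame $\lceil M/T\rceil + 1$, and since $\lceil M/T\rceil + 1 \le m_1 + 2$ this yields the stated onset at $k = m_1+2$ (tight precisely when $m_2 \ge 1$).

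I expect the principal difficulty to be the rigorous justification that, after each client has been served once, the age profile is \emph{exactly} the multiset $\{1^{(T)}, 2^{(T)}, \dots, m_1^{(T)}, (m_1+1)^{(m_2)}\}$ regardless of $\vec{h}_1$. Establishing this requires arguing from Lemma~\ref{lem.Greedy} that the inter-delivery gap of every client is either $m_1$ or $m_1+1$ frames and that exactly $T$ clients occupy each age level below $m_1+1$, a bookkeeping argument over the cyclic delivery order that must correctly handle the index wrap-around arising when $m_2 \ne 0$.
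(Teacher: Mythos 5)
Your proof is correct and follows essentially the same route as the paper: the paper derives the corollary directly from the proof of Lemma~\ref{lem.Greedy}, whose frame-by-frame construction produces exactly the reordered steady-state vector \eqref{eq.vectorh_k}, i.e.\ the multiset $\{1^{(T)},\dots,m_1^{(T)},(m_1+1)^{(m_2)}\}$ that you identify, verify as a fixed point, and sum. Your added $S_{k+1}=S_k+M-\sum_{i\in\hat{s}_k}h_{k,i}$ recurrence and the self-consistency check are a slightly more self-contained packaging of the same argument, and your observation that the onset bound $k=m_1+2$ is tight only when $m_2\geq 1$ is a correct minor refinement.
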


%The proof of Corollary \ref{cor.Greedy} is in 
Corollary~\ref{cor.Greedy} follows directly from the proof of Lemma \ref{lem.Greedy} in Appendix~\ref{app.Lemma_Greedy}. The sum in \eqref{eq.Greedy_steady} comes from the expression of $\vec{h}_{k}$ in \eqref{eq.vectorh_k}. Notice that \eqref{eq.Greedy_steady} is independent of the initial $\vec{h}_1$. Figure~\ref{fig.Greedy} represents a network with $M=5$, $T=2$, $m_1=2$ and $m_2=1$. Thus, according to Corollary~\ref{cor.Greedy}, the steady-state is achieved in frame $k=4$ and the sum of the elements of $\vec{h}_k$ is $9$ for $k \geq 4$. Those values can be easily verified in Fig.~\ref{fig.Greedy}. 

In Theorem~\ref{theo.Greedy}, we establish that Greedy is AoI-optimal when the underlying network is \emph{symmetric}, namely all clients have the same channel reliability $p_i=p \in (0,1]$ and weight $\alpha_i=\alpha \geq 0$. Prior to the main result, we establish in Lemma~\ref{prop.Greedy} that Greedy is AoI-optimal for a symmetric network with error-free channels.

\begin{lemma}[Optimality of Greedy for error-free channels]\label{prop.Greedy}
Consider a symmetric network with error-free channels $p_i=1$ and weights $\alpha_i=\alpha>0, \forall i$. Among the class of admissible policies $\Pi$, the Greedy policy attains the minimum sum AoI \eqref{eq.EWSAoI}, namely
\begin{equation}\label{eq.Objective_Symmetric}
J_K^G \leq J_K^\pi, \forall \pi \in \Pi \; . %\quad \mbox{where} \quad J_K^{\pi}=\frac{\alpha}{KM}\sum_{k=1}^{K}\sum_{i=1}^{M} h_{k,i} \; .
\end{equation}
\end{lemma}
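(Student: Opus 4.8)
The plan is to exploit that with error-free channels ($p_i=1$) the dynamics in \eqref{eq.evolution_h} are deterministic and that with $\alpha_i=\alpha$ the objective collapses to a single scalar process. Writing $S_k^\pi=\sum_{i=1}^M h_{k,i}^\pi$, we have $J_K^\pi=\frac{\alpha}{KM}\sum_{k=1}^K S_k^\pi$, so it suffices to show that Greedy minimizes $\sum_{k=1}^K S_k^\pi$. From \eqref{eq.evolution_h}, every served client resets to $1$ and every other client increments, giving the one-step recursion
\begin{equation*}
S_{k+1}^\pi = S_k^\pi + M - \sum_{i\in\hat{s}_k}h_{k,i}^\pi .
\end{equation*}
Thus in each frame one wants to \emph{harvest} as much age as possible, which already points to Greedy (serve the $T$ largest ages); however, the per-frame gains are coupled through the state, so a myopic one-step bound does not by itself yield global optimality, and I would instead argue by an exchange argument. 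Before that I would record two reductions: idling is never beneficial, since filling an idle slot with any available client merely inserts an extra reset and hence weakly lowers every future $S_k^\pi$, so work-conserving policies (serving $\min(T,M)$ distinct clients per frame) are without loss of optimality; and the case $T\ge M$ is trivial, since then every client is served in every frame and $h_{k,i}=1$ for all $k\ge 2$. Henceforth assume $T<M$.

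For the exchange step, let $\pi^\ast$ be an optimal work-conserving policy that agrees with Greedy on the longest possible initial segment of frames, and suppose it first deviates at frame $k$; since the two agree through frame $k-1$, the state $\vec{h}_k$ is common to both. Because Greedy serves the $T$ highest-age clients, there exist a client $b$ served by Greedy but not by $\pi^\ast$ and a client $a$ served by $\pi^\ast$ but not by Greedy, with $\delta := h_{k,b}-h_{k,a}\ge 0$. I would then define a modified policy $\pi'$ that serves $b$ in place of $a$ at frame $k$ and, from frame $k{+}1$ onward, mirrors every decision of $\pi^\ast$ with the labels $a$ and $b$ interchanged. This is a feasible non-anticipative policy: the dynamics are deterministic, packets regenerate each frame so the swapped client is always available, and each frame still serves $T$ distinct clients.

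Finally I would compare the two costs under this coupling. All clients other than $a$ and $b$ receive identical decisions and ages under $\pi^\ast$ and $\pi'$, so they cancel. By construction $\pi'$ resets $b$ exactly when $\pi^\ast$ resets $a$, hence $h^{\pi'}_{\cdot,b}\equiv h^{\pi^\ast}_{\cdot,a}$; likewise $\pi'$ resets $a$ exactly when $\pi^\ast$ resets $b$, so $h^{\pi^\ast}_{\cdot,b}$ and $h^{\pi'}_{\cdot,a}$ follow the same reset pattern but start (at frame $k{+}1$) offset by $\delta$, an offset that persists until the first frame after $k$ in which $\pi^\ast$ serves $b$ and vanishes thereafter. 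Summing, the per-frame difference $\sum_i\bigl(h^{\pi^\ast}_{k',i}-h^{\pi'}_{k',i}\bigr)$ equals $\delta$ on each frame before that re-service of $b$ and $0$ otherwise, so $\sum_{k'}S_{k'}^{\pi^\ast}-\sum_{k'}S_{k'}^{\pi'}=\delta\cdot(\text{number of such frames})\ge 0$, i.e.\ $J_K^{\pi'}\le J_K^{\pi^\ast}$. Since $\pi'$ now agrees with Greedy on a strictly longer prefix, repeating the argument finitely many times (there are finitely many frames and clients) transforms $\pi^\ast$ into Greedy without ever increasing the cost, which establishes $J_K^G\le J_K^\pi$ for all $\pi\in\Pi$. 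The step I expect to be the main obstacle is precisely the bookkeeping in this comparison: one must verify carefully that interchanging $a$ and $b$ in all subsequent decisions leaves every other client's age trajectory untouched and confines the age discrepancy to a single client over a well-defined window, which is exactly what makes the sign of the cost difference transparent.
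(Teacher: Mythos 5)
Your proof is correct, but it follows a genuinely different route from the paper's. The paper first derives, by induction, an explicit closed-form for $\vec{h}_{k}$ in terms of the initial ages and the unions $\bigcup_{j=a}^{k}\hat{s}_j$ of served sets, and then invokes the Round Robin structure of Greedy (Lemma~\ref{lem.Greedy}) to show that Greedy simultaneously maximizes the cardinality of every such union, hence minimizes $\sum_i h_{k,i}$ in \emph{every} frame. You instead use a work-conservation reduction plus a deterministic interchange argument: swap the first off-Greedy service decision, track the relabeled trajectories, and observe that the per-frame age discrepancy is confined to a single client and equals $\delta=h_{k,b}-h_{k,a}\ge 0$ over a finite window. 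Your coupling bookkeeping is sound (the identification $h^{\pi'}_{\cdot,b}\equiv h^{\pi^\ast}_{\cdot,a}$ and the persistent offset $\delta$ on the $b$/$a$ pair until the next service of $b$ both check out, and the tie-breaking case $\delta=0$ is handled), and in fact your argument implicitly yields the same frame-by-frame dominance $S_k^G\le S_k^\pi$ that the paper extracts and later reuses in the stochastic-coupling proof of Theorem~\ref{theo.Greedy}, since each swap weakly decreases $S_{k'}$ for every $k'$ simultaneously. Two remarks on trade-offs: the paper's route produces the steady-state expression of Corollary~\ref{cor.Greedy} and the Round Robin lemma as reusable byproducts for the later performance analysis, whereas your exchange argument is more self-contained and elementary. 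The one point you should tighten is the termination of the iteration: after a single swap, $\pi'$ need not agree with Greedy on a strictly longer prefix of frames, since the served set at frame $k$ may still differ from Greedy's in other positions; you need a finer progress measure (e.g., lexicographic in the length of the agreeing prefix and then in $|S^{\pi}_k\,\triangle\,S^{G}_k|$ at the first disagreement, which each swap reduces by two) to conclude that finitely many swaps transform $\pi^\ast$ into Greedy. This is routine bookkeeping, not a gap in the idea.
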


The proof of Lemma \ref{prop.Greedy} is in Appendix \ref{app.Prop_Greedy}. Intuitively, Greedy minimizes $\sum_{i=1}^M h_{k,i}$ by reducing the highest elements of $\vec{h}_{k}$ to \emph{unity} at every frame. Together, Lemma~\ref{prop.Greedy} and Corollary~\ref{cor.Greedy} show that, when channels are error-free, Greedy drives the network to a steady-state \eqref{eq.Greedy_steady} that is AoI-optimal. Next, we use the result in Lemma \ref{prop.Greedy} to show that the Greedy policy is optimal for any symmetric network.

\begin{theorem}[Optimality of Greedy]\label{theo.Greedy}
Consider a symmetric network with channel reliabilities $p_i=p \in (0,1]$ and weights $\alpha_i=\alpha > 0, \forall i$. Among the class of admissible policies $\Pi$, the Greedy policy attains the minimum expected sum AoI \eqref{eq.EWSAoI}, namely $G=\argmin_{\pi \in \Pi}\mathbb{E}\left[J_K^{\pi}\right]$.
\end{theorem}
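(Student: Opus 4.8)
The plan is to establish \emph{sample-path} optimality of Greedy through a coupling that exploits the symmetry of the network, thereby reducing the stochastic problem to a deterministic subset-selection problem that I then solve by a majorization (exchange) argument. Since in the symmetric case $\mathbb{E}[J_K^\pi]=\frac{\alpha}{KM}\,\mathbb{E}[\sum_{k}\sum_i h_{k,i}^\pi]$, it suffices to show that Greedy minimizes $\mathbb{E}[\sum_{k=1}^K\sum_{i=1}^M h_{k,i}^\pi]$. First I would couple the channel randomness across all admissible policies: because $p_i=p$ for every client, the success of a transmission in slot $(k,n)$ does not depend on \emph{which} client is selected, so I attach to each slot a single Bernoulli$(p)$ variable $X_{k,n}$ and declare every policy's transmission in that slot to succeed iff $X_{k,n}=1$. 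This coupling preserves the marginal law of each policy while making the channel outcomes common across all of them.

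Next I would show that, under this coupling, the per-frame ``delivery budget'' is policy-independent. Let $S_k=\sum_{n=1}^T X_{k,n}$ be the number of successful slots in frame $k$. Restricting attention to work-conserving policies is without loss of optimality, since idling while an undelivered packet remains only postpones deliveries and hence cannot decrease any $h_{k,i}$. Under the coupling, every work-conserving policy transmits in each slot until all $M$ packets of the frame are delivered, and therefore delivers exactly $D_k:=\min(S_k,M)$ packets in frame $k$, a quantity that depends only on the sample path and not on the policy. Moreover, by Remark~\ref{rem.Greedy} and Lemma~\ref{lem.Greedy}, Greedy retransmits to a single client until success and serves clients in decreasing order of $h$, so after $D_k$ deliveries it has reset precisely the $D_k$ clients with the largest values of $h_{k,i}$.

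It then remains to solve the deterministic problem: given the budgets $\{D_k\}$, choose in each frame the set $\hat{s}_k$ of $D_k$ clients to deliver so as to minimize $\sum_{k}\sum_i h_{k,i}$. From the update \eqref{eq.evolution_h} one obtains the recursion $\sum_i h_{k+1,i}=\sum_i h_{k,i}+M-\sum_{i\in\hat{s}_k} h_{k,i}$, so delivering to the highest-$h$ clients is myopically optimal; the substance is to upgrade this to global optimality. I would do this by an induction showing that the decreasingly-sorted age vector of Greedy is weakly majorized by that of any work-conserving $\pi$ at every frame, i.e. $\sum_{j=1}^{\ell} h^G_{k,(j)}\le\sum_{j=1}^{\ell} h^\pi_{k,(j)}$ for all $\ell$; taking $\ell=M$ yields $\sum_i h_{k,i}^G\le\sum_i h_{k,i}^\pi$ for every $k$, and summing over $k$ gives the claim. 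The inductive step rests on two auxiliary facts: (i) for a fixed age vector, resetting the top $D_k$ entries minimizes \emph{every} prefix sum of the resulting sorted vector among all size-$D_k$ resets, and (ii) the Greedy update map is monotone with respect to weak majorization. Chaining them, the Greedy update of the (smaller) Greedy vector is weakly majorized by the Greedy update of $\pi$'s vector, which is in turn weakly majorized by $\pi$'s arbitrary update.

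The main obstacle is exactly this last step -- promoting myopic per-frame optimality to horizon optimality -- because distinct delivery choices produce distinct future age vectors, so a naive induction on the scalar sum $\sum_i h_{k,i}$ breaks down: Greedy's per-frame saving $\sum_{i\in\hat{s}_k} h_{k,i}$ is computed against its \emph{own} vector, not against $\pi$'s. Carrying the stronger weak-majorization invariant through the reset-and-increment dynamics, and verifying the monotonicity in fact (ii) in particular, is the delicate part; it is the natural generalization to the sample-path-varying budgets $D_k$ of the constant-budget argument behind Lemma~\ref{prop.Greedy}, where $D_k\equiv\min(T,M)$. Finally, since Greedy attains the minimum of $\sum_{k}\sum_i h_{k,i}$ on every sample path of the coupling, it does so in expectation as well, which establishes $G=\argmin_{\pi\in\Pi}\mathbb{E}[J_K^{\pi}]$.
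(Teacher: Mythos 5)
Your overall architecture is essentially the paper's: couple the channel randomness so that every policy sees a common per-slot success indicator (the paper does this by assigning the channel state of the client chosen by $\pi$ to the client chosen by Greedy, which in the symmetric case is equivalent to your single Bernoulli variable $X_{k,n}$), discard non-work-conserving policies, reduce to a deterministic per-sample-path comparison with policy-independent delivery budgets, and then invoke a deterministic optimality lemma. The paper implements that last step by deleting the OFF slots from the coupled sample path and observing that the proof of Lemma~\ref{prop.Greedy} in Appendix~\ref{app.Prop_Greedy} never uses the fact that all frames have the same number of slots, so it applies verbatim to the resulting variable-length frames --- which is exactly your variable budget $D_k=\min(S_k,M)$.

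The gap is in how you propose to prove the deterministic core. Your fact (i) is correct, but fact (ii) --- monotonicity of the Greedy reset-and-increment map with respect to weak majorization --- is false. Take $M=2$, $D=1$, $x=(2,2)$ and $y=(3,1)$: then $x\prec_w y$ (prefix sums $2\le 3$ and $4\le 4$), but the Greedy update sends $x$ to $(3,1)$ (prefix sums $3,4$) and $y$ to $(2,1)$ (prefix sums $2,3$), so the updated vectors violate the invariant in every coordinate, including the total sum $\ell=M$. The reason is structural: by \eqref{eq.evolution_h} the one-frame decrease of $\sum_i h_{k,i}$ equals $\sum_{i\in\hat s_k}h_{k,i}-M$, i.e.\ the sum of the top $D$ entries, and the hypothesis $x\prec_w y$ bounds that quantity for $x$ from above --- the wrong direction. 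The offending pairs happen not to be reachable from a common initial condition under a common budget sequence, but your induction carries only the weak-majorization relation and therefore cannot exclude them; as written, the chain ``Greedy update of $h^G$ $\prec_w$ Greedy update of $h^\pi$ $\prec_w$ $\pi$'s update of $h^\pi$'' breaks at its first link. You would need either a strictly stronger invariant or a different argument. The paper's Appendix~\ref{app.Prop_Greedy} avoids vector-to-vector comparisons of updated states altogether: it writes $\sum_i h_{k,i}$ in terms of the cardinalities of the unions $\bigcup_{j=a}^{k-1}\hat s_j$ and uses the Round-Robin structure from Lemma~\ref{lem.Greedy} and Remark~\ref{rem.Greedy} to show Greedy maximizes every such cardinality simultaneously; that argument does carry over to your sample-path-dependent budgets.
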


To show that the Greedy policy minimizes the AoI of any symmetric network, we generalize Lemma \ref{prop.Greedy} using a \emph{stochastic dominance} argument \cite{coupling} that compares the evolution of $\vec{h}_k$ when Greedy is employed to that when an arbitrary policy $\pi$ is employed. The proof of Theorem \ref{theo.Greedy} is in Appendix~\ref{app.Theo_Greedy} of the supplementary material.

Selecting the client with an undelivered packet and highest value of $h_{k,i}$ in every slot is AoI-optimal for every symmetric network. For general networks, with clients possibly having different channel reliabilities $p_i$ and weights $\alpha_i$, scheduling decisions based exclusively on $\vec{h}_k$ may not be AoI-optimal.
%Consider a network with $M=2$ clients, $T=1$ slot per frame and channel reliabilities $p_1=1$ and $p_2=10^{-5}$. On average, the Greedy policy schedules client $2$ uninterruptedly for $10^5$ slots.
%\footnote{When the Greedy policy is employed, the expected number of transmissions for a successful delivery to client $i$ is $1/p_i$.}
%During this time, client $1$ is not served and the value of $h_{k,1}$ increases at every slot. Clearly, this policy is not optimal, since a single transmission to client $1$ would lead to $h_{k,1}=1$, improving the objective function \eqref{eq.Objective} greatly.
In the next section, we develop three low-complexity scheduling policies and derive performance guarantees for every policy in the context of general networks. %in terms of $\mathbb{E}\left[J_K^{\pi}\right]$.
%\newpage
\section{AGE OF INFORMATION GUARANTEES}\label{sec.General}
One possible approach for finding a policy that minimizes the EWSAoI is to optimize the objective function in \eqref{eq.Objective} using Dynamic Programming \cite{dynProg}. A negative aspect of this approach is that evaluating the optimal scheduling decision for each state of the network can be computationally demanding, especially for networks with a large number of clients\footnote{Vector $\vec{h}_{k}=[h_{k,1},\cdots,h_{k,M}]^T$ is part of the state space of the network. Since each element $h_{k,i}$ can take at least $k$ different values, $h_{k,i}\in\{1,2,\cdots,k\}$, the set of possible values of $\vec{h}_{k}$ has cardinality at least $k^M$, implying that the state space grows exponentially with the number of clients $M$.}. To overcome this problem, known as the curse of dimensionality, and gain insight into the minimization of the Age of Information, we consider four low-complexity scheduling policies, namely Greedy, Randomized, Max-Weight and Whittle's Index policies, and derive performance guarantees for each of them.

For a given network setup $(M,K,T,p_i,\alpha_i)$, the performance of an arbitrary admissible policy $\pi \in \Pi$ is given by $\mathbb{E}\left[J_K^{\pi}\right]$ from \eqref{eq.Objective} and the optimal performance is $\mathbb{E}\left[J^*\right]=\min_{\eta \in \Pi}\mathbb{E}\left[J_K^{\eta}\right]$. Ideally, when expressions for $\mathbb{E}\left[J_K^{\pi}\right]$ and $\mathbb{E}\left[J^*\right]$ are available, we define the optimality ratio $\psi^\pi:= \mathbb{E}\left[J_K^{\pi}\right] / \mathbb{E}\left[J^*\right]$ and say that policy $\pi$ is $\psi^\pi$-optimal. Naturally, the closer $\psi^\pi$ is to one, the better is the performance of policy $\pi$. Alternatively, when expressions for $\mathbb{E}\left[J_K^{\pi}\right]$ and $\mathbb{E}\left[J^*\right]$ are not available, we define the ratio 
\begin{equation}
\rho^\pi:=\frac{U_B^\pi}{L_B} \; ,
\end{equation}
where $L_B$ is a lower bound to the AoI-optimal performance and $U_B^\pi$ is an upper bound to the performance of policy $\pi$. It follows from the inequality $L_B \leq \mathbb{E}\left[J^*\right] \leq \mathbb{E}\left[J_K^{\pi}\right] \leq  U_B^\pi $ that $\psi^\pi\leq\rho^\pi$ and thus we can say that policy $\pi$ is $\rho^\pi$-optimal. 

Next, we obtain a lower bound $L_B$ that is used for deriving performance guarantees $\rho^\pi$ for the four low-complexity scheduling policies of interest. Henceforth in this section, \emph{we consider the infinite-horizon problem} where $K \rightarrow \infty$. The focus on the long-term behavior of the system allows us to derive simpler and more insightful performance guarantees.

%Next, we obtain the expression for a lower bound $L_B$ as a function of the network setup. Then, we analyze each of the four scheduling policies and derive closed-form expressions for the corresponding upper bounds $U_B^\pi$ and performance guarantees $\rho^\pi$. Henceforth in this section, we consider the infinite-horizon problem where $K \rightarrow \infty$. The focus on the long-term behavior of the system allows us to derive simpler and more insightful performance guarantees.

\subsection{Universal Lower Bound}\label{sec.LowerBound}
In this section, we find a lower bound to the solution of the objective function in \eqref{eq.Objective}.
\begin{theorem}[Lower Bound]\label{theo.LowerBound}
For a given network setup, we have $L_B \leq \lim_{K \rightarrow \infty}\mathbb{E}\left[J_K^\pi\right], \; \forall \pi \in \Pi$, where
\begin{equation}\label{eq.LowerBound}
L_B = \frac{1}{2MT} \left( \sum_{i=1}^{M} \sqrt{\frac{\alpha_i}{p_i}} \right)^2 + \frac{1}{2M}\sum_{i=1}^{M}\alpha_i \; . 
\end{equation}
\end{theorem}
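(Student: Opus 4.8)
The plan is to decompose each client's cumulative age into its inter-delivery intervals, reduce the objective to a constrained optimization over the long-run per-client delivery rates, and close the argument with Cauchy--Schwarz. First I would express the area under client $i$'s age curve in terms of its inter-delivery intervals $I_i[m]$: after a delivery $h_{k,i}$ resets to $1$ and increases by one each frame, so over an interval of length $I_i[m]$ it contributes $1+2+\cdots+I_i[m]=\tfrac12\bigl(I_i[m]^2+I_i[m]\bigr)$. Summing over the $D_i(K)$ completed intervals and discarding the nonnegative tail contributed by the $R_i$ remaining frames yields
\begin{equation}
\sum_{k=1}^{K} h_{k,i}\;\ge\;\frac12\sum_{m=1}^{D_i(K)}\bigl(I_i[m]^2+I_i[m]\bigr).
\end{equation}
Writing $\sum_m I_i[m]^2=D_i(K)\bigl(\bar{\mathbb{V}}[I_i]+\bar{\mathbb{M}}[I_i]^2\bigr)$ with $\bar{\mathbb{M}}[I_i]=(K-R_i)/D_i(K)$, I would drop the nonnegative sample variance $\bar{\mathbb{V}}[I_i]$ --- which encodes the fact that evenly spaced deliveries are age-optimal for a fixed delivery count --- to obtain a bound depending only on the number of deliveries,
\begin{equation}
\frac1K\sum_{k=1}^{K} h_{k,i}\;\ge\;\frac12\left(\frac{(K-R_i)^2}{K\,D_i(K)}+\frac{K-R_i}{K}\right).
\end{equation}

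Next I would pass to the infinite horizon and turn the delivery count into a resource constraint. Since each transmission to client $i$ is an independent Bernoulli$(p_i)$ trial and the number of attempts $A_i(K)$ is a stopping time, Wald's identity gives $\mathbb{E}[D_i(K)]=p_i\,\mathbb{E}[A_i(K)]$; and because the BS transmits at most once per slot, $\sum_{i=1}^{M}A_i(K)\le KT$ holds deterministically, whence $\sum_{i=1}^{M}\mathbb{E}[D_i(K)]/p_i\le KT$. Defining the long-run delivery rate $d_i:=\lim_{K\to\infty}\mathbb{E}[D_i(K)]/K$ (with $R_i/K\to 0$, and the degenerate case of finitely many deliveries to some client making the bound trivial since then the objective diverges), the two displays give
\begin{equation}
\lim_{K\to\infty}\mathbb{E}\!\left[J_K^\pi\right]\;\ge\;\frac{1}{2M}\sum_{i=1}^{M}\alpha_i\left(\frac{1}{d_i}+1\right),\qquad\sum_{i=1}^{M}\frac{d_i}{p_i}\le T.
\end{equation}

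Finally I would minimize the right-hand side over all feasible rate vectors. By Cauchy--Schwarz,
\begin{equation}
\left(\sum_{i=1}^{M}\sqrt{\frac{\alpha_i}{p_i}}\right)^{\!2}=\left(\sum_{i=1}^{M}\sqrt{\frac{\alpha_i}{d_i}}\,\sqrt{\frac{d_i}{p_i}}\right)^{\!2}\le\left(\sum_{i=1}^{M}\frac{\alpha_i}{d_i}\right)\!\left(\sum_{i=1}^{M}\frac{d_i}{p_i}\right)\le T\sum_{i=1}^{M}\frac{\alpha_i}{d_i},
\end{equation}
so $\sum_i \alpha_i/d_i\ge T^{-1}\bigl(\sum_i\sqrt{\alpha_i/p_i}\bigr)^2$; substituting this into the preceding display reproduces exactly $L_B$ in \eqref{eq.LowerBound}.

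The main obstacle I anticipate is the probabilistic bookkeeping in the middle step rather than the algebra: I must justify exchanging limit and expectation for the ratio $(K-R_i)^2/(K\,D_i(K))$, which couples the random remainder $R_i$ with the random delivery count $D_i(K)$. A clean route is to invoke Jensen's inequality on the convex map $x\mapsto 1/x$, giving $\mathbb{E}[1/D_i(K)]\ge 1/\mathbb{E}[D_i(K)]$ --- the correct direction for a lower bound --- after bounding the transient and the remainder, and to treat clients that are eventually starved ($d_i=0$, objective $\to\infty$) separately so that the bound holds vacuously for them.
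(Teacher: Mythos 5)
Your proposal is correct and rests on the same skeleton as the paper's proof: decompose $\sum_k h_{k,i}$ into inter-delivery intervals so that each interval of length $I_i[m]$ contributes $\tfrac12(I_i[m]^2+I_i[m])$, discard the nonnegative sample variance of $\{I_i[m]\}$, convert the delivery counts into the slot budget $\sum_i A_i(K)\le KT$ together with the success probabilities $p_i$, and finish with the identical Cauchy--Schwarz step $\bigl(\sum_i\sqrt{\alpha_i/p_i}\bigr)^2\le\bigl(\sum_i\alpha_i/d_i\bigr)\bigl(\sum_i d_i/p_i\bigr)$. Where you genuinely diverge is in the measure-theoretic bookkeeping: the paper works pathwise, invoking the strong law for $D_i(K)/A_i(K)\to p_i$ and deferring the passage to expectations to a single application of Fatou's lemma at the very end, whereas you take expectations early, using Wald's identity $\mathbb{E}[D_i(K)]=p_i\,\mathbb{E}[A_i(K)]$ (valid, since $A_i(K)$ is a stopping time for the i.i.d.\ transmission outcomes and is bounded by $KT$) and convexity to push the expectation inside. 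Your route buys a cleaner statement of the resource constraint as a deterministic bound on expected rates, at the cost of having to control the correlated ratio $\mathbb{E}\bigl[(K-R_i)^2/(K\,D_i(K))\bigr]$; the one imprecision in your sketch is that plain Jensen on $x\mapsto 1/x$ applied to $D_i(K)$ alone does not resolve the coupling with $R_i$ --- you need the joint form $\mathbb{E}[X^2/Y]\ge(\mathbb{E}[X])^2/\mathbb{E}[Y]$ (Cauchy--Schwarz with $X=(K-R_i)/\sqrt{K}$, $Y=D_i(K)$), which does go in the right direction and closes the gap. With that substitution, and the same exclusion of starving policies that the paper makes, your argument is complete.
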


\begin{proof}
First, we use a sample path argument to characterize the evolution of $\vec{h}_k$ over time. Then, we derive an expression for the objective function of the infinite-horizon problem, namely $\lim_{K \rightarrow \infty}J_K^\pi$, and manipulate this expression to obtain $L_B$ in \eqref{eq.LowerBound}. Fatou's lemma is employed to establish the result in Theorem~\ref{theo.LowerBound}.

Consider a sample path $\omega \in \Omega$ associated with a scheduling policy $\pi \in \Pi$ and a finite time-horizon $K$. For this sample path, let $D_i(K)$ be the total number of packets delivered to client $i$ up to and including frame $K$, let $I_i[m]$ be the number of frames between the $(m-1)$th and $m$th deliveries to client $i$, i.e. the inter-delivery times of client $i$, and let $R_i$ be the number of frames remaining after the last packet delivery to the same client. Then, the time-horizon can be written as follows
\begin{equation}\label{eq.horizon_division}
K=\sum_{m=1}^{D_i(K)} I_i[m] + R_i, \forall i \in \{1,2,\cdots,M\} \; .
\end{equation}

The evolution of $h_{k,i}$ is well-defined in each of the time intervals $I_i[m]$ and $R_i$. During the frames associated with the interval $I_i[m]$, the parameter $h_{k,i}$ evolves as $1,2,\cdots,I_i[m]$. During the frames associated with the interval $R_i$, the value of $h_{k,i}$ evolves as $1,2,\cdots,R_i$. Hence, the objective function in \eqref{eq.Objective} can be rewritten as
\begin{align}
J_K^{\pi}=&\frac{1}{KM}\sum_{k=1}^{K}\sum_{i=1}^{M}\alpha_i \; h_{k,i}=\frac{1}{M}\sum_{i=1}^{M}\frac{\alpha_i}{K}\left[\sum_{k=1}^{K}h_{k,i}\right]\\
=&\frac{1}{M}\sum_{i=1}^{M}\frac{\alpha_i}{K}\left[\sum_{m=1}^{D_i(K)} \frac{(I_i[m]+1)I_i[m]}{2} + \frac{(R_i+1)R_i}{2} \right]\nonumber\\
\overset{(a)}{=}&\frac{1}{2M}\sum_{i=1}^{M}\frac{\alpha_i}{K}\left[\sum_{m=1}^{D_i(K)} I_i^2[m]+R_i^2+K \right]\nonumber\\
=&\frac{1}{2M}\sum_{i=1}^{M}\alpha_i\left[\frac{D_i(K)}{K} \left(\frac{1}{D_i(K)}\sum_{m=1}^{D_i(K)}I_i^2[m]\right)+\frac{R_i^2}{K}+1 \right] \; , \nonumber
\end{align}
where (a) uses \eqref{eq.horizon_division} to substitute the sum of the linear terms $I_i[m]$ and $R_i$ by $K$.

Now, define the operator $\bar{\mathbb{M}}[.]$ that calculates the sample mean of a set of values. Using this operator, let the sample mean of $I_i[m]$ and $I_i^2[m]$ be
\begin{align}
\bar{\mathbb{M}}[I_i]&=\frac{1}{D_i(K)}\sum_{m=1}^{D_i(K)} I_i[m] \label{eq.sample_mean} \; ;\\
\bar{\mathbb{M}}[I_i^2]&=\frac{1}{D_i(K)}\sum_{m=1}^{D_i(K)} I_i^2[m] \; . \label{eq.sample_mean_2}
\end{align}
Combining \eqref{eq.horizon_division} and \eqref{eq.sample_mean} yields
\begin{equation}\label{eq.relation_throughput}
\frac{K}{D_i(K)}=\frac{\sum_{j=1}^{D_i(K)} I_i[j] + R_i}{D_i(K)}=\bar{\mathbb{M}}[I_i] + \frac{R_i}{D_i(K)} \; .
\end{equation}
Substituting \eqref{eq.sample_mean_2} and \eqref{eq.relation_throughput} into the objective function gives
\begin{equation}\label{eq.AoI_expression}
J_K^\pi=\frac{1}{2M}\sum_{i=1}^{M}\alpha_i\left[\left[\bar{\mathbb{M}}[I_i] + \frac{R_i}{D_i(K)}\right]^{-1}\bar{\mathbb{M}}[I_i^2]+\frac{R_i^2}{K}+1 \right] \, ,
\end{equation}
with probability one.

To simplify \eqref{eq.AoI_expression}, consider the infinite-horizon problem with $K \rightarrow \infty$ and assume that the admissible class $\Pi$ \emph{does not contain policies that starve clients}. 
\begin{definition}
A policy $\pi$ starves client $i$ if, with a positive probability, it stops transmitting packets to that client after frame $K'<\infty$. 
\end{definition}
When $\pi$ starves client $i$, the expected number of frames after the last packet delivery is $\mathbb{E}\left[R_i\right] \rightarrow \infty$ and the objective function $\mathbb{E}\left[J_K^{\pi}\right] \rightarrow \infty$. Therefore, policies that starve clients are excluded from the class $\Pi$ without loss of optimality.

%Since policies in $\Pi$ transmit packets to every client continuously and each packet transmission has a positive probability $p_i$ of being delivered, it follows that new packets are delivered to each client in finite time with probability one, implying that the sample path $\omega \in \Omega$ has finite values of $I_i[m]$ and $R_i$ with probability one. In the limit $K \rightarrow \infty$, it follows that $R_i^2/K \rightarrow 0$, $D_i(K) \rightarrow \infty$ and $R_i/D_i(K) \rightarrow 0$. Applying those limits to $J_K^\pi$ in \eqref{eq.AoI_expression} gives the objective function of the infinite-horizon AoI problem

Since policies in $\Pi$ transmit packets to every client repeatedly and each packet transmission has a positive probability $p_i$ of being delivered, it follows that $I_i[m]$ and $R_i$ are finite with probability one. Thus, in the limit $K \rightarrow \infty$, we have $R_i^2/K \rightarrow 0$, $D_i(K) \rightarrow \infty$ and $R_i/D_i(K) \rightarrow 0$. Applying those limits to $J_K^\pi$ in \eqref{eq.AoI_expression} gives the \emph{objective function of the infinite-horizon AoI problem}
\begin{equation}\label{eq.Objective_infiniteHorizon}
\lim_{K \rightarrow \infty}J_K^{\pi} =\frac{1}{2M}\sum_{i=1}^{M}\alpha_i\left[\frac{\bar{\mathbb{M}}[I_i^2]}{\bar{\mathbb{M}}[I_i]}+1 \right] \quad \mbox{w.p.1} \; .
\end{equation}
This insightful expression depicts the relationship between AoI and the moments of the inter-delivery time $I_i[m]$. %In the next sections, the objective function in \eqref{eq.Objective_infiniteHorizon} is used to calculate the performance of the scheduling policies. %makes clear the relationship between the Age of Information and the moments of the inter-delivery time. 

Prior to deriving the expression of $L_B$ in \eqref{eq.LowerBound}, we introduce some useful quantities. Define the operator $\bar{\mathbb{V}}[.]$ that calculates the sample variance of a set of values. Let the sample variance of $I_i[m]$ be 
\begin{equation}\label{eq.sample_variance}
\bar{\mathbb{V}}[I_i]=\frac{1}{D_i(K)}\sum_{m=1}^{D_i(K)} \left( I_i[m]-\bar{\mathbb{M}}[I_i] \right)^2 \; .
\end{equation}
Notice that the sample variance is positive valued and $\bar{\mathbb{V}}[I_i]= \bar{\mathbb{M}}[I_i^2]-\left(\bar{\mathbb{M}}[I_i]\right)^2$. Let $A_i(K)$ be the total number of packets \emph{transmitted} to client $i$ up to and including frame $K$. Any policy $\pi$ can schedule at most one client per slot, hence
\begin{equation}\label{eq.relation_AI1}
\sum_{i=1}^{M}A_i(K) \leq KT \quad \mbox{w.p.1} \; .
\end{equation}
Moreover, since every transmission to client $i$ is delivered with the same probability $p_i$, independently of the outcome of previous transmissions, by the strong law of large numbers
\begin{equation}\label{eq.relation_AI2}
\lim_{K \rightarrow \infty}  \;\frac{D_i(K)}{A_i(K)}=p_i \quad \mbox{w.p.1} \; .
\end{equation}
%the first holds due to interference from interference constraints. Since the BS can schedule at most one client per slot, the relationship $\sum_{i=1}^{M}A_i(k) \leq kT, \forall k$ must hold almost surely. In addition, the following is also true
%From \eqref{eq.relation_throughput} that when $K \rightarrow \infty$, we have $\bar{\mathbb{M}}[I_i]=K/D_i(K)$. 

With the definitions of $\bar{\mathbb{V}}[I_i]$ and $A_i(K)$, we obtain $L_B$ by manipulating the objective function of the infinite-horizon AoI problem in \eqref{eq.Objective_infiniteHorizon} as follows
\begin{align}\label{eq.LowerBound_1}
%\lim_{K \rightarrow \infty}\frac{1}{KM}\sum_{i=1}^{M}\sum_{k=1}^{K}&\alpha_ih_{k,i}=\frac{1}{M}\sum_{i=1}^{M}\frac{\alpha_i}{2}\left[\frac{\bar{\mathbb{M}}[I_i^2]}{\bar{\mathbb{M}}[I_i]}+1 \right] \nonumber\\
\lim_{K \rightarrow \infty}J_K^{\pi} = & \frac{1}{2M}\sum_{i=1}^{M}\alpha_i\left[\frac{\bar{\mathbb{V}}[I_i]}{\bar{\mathbb{M}}[I_i]}+\bar{\mathbb{M}}[I_i]+1 \right] \nonumber\\
\overset{(a)}{\geq} & \frac{1}{2M}\sum_{i=1}^{M}\alpha_i \bar{\mathbb{M}}[I_i] + \frac{1}{2M}\sum_{i=1}^{M}\alpha_i \nonumber\\
%=&\frac{1}{2M}\sum_{i=1}^M \alpha_i \frac{K}{D_i(K)} + \frac{1}{2M}\sum_{i=1}^{M}\alpha_i \nonumber\\
\overset{(b)}{=}&\lim_{K \rightarrow \infty}\frac{1}{2MT} KT \sum_{i=1}^M \frac{\alpha_i}{D_i(K)} + \frac{1}{2M}\sum_{i=1}^{M}\alpha_i \nonumber\\
\overset{(c)}{\geq} &\lim_{K \rightarrow \infty}\frac{1}{2MT} \left(\sum_{j=1}^{M}A_j(K)\right) \left(\sum_{i=1}^M \frac{\alpha_i}{D_i(K)}\right)+ \frac{1}{2M}\sum_{i=1}^{M}\alpha_i \nonumber\\
\overset{(d)}{\geq} &\lim_{K \rightarrow \infty}\frac{1}{2MT} \left( \sum_{i=1}^{M} \sqrt{\frac{\alpha_i A_i(K)}{D_i(K)}} \right)^2 + \frac{1}{2M}\sum_{i=1}^{M}\alpha_i \nonumber\\ 
\overset{(e)}{=}&\frac{1}{2MT} \left( \sum_{i=1}^{M} \sqrt{\frac{\alpha_i}{p_i}} \right)^2 + \frac{1}{2M}\sum_{i=1}^{M}\alpha_i \quad w.p.1 \; ,
\end{align}
where (a) uses the fact that $\bar{\mathbb{V}}[I_i]\geq 0$, (b) uses \eqref{eq.relation_throughput} with $K \rightarrow \infty$, (c) uses the inequality in \eqref{eq.relation_AI1}, (d) uses Cauchy-Schwarz inequality and (e) uses the equality in \eqref{eq.relation_AI2}.  Notice that \eqref{eq.LowerBound_1} gives the expression for $L_B$ found in \eqref{eq.LowerBound}.

Finally, since $J_K^\pi$ in \eqref{eq.AoI_expression} is positive for every $\pi\in\Pi$ and for every $K$, we employ Fatou's lemma to \eqref{eq.LowerBound_1} and obtain $\lim_{K \rightarrow \infty}\mathbb{E}\left[J_K^{\pi}\right] \geq \mathbb{E}\left[\lim_{K \rightarrow \infty}J_K^{\pi}\right] \geq L_B$, establishing the result of the theorem.
\end{proof}

The sequence of inequalities in \eqref{eq.LowerBound_1} that led to $\lim_{K \rightarrow \infty}\mathbb{E}\left[J_K^{\pi}\right] \geq L_B$ could have rendered a loose lower bound. However, in the next section, we use $L_B$ to derive a performance guarantee $\rho^G$ for the Greedy policy and show that $\rho^G=1$ for symmetric networks with large $M$, i.e. under these conditions the value of $L_B$ is as tight as possible. Furthermore, numerical results in Sec.~\ref{sec.Simulation} show that the lower bound is also tight in other network configurations. In the upcoming sections, we obtain performance guarantees for four low-complexity scheduling policies: Greedy, Randomized, Max-Weight and Whittle's Index policies.

\subsection{Greedy Policy}\label{sec.Greedy}
In this section, we analyze the Greedy policy introduced in Sec.~\ref{sec.Symmetric} and derive a closed-form expression for its performance guarantee $\rho^G$. The expression for $\rho^G$ depends on the statistics of the set of values $\{1/p_i\}_{i=1}^M$, in particular of its \emph{coefficient of variation}. %Define the sample mean and the sample variance of $1/p_i$ as
Let the sample mean and sample variance of $\{1/p_i\}_{i=1}^M$ be
\begin{align}%\label{eq.Var_p}
\bar{\mathbb{M}}\left[\frac{1}{p_i}\right]&=\frac{1}{M}\sum_{j=1}^M\frac{1}{p_j} \; ;\\
\bar{\mathbb{V}}\left[\frac{1}{p_i}\right]&=\frac{1}{M}\sum_{j=1}^M \left(\frac{1}{p_j}-\bar{\mathbb{M}}\left[\frac{1}{p_i}\right]\right)^2 \; .
\end{align}
Then, the coefficient of variation is given by
\begin{equation}\label{eq.CV_p}
C_V=\displaystyle\sqrt{\bar{\mathbb{V}}\left[\frac{1}{p_i}\right]}\bigg/\bar{\mathbb{M}}\left[\frac{1}{p_i}\right] \; .
\end{equation}
The coefficient of variation is a measure of how spread out are the values of $1/p_i$. The value of $C_V$ is large when $\{1/p_i\}_{i=1}^M$ are disperse and $C_V=0$ if and only if $p_i=p$ for all clients.

%and show that the Greedy policy is AoI-optimal when the vectors $[\alpha_1~\cdots~\alpha_M]^T$ and $[1/p_1 ~ \cdots ~ 1/p_M]^T$ have the same direction.

\begin{theorem}[Performance of Greedy]\label{theo.performance_Greedy}
Consider a network $(M,T,p_i,\alpha_i)$ with an infinite time-horizon. The Greedy policy is $\rho^G$-optimal as $M\rightarrow\infty$, where
\begin{equation}\label{eq.performance_Greedy}
\rho^G=\frac{\displaystyle\left(\sum_{i=1}^M\alpha_i\right)\left(\sum_{i=1}^M \frac{1}{p_i}\right)\left[1+\frac{C_V^2}{M}\right]+T\left(\sum_{i=1}^M\alpha_i\right)}{\displaystyle\left(\sum_{i=1}^M \sqrt{\frac{\alpha_i}{p_i}}\right)^2+T\left(\sum_{i=1}^M\alpha_i\right)} \; .
\end{equation}
\end{theorem}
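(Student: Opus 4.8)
The plan is to establish the guarantee in the form $\rho^G = U_B^G/L_B$, where $L_B$ is the universal lower bound of Theorem~\ref{theo.LowerBound} and $U_B^G$ is an upper bound on the infinite-horizon performance $\lim_{K\to\infty}\mathbb{E}[J_K^G]$ of Greedy. Since the denominator of \eqref{eq.performance_Greedy} is exactly $2MT\,L_B$, it suffices to produce a $U_B^G$ whose value, scaled by $2MT$, matches the numerator. I would start from the infinite-horizon representation already obtained inside the proof of Theorem~\ref{theo.LowerBound},
\begin{equation*}
\lim_{K\to\infty}J_K^{G}=\frac{1}{2M}\sum_{i=1}^{M}\alpha_i\left[\frac{\bar{\mathbb{V}}[I_i]}{\bar{\mathbb{M}}[I_i]}+\bar{\mathbb{M}}[I_i]+1\right],
\end{equation*}
so that the whole problem reduces to computing the first two sample moments of the inter-delivery times $I_i[m]$ \emph{when Greedy is employed}.

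The second step is to pin down Greedy's delivery process. By Remark~\ref{rem.Greedy} and Lemma~\ref{lem.Greedy}, Greedy retransmits to a single client until success and then advances in a fixed Round Robin order, so each inter-delivery interval of client $i$ coincides with exactly one full Round Robin cycle. Measured in slots, such a cycle is a sum $\sum_{j=1}^{M}G_j$ of independent geometric random variables, where $G_j$ (mean $1/p_j$) is the number of transmissions needed to deliver client $j$'s packet. Combining Round Robin symmetry (every client receives the same number of deliveries, hence $D_i(K)$ is common across $i$) with the work-conserving property (for large $M$, Greedy fills all $T$ slots of every frame, so $\sum_i A_i(K)=KT$) and the strong law of large numbers \eqref{eq.relation_AI2}, I would obtain the common mean inter-delivery time $\bar{\mathbb{M}}[I_i]=\tfrac{1}{T}\sum_{j=1}^{M}1/p_j$ for every client.

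The third step is the variance term, and this is where the slot-to-frame conversion enters. Converting the cycle length from slots to frames and letting $K\to\infty$ (renewal-reward / LLN) produces a sample second moment of $I_i[m]$; as $M\to\infty$ the per-cycle boundary corrections (the partial frames at the two ends of a cycle, at most one frame each) are of lower order than the $\Theta(M^2)$ leading term, and the second-moment analysis of the cycle yields a variance contribution expressible through the empirical variance $\bar{\mathbb{V}}[1/p_i]$ of the client means. This is precisely what introduces the coefficient of variation \eqref{eq.CV_p} and the factor $[1+C_V^2/M]$. Substituting $\bar{\mathbb{M}}[I_i]$ and this second-moment estimate back into the displayed objective gives
\begin{equation*}
U_B^G=\frac{1}{2MT}\left(\sum_{i=1}^{M}\alpha_i\right)\left(\sum_{i=1}^{M}\frac{1}{p_i}\right)\left[1+\frac{C_V^2}{M}\right]+\frac{1}{2M}\sum_{i=1}^{M}\alpha_i.
\end{equation*}

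Finally I would form $\rho^G=U_B^G/L_B$ and clear the common factor $2MT$, observing that both $2MT\,U_B^G$ and $2MT\,L_B$ carry the same additive term $T\sum_i\alpha_i$; this reproduces \eqref{eq.performance_Greedy}. I expect the main obstacle to be exactly the third step: the inter-delivery statistics live on the slot time-scale whereas AoI is accrued on the frame time-scale, and it is only after an $M\to\infty$ concentration argument (discarding the $O(M)$ boundary terms relative to the $O(M^2)$ main term) that the clean $C_V^2/M$ correction emerges. This is also why the guarantee is naturally an asymptotic ($M\to\infty$) statement rather than a finite-$M$ bound, and some care is needed to justify interchanging the long-run ($K\to\infty$) sample averages with expectations, which the renewal structure of Greedy together with the no-starvation restriction on $\Pi$ should supply.
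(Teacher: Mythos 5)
Your proposal follows essentially the same route as the paper's proof in Appendix~\ref{app.Theo_performance_Greedy}: both form $\rho^G=U_B^G/L_B$, exploit the Round Robin structure to write each inter-delivery interval of client $i$ as (the floor of $1/T$ times) a sum of $M$ independent geometric transmission counts plus at most $T-1$ idle slots, apply the renewal-reward theorem to obtain the objective from the first two moments of $I_i[m]$, and let $M\to\infty$ so that the boundary corrections are absorbed and the $[1+C_V^2/M]$ factor emerges. The only cosmetic differences are that you compute the mean cycle length via $\sum_i A_i(K)\le KT$ and $D_i(K)/A_i(K)\to p_i$ rather than directly from $\mathbb{E}[B_i[m]]=\sum_{j}1/p_j$, and that the paper sidesteps your flagged interchange-of-limits concern by invoking the renewal-reward theorem for $\mathbb{E}[h_{k,i}]$ directly instead of passing through almost-sure sample averages.
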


The proof of Theorem~\ref{theo.performance_Greedy} is in Appendix~\ref{app.Theo_performance_Greedy} of the supplementary material. The expression of $\rho^G$ for finite $M$ can be readily obtained by dividing \eqref{eq.performance_Greedy_M} by \eqref{eq.LowerBound}. %We displayed the simplified version of $\rho^G$ in Theorem~\ref{theo.performance_Greedy} for its added insight. 
Next, we use the performance guarantee in \eqref{eq.performance_Greedy} to obtain sufficient conditions for the optimality of the Greedy policy.

\begin{corollary}\label{cor.performance_Greedy}
The Greedy policy minimizes the expected sum AoI \eqref{eq.EWSAoI} of any symmetric network with $M\rightarrow\infty$. %when  namely the network is symmetric. %$[\alpha_1~\cdots~\alpha_M]^T$ and $[1/p_1 ~ \cdots ~ 1/p_M]^T$ are collinear.
\end{corollary}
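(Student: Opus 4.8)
The plan is to obtain Corollary~\ref{cor.performance_Greedy} as a direct specialization of the performance guarantee $\rho^G$ from Theorem~\ref{theo.performance_Greedy} to the symmetric case $p_i=p$ and $\alpha_i=\alpha$ for all $i$. The key observation is that symmetry forces the dispersion of the reliabilities to vanish: since every $1/p_i$ equals $1/p$, the sample variance $\bar{\mathbb{V}}[1/p_i]$ is zero, and hence the coefficient of variation in \eqref{eq.CV_p} satisfies $C_V=0$. This is precisely the ``$C_V=0$ if and only if $p_i=p$'' property noted after \eqref{eq.CV_p}, and it eliminates the only $M$-dependent term appearing in the numerator of \eqref{eq.performance_Greedy}.

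First I would evaluate the three sums entering \eqref{eq.performance_Greedy} under symmetry: $\sum_{i=1}^M \alpha_i = M\alpha$, $\sum_{i=1}^M 1/p_i = M/p$, and $\left(\sum_{i=1}^M \sqrt{\alpha_i/p_i}\right)^2 = M^2\alpha/p$. Substituting these together with $C_V=0$ into \eqref{eq.performance_Greedy}, the numerator becomes $(M\alpha)(M/p)+TM\alpha = M^2\alpha/p + TM\alpha$, which is identical to the denominator $M^2\alpha/p + TM\alpha$. Hence $\rho^G=1$.

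Finally, I would invoke the ordering $1 \le \psi^G \le \rho^G$ established in Sec.~\ref{sec.General}, where $\psi^G = \mathbb{E}[J_K^G]/\mathbb{E}[J^*] \ge 1$ holds trivially because $\mathbb{E}[J^*]$ is the minimum over $\Pi$. Since $\rho^G=1$ squeezes $\psi^G$ to $1$, we conclude $\mathbb{E}[J_K^G]=\mathbb{E}[J^*]$, i.e.\ Greedy is AoI-optimal for every symmetric network.

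I expect no genuine obstacle: the result is a one-line substitution once $C_V=0$ is recognized. The only point worth flagging is that the ``$M\to\infty$'' qualifier in the statement is inherited from Theorem~\ref{theo.performance_Greedy} rather than needed for symmetric networks — because the $C_V^2/M$ term vanishes identically, the equality $\rho^G=1$ holds for every finite $M$ as well, recovering the stronger conclusion of Theorem~\ref{theo.Greedy}. A useful secondary takeaway to record is that $\rho^G=1$ certifies that the universal lower bound $L_B$ in \eqref{eq.LowerBound} is attained, and therefore tight, in the symmetric regime.
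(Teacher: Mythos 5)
Your proof of the corollary itself is correct and is essentially the paper's own argument: the paper also establishes $\rho^G=1$ under symmetry by observing that the Cauchy--Schwarz inequality $\left(\sum_{i=1}^M\sqrt{\alpha_i/p_i}\right)^2\leq\left(\sum_{i=1}^M\alpha_i\right)\left(\sum_{i=1}^M 1/p_i\right)$ and $C_V\geq 0$ both hold with equality exactly when $\alpha_i=\alpha$ and $p_i=p$, and then concludes via $1\leq\psi^G\leq\rho^G$. One caveat on your closing aside: the claim that $\rho^G=1$ also holds for every finite $M$ overreaches, because \eqref{eq.performance_Greedy} is only the $M\rightarrow\infty$ limit of $U_B^G/L_B$ --- the finite-$M$ guarantee obtained by dividing \eqref{eq.performance_Greedy_M} by \eqref{eq.LowerBound} involves the factor $\mathrm{Y}^G>1$ and strictly exceeds $1$ even under symmetry, so the finite-$M$ optimality in Theorem~\ref{theo.Greedy} is not recoverable from the performance bound and genuinely requires the stochastic-dominance argument.
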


\begin{proof}
Consider two inequalities. (i) Cauchy-Schwarz
\begin{equation}
\left(\sum_{i=1}^M \sqrt{\frac{\alpha_i}{p_i}}\right)^2 \leq \left(\sum_{i=1}^M\alpha_i\right)\left(\sum_{i=1}^M \frac{1}{p_i}\right) \; ,
\end{equation}
and (ii) Positive coefficient of variation: $C_V \geq 0$. It is evident from \eqref{eq.performance_Greedy} that $\rho^G=1$ if and only if both inequalities (i) and (ii) hold with \emph{equality} and this is true if and only if $\alpha_i=\alpha$ and $p_i=p$ for all clients. 
%It is well-known that Cauchy-Schwarz holds with equality if and only if vectors $[\alpha_1~\cdots~\alpha_M]^T$ and $[1/p_1 ~ \cdots ~ 1/p_M]^T$ are collinear. Inequality (ii) holds with equality if and only if $p_i=p$ for all clients. Both conditions are simultaneously satisfied if and only if $\alpha_i=\alpha$ and $p_i=p$ for all clients. The proof is complete.
\end{proof}

Theorem~\ref{theo.performance_Greedy} provides a closed-form expression for the performance guarantee $\rho^G$ and Corollary \ref{cor.performance_Greedy} shows that, by leveraging the knowledge of $h_{k,i}$, the Greedy policy achieves optimal performance in symmetric networks with $M\rightarrow\infty$. Notice that the Greedy policy does not take into account differences in terms of weight $\alpha_i$ and channel reliability $p_i$. %It is evident from \eqref{eq.performance_Greedy} that the performance guarantee deteriorates when clients have diverse values of $1/p_i$. 
In the next section, we study the class of Stationary Randomized policies which use the knowledge of $\alpha_i$ and $p_i$ but neglect $h_{k,i}$.

\subsection{Stationary Randomized Policy}\label{sec.Randomized}
Consider the class of Stationary Randomized policies in which scheduling decisions are made randomly, according to fixed probabilities. In particular, define the Randomized policy as follows.

%\begin{policy}\label{policy.Random}
\emph{Randomized policy selects in each slot $(k,n)$ client $i$ with probability $\beta_i/\sum_{j=1}^M\beta_j$, for every client $i$ and for positive fixed values of $\{\beta_i\}_{i=1}^M$. The BS transmits the packet if the selected client has an undelivered packet and idles otherwise.}
%\end{policy}

Denote the Randomized policy as $R$. Observe that this policy uses no information from current or past states of the network. Moreover, it is not work-conserving, since the BS can idle when the network still has clients with undelivered packets. Next, we derive a closed-form expression for the performance guarantee $\rho^R$ and find a Randomized policy that is $2$-optimal \emph{for all network configurations with} $T=1$ slot per frame.

\begin{theorem}[Performance of Randomized]\label{theo.performance_Random}
Consider a network $(M,T,p_i,\alpha_i)$ with an infinite time-horizon. The Randomized policy with positive values of $\{\beta_i\}_{i=1}^M$ is $\rho^R$-optimal, where
\begin{equation}\label{eq.performance_Random}
\rho^R=2\frac{\displaystyle\left(\sum_{j=1}^{M}\beta_j\sum_{i=1}^{M}\frac{\alpha_i }{p_i\beta_i}\right)+(T-1)\left(\sum_{i=1}^{M}\frac{\alpha_i}{p_i}\right)}{\displaystyle\left(\sum_{i=1}^M \sqrt{\frac{\alpha_i}{p_i}}\right)^2+T\left(\sum_{i=1}^M\alpha_i\right)} \; .
\end{equation}
\end{theorem}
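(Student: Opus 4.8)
The plan is to produce an explicit upper bound $U_B^R$ on the infinite-horizon performance $\lim_{K\to\infty}\mathbb{E}[J_K^R]$ and then form the ratio $\rho^R = U_B^R/L_B$ with the universal lower bound $L_B$ of Theorem~\ref{theo.LowerBound}. The chain $L_B \le \mathbb{E}\left[J^*\right] \le \lim_{K\to\infty}\mathbb{E}[J_K^R] \le U_B^R$ then yields $\psi^R \le \rho^R$, establishing $\rho^R$-optimality. Since $2ML_B$ equals the denominator of \eqref{eq.performance_Random}, it suffices to exhibit a per-frame delivery probability $q_i$ with $\lim_{K\to\infty}\mathbb{E}[J_K^R] = \frac{1}{M}\sum_i \alpha_i/q_i$ and to prove that $\sum_{i=1}^M \alpha_i/q_i \le \frac{1}{T}\big[(\sum_j\beta_j)(\sum_i \frac{\alpha_i}{p_i\beta_i}) + (T-1)\sum_i \frac{\alpha_i}{p_i}\big]$, the bracketed quantity being exactly half the numerator of \eqref{eq.performance_Random}.

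First I would characterize the delivery process. Writing $\gamma_i = \beta_i/\sum_j\beta_j$, in any slot the event that client $i$ is selected and its transmission succeeds has probability $\gamma_i p_i$, independently across slots; hence client $i$'s freshly regenerated packet is delivered within a frame with probability $q_i = 1 - (1-\gamma_i p_i)^T$, and these per-frame delivery events are i.i.d.\ across frames because the Randomized policy ignores $\vec{h}_k$. Consequently $h_{k,i}$ evolves as the reset Markov chain of \eqref{eq.evolution_h} driven by i.i.d.\ Bernoulli$(q_i)$ updates, whose stationary distribution is geometric with mean $1/q_i$; equivalently, the inter-delivery times $I_i[m]$ are i.i.d.\ Geometric$(q_i)$, so $\bar{\mathbb{M}}[I_i]\to 1/q_i$ and $\bar{\mathbb{M}}[I_i^2]\to (2-q_i)/q_i^2$ by the strong law. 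Substituting into \eqref{eq.Objective_infiniteHorizon} and using $(2-q_i)/q_i + 1 = 2/q_i$ gives the exact value $\lim_{K\to\infty}\mathbb{E}[J_K^R] = \frac{1}{M}\sum_{i=1}^M \alpha_i/q_i$.

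The crux is a purely scalar bound on $1/q_i$. I would prove the inequality $1-(1-a)^T \ge Ta/(1+(T-1)a)$ for $a\in[0,1]$ and integer $T\ge 1$; after the substitution $b=1-a$ and clearing the (positive) denominator, this reduces to $b\,g(b)\ge 0$ with $g(b):=1-Tb^{T-1}+(T-1)b^T$, which holds because $g(1)=0$ and $g'(b)=T(T-1)b^{T-2}(b-1)\le 0$ on $[0,1]$, so $g$ decreases to its zero at $b=1$ and stays nonnegative. Taking $a=\gamma_i p_i$ yields $1/q_i \le \frac{1}{T\gamma_i p_i} + \frac{T-1}{T}$, and bounding $\frac{T-1}{T}\le \frac{T-1}{T p_i}$ (valid since $p_i\le 1$) gives $1/q_i \le \frac{1}{T\gamma_i p_i} + \frac{T-1}{T p_i}$. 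Summing $\alpha_i/q_i$ and using $1/\gamma_i = (\sum_j\beta_j)/\beta_i$ then reproduces exactly the numerator structure of \eqref{eq.performance_Random}, after which dividing by $L_B$ of \eqref{eq.LowerBound} closes the argument.

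I expect the scalar inequality to be the main obstacle: not because it is deep, but because a bound on $1/q_i$ that is simultaneously tight enough to be useful and clean enough to collapse into the closed form \eqref{eq.performance_Random} is the only non-mechanical step, and the monotonicity of $g$ is what makes it succeed. A secondary technical point is justifying $\lim_K\mathbb{E}[J_K^R] = \frac{1}{M}\sum_i\alpha_i/q_i$ as a statement about expectations rather than a mere w.p.1 limit; I would handle this through the steady-state mean $\mathbb{E}[h_{k,i}]\to 1/q_i$ of the reset chain together with Cesàro averaging of $\mathbb{E}[J_K^R]=\frac{1}{KM}\sum_{k}\sum_i \alpha_i\mathbb{E}[h_{k,i}]$, which sidesteps the uniform-integrability issue that a direct appeal to \eqref{eq.Objective_infiniteHorizon} and Fatou would raise, since Fatou controls the wrong direction for an upper bound.
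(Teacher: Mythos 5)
Your proposal is correct and follows essentially the same route as the paper's proof: identify the per-frame delivery probability $q_i=1-(1-p_i\beta_i/\sum_j\beta_j)^T$ (the paper computes the identical quantity by conditioning on the number of selections in a frame), use renewal/steady-state reasoning to obtain $\lim_{K\to\infty}\mathbb{E}[J_K^R]=\frac{1}{M}\sum_i\alpha_i/q_i$, lower-bound $q_i$ by $Tp_i\beta_i/\bigl(\sum_j\beta_j+(T-1)\beta_i\bigr)$, and divide by $L_B$. The only substantive difference is that you prove the scalar inequality $1-(1-a)^T\ge Ta/(1+(T-1)a)$ from scratch via the monotonicity of $g(b)=1-Tb^{T-1}+(T-1)b^T$ where the paper cites it from a reference (and applies it after first peeling off $p_i$, which yields the same final bound); the only blemish is the trivial slip that the denominator of \eqref{eq.performance_Random} equals $2MT\,L_B$, not $2M\,L_B$.
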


\begin{proof}
The performance guarantee is defined as $\rho^R=U_B^R/L_B$, where the denominator is the universal lower bound in \eqref{eq.LowerBound} and the numerator is an upper bound to the objective function, namely $\lim_{K \rightarrow \infty} \mathbb{E}[J_K^R] \leq U_B^R$, which is derived in Appendix \ref{app.Theo_performance_Random} of the supplementary material. 

Let $d_i(k)\in \{0,1\}$ be the number of packets delivered to client $i$ during frame $k$. Notice that 
\begin{equation}
\mathbb{E}\left[d_i(k)\right]=\mathbb{P}(\mbox{delivery to client $i$ during frame $k$}) \; .
\end{equation} 
%is the \emph{expected throughput} of client $i$ during frame $k$. In particular,
When the Randomized policy is employed, this probability is constant over time, i.e. $\mathbb{E}\left[d_i(k)\right]=\mathbb{E}\left[d_i\right]$. Moreover, the PMF of the random variable $I_i[m]$ that represents the number of frames between the $(m-1)$th and $m$th packet deliveries to client $i$ is given by
\begin{equation}
\mathbb{P}\left(I_i[m]=n\right)=\mathbb{E}\left[d_i\right](1-\mathbb{E}\left[d_i\right])^{n-1} \; ,
\end{equation}
for $n \in \{1,2,\cdots\}$ and is \emph{independent of} $m$. 

Clearly, when the Randomized policy is employed, the sequence of packet deliveries is a renewal process with geometric inter-delivery times $I_i[m]$. Thus, using the generalization of the elementary renewal theorem for renewal-reward processes \cite[Sec.~5.7]{DSP} yields
\begin{equation}\label{eq.Random_Renewal}
\lim_{K\rightarrow\infty}\frac{1}{K}\sum_{k=1}^K\mathbb{E}[h_{k,i}]=\frac{\mathbb{E}[I_i[m]^2]}{2\mathbb{E}[I_i[m]]}+\frac{1}{2} = \frac{1}{\mathbb{E}[d_i]} \; ,
\end{equation}
and substituting \eqref{eq.Random_Renewal} into the objective function \eqref{eq.Objective} gives
\begin{align}\label{eq.Objective_Random}
\lim_{K \rightarrow \infty}\mathbb{E}\left[J_K^{R}\right] &=\frac{1}{M}\sum_{i=1}^{M}\frac{\alpha_i}{\mathbb{E}\left[d_i\right]} \; .
\end{align}

For simplicity of exposition, we consider the case $T=1$ slot per frame. The derivation of the performance guarantee $\rho^R$ for general $T$ is in Appendix~\ref{app.Theo_performance_Random}. When $T=1$, packets are always available for transmission and the BS selects one client per frame. Hence, the probability of delivering a packet to client $i$ during frame $k$ is %equal to the probability of selecting client $i$ times the probability of a successful transmission, namely
\begin{equation}\label{eq.prob_Random_T1}
\mathbb{E}\left[d_i\right]=\frac{\beta_i}{\sum_{j=1}^M\beta_j}p_i \; .
\end{equation}
Substituting \eqref{eq.prob_Random_T1} into \eqref{eq.Objective_Random} gives 
\begin{align}\label{eq.UB_Random_T1}
\lim_{K \rightarrow \infty}\mathbb{E}\left[J_K^{R}\right]=\frac{1}{M}\sum_{j=1}^{M}\beta_j\sum_{i=1}^{M}\frac{\alpha_i }{p_i\beta_i} =U_B^R\; .
\end{align}
Finally, dividing \eqref{eq.UB_Random_T1} by the lower bound in \eqref{eq.LowerBound} gives the performance guarantee $\rho^R$ in \eqref{eq.performance_Random} for $T=1$.
\end{proof}

\begin{corollary}\label{cor.performance_Random}
%Under the assumptions of Theorem~\ref{theo.performance_Random}, employing the 
The Randomized policy with $\beta_i=\sqrt{\alpha_i/p_i}, \forall i$, has $\rho^R<2$ for all networks with $T=1$ slot per frame.
\end{corollary}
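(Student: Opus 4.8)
The plan is to specialize the closed-form guarantee $\rho^R$ from Theorem~\ref{theo.performance_Random} to the case $T=1$, plug in the particular weights $\beta_i=\sqrt{\alpha_i/p_i}$, and then bound the resulting ratio directly. Setting $T=1$ in \eqref{eq.performance_Random} kills the $(T-1)$ term in the numerator, leaving a ratio whose numerator is $2\left(\sum_{j=1}^M\beta_j\right)\left(\sum_{i=1}^M \alpha_i/(p_i\beta_i)\right)$ and whose denominator is $\left(\sum_{i=1}^M \sqrt{\alpha_i/p_i}\right)^2 + \sum_{i=1}^M\alpha_i$.

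The crucial step is the substitution $\beta_i=\sqrt{\alpha_i/p_i}$. I would first check that it forces the two sums in the numerator to coincide: the second factor simplifies as $\alpha_i/(p_i\beta_i)=\sqrt{\alpha_i/p_i}$ for each $i$, so $\sum_{i=1}^M \alpha_i/(p_i\beta_i)=\sum_{j=1}^M \beta_j=\sum_{i=1}^M\sqrt{\alpha_i/p_i}$. Consequently the numerator product becomes a perfect square and
\begin{equation}
\rho^R = \frac{2\left(\sum_{i=1}^M \sqrt{\alpha_i/p_i}\right)^2}{\left(\sum_{i=1}^M \sqrt{\alpha_i/p_i}\right)^2 + \sum_{i=1}^M\alpha_i} \; .
\end{equation}

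To finish, I would write $X=\left(\sum_{i=1}^M\sqrt{\alpha_i/p_i}\right)^2$ and $Y=\sum_{i=1}^M\alpha_i$, both strictly positive because $\alpha_i>0$ and $p_i\in(0,1]$, and observe that $\rho^R=2X/(X+Y)<2X/X=2$, where the strict inequality uses $Y>0$. The main (indeed only) obstacle is the algebraic observation that the choice $\beta_i=\sqrt{\alpha_i/p_i}$ equalizes the two numerator sums, turning their product into exactly the first term of the denominator; once that is seen, the bound $\rho^R<2$ is immediate and, notably, holds uniformly over all networks with $T=1$ since the gap from $2$ depends only on the positive quantity $Y$.
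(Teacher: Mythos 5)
Your proposal is correct and is essentially the paper's own argument: the paper phrases the key step as the equality condition of the Cauchy--Schwarz inequality $\left(\sum_{i}\sqrt{\alpha_i/p_i}\right)^2 \leq \left(\sum_{j}\beta_j\right)\left(\sum_{i}\alpha_i/(\beta_i p_i)\right)$, while you verify the same identity by direct substitution, and both then conclude $\rho^R = 2X/(X+Y) < 2$ from positivity of $\sum_i \alpha_i$. No gap.
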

\begin{proof}
The assignment $\beta_i=\sqrt{\alpha_i/p_i}, \forall i \in \{1,\cdots,M\}$ is the necessary (and sufficient) condition for the Cauchy-Schwarz inequality 
\begin{equation}
\left(\sum_{i=1}^M \sqrt{\frac{\alpha_i}{p_i}}\right)^2 \leq \left(\sum_{j=1}^M\beta_j\right)\left(\sum_{i=1}^M \frac{\alpha_i}{\beta_ip_i}\right) \; ,
\end{equation}
to hold with equality. Applying this condition to the expression in \eqref{eq.performance_Random} for $T=1$ results in $\rho^R<2$.
\end{proof}

Theorem~\ref{theo.performance_Random} gives an expression for $\rho^R$ and Corollary~\ref{cor.performance_Random} shows that, by using only the knowledge of $\alpha_i$ and $p_i$, a Randomized policy can achieve 2-optimal performance in a \emph{wide range of network setups}, in particular all networks with $T=1$ slot per frame. Next, we develop a Max-Weight policy that leverages the knowledge of $\alpha_i$, $p_i$ and $h_{k,i}$ in making scheduling decisions.

\subsection{Max-Weight Policy}\label{sec.MaxWeight}
In this section, we use concepts from Lyapunov Optimization \cite{lyapunov} to derive a Max-Weight policy. The Max-Weight policy is obtained by minimizing the drift of a Lyapunov Function of the system state at every frame $k$. Consider the quadratic Lyapunov Function
\begin{equation}\label{eq.Lyapunov_function}
L(\vec{h}_k)=\frac{1}{M}\sum_{i=1}^M\alpha_ih_{k,i}^2 \; , 
\end{equation}
and the one-frame Lyapunov Drift
\begin{equation}\label{eq.def_Lyapunov_drift}
\Delta(\vec{h}_k)=\mathbb{E}\left[\left. L(\vec{h}_{k+1})-L(\vec{h}_k)\right|\vec{h}_{k}\right] \, .
\end{equation}
The Lyapunov Function $L(\vec{h}_k)$ depicts how large the AoI of the clients in the network during frame $k$ is, while the Lyapunov Drift $\Delta(\vec{h}_k)$ represents the growth of $L(\vec{h}_k)$ from one frame to the next. Intuitively, by minimizing the drift, the Max-Weight policy reduces the value of $L(\vec{h}_k)$ and, consequently, keeps the AoI of the clients low. %reducing the objective function \eqref{eq.Objective}.

%selects, in each slot $(k,n)$, the client $i$ with an undelivered packet and highest value of $p_iG_i(h_{k,i})$. Denote this policy as $M$. Prior to defining the weight function $G_i(h_{k,i})$, we introduce some useful parameters. 
To find the policy that minimizes the one-frame drift $\Delta(\vec{h}_k)$, we first need to analyze the RHS of \eqref{eq.def_Lyapunov_drift}. Consider frame $k$ with a fixed vector $\vec{h}_k$ and a policy $\pi$ making scheduling decisions throughout the $T$ slots of this frame. Recall that $d_i^\pi(k)\in\{0,1\}$ represents the number of packets delivered to client $i$ during frame $k$ when policy $\pi$ is employed. An alternative way to represent the evolution of $h_{k,i}$ defined in \eqref{eq.evolution_h} is
\begin{equation}\label{eq.h_MW}
h_{k+1,i}=d_i^\pi(k)+(h_{k,i}+1)[1-d_i^\pi(k)] \; .
\end{equation}
Applying \eqref{eq.h_MW} into the conditional expectation of $h_{k+1,i}^2$ yields
%$h_{k+1,i}=d_i(k)+(h_{k,i}+1)[1-d_i(k)]$. Thus, for a fixed vector $\vec{h}_k$ we have
\begin{align}\label{eq.h_MW2}
%&\mathbb{E}\left[h_{k+1,i}^2|\vec{h}_{k}\right]=\mathbb{E}\left[d_i(k)|\vec{h}_{k}\right]+(h_{k,i}+1)^2\mathbb{E}\left[1-d_i(k)|\vec{h}_{k}\right] \; ;\\
\mathbb{E}\left[h_{k+1,i}^2-h_{k,i}^2|\vec{h}_{k}\right]=-\mathbb{E}\left[d_i^\pi(k)|\vec{h}_{k}\right]h_{k,i}(h_{k,i}+2)+2h_{k,i}+1 \; .
\end{align}
Substituting \eqref{eq.Lyapunov_function} into \eqref{eq.def_Lyapunov_drift} and then using \eqref{eq.h_MW2} gives the following expression for the Lyapunov Drift %in \eqref{eq.def_Lyapunov_drift} can be rewritten as
\begin{align}
%\Delta(\vec{h}_k)=&\frac{1}{M}\sum_{i=1}^M\alpha_i\mathbb{E}\left[h_{k+1,i}^2-h_{k,i}^2|\vec{h}_{k}\right] \; ; \label{eq.Lyapunov_drift2}\\
\Delta(\vec{h}_k)=&-\frac{1}{M}\sum_{i=1}^M\mathbb{E}\left[d_i^\pi(k)|\vec{h}_{k}\right]\alpha_ih_{k,i}(h_{k,i}+2)+\nonumber \\
&+\frac{2}{M}\sum_{i=1}^M\alpha_ih_{k,i}+\frac{1}{M}\sum_{i=1}^M\alpha_i \; . \label{eq.Lyapunov_drift}
\end{align}

Observe that the scheduling policy $\pi$ only affects the first term on the RHS of \eqref{eq.Lyapunov_drift}. Define the weight function $G_i(h_{k,i})=\alpha_ih_{k,i}(h_{k,i}+2)$. During frame $k$, the scheduling policy that maximizes the sum $\sum_{i=1}^M\mathbb{E}\left[d_i^\pi(k)|\vec{h}_{k}\right]G_i(h_{k,i})$ also minimizes $\Delta(\vec{h}_k)$. Notice that $\mathbb{E}\left[d_i^\pi(k)|\vec{h}_{k}\right]$ represents the expected throughput of client $i$ during frame $k$. The class of policies that maximize the \emph{expected weighted sum throughput in a frame} was studied in \cite{theoryofQoS,multicast}. According to \cite[Eq.(2)]{multicast}, to maximize $\sum_{i=1}^M\mathbb{E}\left[d_i^\pi(k)|\vec{h}_{k}\right]G_i(h_{k,i})$, the scheduling policy must myopically select the client with an undelivered packet and highest value of $p_iG_i(h_{k,i})$ in every slot of frame $k$. Hence, the Max-Weight policy is defined as follows.

%\begin{policy}\label{policy.MaxWeight}
\emph{Max-Weight policy schedules in each slot $(k,n)$ a transmission to the client with highest value of $p_i\alpha_ih_{k,i}(h_{k,i}+2)$ that has an undelivered packet, with ties being broken arbitrarily.}
%\end{policy}

Denote the Max-Weight policy as $MW$. Observe that when $\alpha_i=\alpha$ and $p_i=p$, prioritizing according to $p_i\alpha_ih_{k,i}(h_{k,i}+2)$ is identical to prioritizing according to $h_{k,i}$, i.e. Max-Weight is identical to Greedy. Thus, from Theorem~\ref{theo.Greedy} (Optimality of Greedy), we conclude that Max-Weight is AoI-optimal for symmetric networks. For general networks, we derive the performance guarantee $\rho^{MW}$ for the Max-Weight policy.

\begin{theorem}[Performance of Max-Weight]\label{theo.performance_MaxWeight}
Consider a network $(M,T,p_i,\alpha_i)$ with an infinite time-horizon. The Max-Weight policy is $\rho^{MW}$-optimal, where
\begin{equation}\label{eq.performance_MaxWeight}
\rho^{MW}=4 \frac{\displaystyle\left(\sum_{i=1}^{M}\sqrt{\frac{\alpha_i}{p_i}}\right)^2+(T-1)\sum_{i=1}^M\frac{\alpha_i}{p_i}}{\displaystyle\left(\sum_{i=1}^{M}\sqrt{\frac{\alpha_i}{p_i}}\right)^2+T\left(\sum_{i=1}^M\alpha_i\right)} \; .
\end{equation}
\end{theorem}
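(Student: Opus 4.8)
The plan is to mirror the template used for the Randomized policy: show that $\lim_{K\rightarrow\infty}\mathbb{E}[J_K^{MW}]\leq U_B^{MW}$ for an upper bound $U_B^{MW}$ defined so that $\rho^{MW}=U_B^{MW}/L_B$, with $L_B$ the universal lower bound \eqref{eq.LowerBound}. Reading off \eqref{eq.performance_MaxWeight}, the denominator of $\rho^{MW}$ cancels the bracketed factor of $L_B$, so the target is
\begin{equation*}
U_B^{MW}=\frac{2}{MT}\left[\left(\sum_{i=1}^M\sqrt{\frac{\alpha_i}{p_i}}\right)^2+(T-1)\sum_{i=1}^M\frac{\alpha_i}{p_i}\right].
\end{equation*}
The engine is the drift \eqref{eq.Lyapunov_drift}. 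Since the Max-Weight rule was constructed precisely to minimize the one-frame drift among all admissible decisions, it satisfies $\Delta^{MW}(\vec{h}_k)\leq\Delta^{R}(\vec{h}_k)$ whenever both drifts are evaluated at the \emph{same} state $\vec{h}_k$, for every Stationary Randomized comparison policy $R$. The idea is to exploit this one-sided comparison with a convenient $R$, telescope, and then translate the outcome into a bound on the time-average of $\sum_i\alpha_i h_{k,i}$.

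First I would take $R$ to be the Randomized policy with $\beta_i=\sqrt{\alpha_i/p_i}$ from Corollary~\ref{cor.performance_Random}. Because every client holds an undelivered packet at the start of a frame, its per-frame delivery probability $\delta_i:=\mathbb{E}[d_i^{R}(k)\,|\,\vec{h}_k]$ is a \emph{constant} independent of the state; evaluating it for general $T$ gives
\begin{equation*}
\delta_i=1-\left(1-\frac{\beta_i p_i}{\sum_{j}\beta_j}\right)^{T},
\end{equation*}
since client $i$ escapes service only when each of the $T$ slots either selects another client or suffers a channel error. Substituting this constant $\delta_i$ into \eqref{eq.Lyapunov_drift} and using $h_{k,i}(h_{k,i}+2)=h_{k,i}^2+2h_{k,i}$ yields the per-frame bound
\begin{equation*}
\Delta^{MW}(\vec{h}_k)\leq\frac{1}{M}\sum_{i=1}^M\alpha_i\Big[-\delta_i h_{k,i}^2+(2-2\delta_i)h_{k,i}+1\Big].
\end{equation*}

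Next I would take expectations, sum over $k=1,\dots,K$, and telescope. The sum of drifts equals $\mathbb{E}[L(\vec{h}_{K+1})]-\mathbb{E}[L(\vec{h}_1)]\geq-\mathbb{E}[L(\vec{h}_1)]$ because $L\geq0$, so combining with the per-frame bound, dividing by $K$, and letting $K\rightarrow\infty$ drives the boundary contribution to zero (the non-starving restriction on $\Pi$ ensures the relevant time-averages are finite). Writing $\bar{h}_i:=\lim_K\frac1K\sum_k\mathbb{E}[h_{k,i}]$ and $\overline{h_i^2}:=\lim_K\frac1K\sum_k\mathbb{E}[h_{k,i}^2]$, this produces the single aggregate constraint
\begin{equation*}
\sum_{i=1}^M\delta_i\alpha_i\,\overline{h_i^2}\leq\sum_{i=1}^M(2-2\delta_i)\alpha_i\bar{h}_i+\sum_{i=1}^M\alpha_i,
\end{equation*}
while the objective of interest is exactly $\lim_K\mathbb{E}[J_K^{MW}]=\frac1M\sum_i\alpha_i\bar{h}_i$.

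The final and hardest step is to extract an upper bound on the \emph{linear} quantity $\sum_i\alpha_i\bar{h}_i$ from a constraint whose penalty is \emph{quadratic}. I would apply Jensen, $\overline{h_i^2}\geq\bar{h}_i^2$, and then couple the result with the Cauchy--Schwarz inequality
\begin{equation*}
\left(\sum_{i=1}^M\alpha_i\bar{h}_i\right)^2\leq\left(\sum_{i=1}^M\delta_i\alpha_i\bar{h}_i^2\right)\left(\sum_{i=1}^M\frac{\alpha_i}{\delta_i}\right),
\end{equation*}
eliminating $\sum_i\delta_i\alpha_i\bar{h}_i^2$ between the two to obtain a scalar quadratic inequality in $\sum_i\alpha_i\bar{h}_i$ whose solution is the desired bound. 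The main obstacle I anticipate is twofold: (i) evaluating $\sum_i\alpha_i/\delta_i$ from the general-$T$ expression for $\delta_i$ and simplifying with $\beta_i=\sqrt{\alpha_i/p_i}$, so that both the $\left(\sum_i\sqrt{\alpha_i/p_i}\right)^2$ term and the $(T-1)\sum_i\alpha_i/p_i$ correction emerge; and (ii) carefully tracking the lower-order additive terms ($\sum_i\alpha_i$ and the $-2\delta_i$ contribution) through the quadratic-to-linear conversion so the chain collapses exactly onto the clean $U_B^{MW}$ above, after which dividing by $L_B$ delivers \eqref{eq.performance_MaxWeight}. The case $T=1$, where $\delta_i=\beta_i p_i/\sum_j\beta_j$ and the correction term vanishes, serves as the sanity check that this bookkeeping closes.
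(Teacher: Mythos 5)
Your overall strategy is the paper's: dominate the Max-Weight drift \eqref{eq.Lyapunov_drift} by that of the Stationary Randomized policy with $\beta_i=\sqrt{\alpha_i/p_i}$ (whose per-frame delivery probability is state-independent), telescope the Lyapunov function, convert the resulting quadratic constraint into a linear bound on $\sum_i\alpha_i\bar h_i$ via Jensen and Cauchy--Schwarz, and finish with the same lower bound $\delta_i\geq p_iT\beta_i/(\sum_j\beta_j+(T-1)\beta_i)$ that underlies \eqref{eq.prob_Random}. Your exact expression $\delta_i=1-(1-\beta_ip_i/\sum_j\beta_j)^T$ is correct and agrees with the paper's computation before it is loosened. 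The reordering (telescope first, then work with the Ces\`aro limits $\bar h_i$, $\overline{h_i^2}$) is harmless, modulo replacing limits by $\limsup$ so you need not assume they exist.

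The gap is in your final quadratic-to-linear step, and it costs you a constant. Writing $S=\sum_i\alpha_i\bar h_i$ and $A=\sum_i\alpha_i/\delta_i$, your pairing $S^2\leq\left(\sum_i\delta_i\alpha_i\bar h_i^2\right)A$ combined with the aggregate constraint gives $S^2-2AS-A\sum_i\alpha_i\leq 0$, hence $S\leq A+\sqrt{A^2+A\sum_i\alpha_i}$. Since $\sum_i\alpha_i\leq A$, this is only $S\leq(1+\sqrt2)A$ in general, never $S\leq 2A$; the resulting numerator constant would be $2(1+\sqrt2)\approx 4.83$ rather than the $4$ in \eqref{eq.performance_MaxWeight}. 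The bookkeeping you flag in your obstacle (ii) does not ``collapse'' on its own: you must first complete the square in the drift bound, i.e.\ rewrite $-\delta_ih_{k,i}^2+(2-2\delta_i)h_{k,i}=-\delta_i\left(h_{k,i}-\frac{1}{\delta_i}+1\right)^2+\frac{(1-\delta_i)^2}{\delta_i}$, and apply Cauchy--Schwarz to the \emph{centered} sum $\sum_i\alpha_i\left|\bar h_i-\frac{1}{\delta_i}+1\right|$ against the weights $\alpha_i\delta_i$. That yields $\sum_i\alpha_i\left|\bar h_i-\frac{1}{\delta_i}+1\right|\leq A$ directly, hence $S\leq 2A-\sum_i\alpha_i\leq 2A$, which is exactly the paper's \eqref{eq.drift_UB1} and produces \eqref{eq.UB_MaxWeight} and the factor $4$ after dividing by \eqref{eq.LowerBound}. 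With that single repair your argument coincides with the paper's proof.
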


The proof of Theorem~\ref{theo.performance_MaxWeight} is in Appendix~\ref{app.Theo_performance_MaxWeight} of the supplementary material. In contrast to the Greedy and Randomized policies, the Max-Weight policy uses all available information, namely $p_i$, $\alpha_i$ and $h_{k,i}$, in making scheduling decisions. As expected, numerical results in Sec.~\ref{sec.Simulation} demonstrate that Max-Weight outperforms both Greedy and Randomized in every network setup simulated. In fact, the performance of Max-Weight is comparable to the optimal performance computed using Dynamic Programming. However, by comparing the performance guarantee $\rho^{MW}$ in \eqref{eq.performance_MaxWeight} with $\rho^G$ and $\rho^R$, it might seem that Max-Weight does not provide better performance. The reason for this is the challenge to obtain a tight performance upper bound for Max-Weight. As opposed to Greedy and Randomized, the Max-Weight policy cannot be evaluated using Renewal Theory and it does not have properties that simplify the analysis, such as packets being delivered following a Round Robin pattern or clients being selected according to fixed probabilities. %The Max-Weight policy prioritizes clients according to $p_i\alpha_ih_{k,i}(h_{k,i}+2)$ and this ordering can change substantially from one frame to the next. %at the beginning of every frame and transmits packets accordingly. However, this prioritization can chan
%Despite the challenges associated with deriving the performance guarantee, numerical results in Sec.~\ref{sec.Simulation} show that the performance of the Max-Weight policy is comparable to the optimal performance computed using Dynamic Programming. In addition, as expected, the Max-Weight policy outperforms both Greedy and Randomized in every network setup simulated. 
Next, we consider the AoI minimization problem from a different perspective and propose an Index policy \cite{RMAB}, also known as Whittle's Index policy. This policy is surprisingly similar to the Max-Weight policy and also yields a strong performance. %by leveraging the knowledge of $\alpha_i$, $p_i$ and $h_{k,i}$ in making decisions. 

\section{Whittle's Index Policy}\label{sec.Whittle}
Whittle's Index policy is the optimal solution to a relaxation of the Restless Multi-Armed Bandit (RMAB) problem. This low-complexity heuristic policy has been extensively used in the literature \cite{index_regularity,index_schedule,index_myopic} and is known to have a strong performance in a range of applications \cite{index_multichannelaccess,index_assympt}. The challenge associated with this approach is that the Index policy is only defined for problems that are \emph{indexable}, a condition which is often difficult to establish. 

To develop the Whittle's Index policy, the AoI minimization problem is transformed into a relaxed RMAB problem. The first step is to note that each client in the AoI problem evolves as a restless bandit. Thus, the AoI problem can be posed as a RMAB problem. The second step is to consider the relaxed version of the RMAB problem, called the Decoupled Model, in which clients are examined separately. The Decoupled Model associated with each client $i$ adheres to the network model with $M=1$, except for the addition of a \emph{service charge}. The service charge is a fixed cost per transmission $C>0$ that is incurred by the network every time the BS transmits a packet. %The service charge is used to evaluate how rewarding it is to schedule client $i$ in each state of the network. 
The last step is to solve the Decoupled Model. This solution lays the foundation for the design of the Index policy. Next, we formulate and solve the Decoupled Model, establish that the AoI problem is indexable and derive the Whittle's Index policy. A detailed introduction to the Whittle's Index policy can be found in \cite{RMAB,RMAB_book}. \\

%\subsubsection{\textbf{Decoupled Model}}\label{sec.Decoupled_Model}
\subsection{Decoupled Model}\label{sec.Decoupled_Model}
The Decoupled Model is formulated as a Dynamic Program (DP). For presenting the cost-to-go function, which is central to the DP, we first introduce the state, control, transition and objective of the model. Then, using the expression of the cost-to-go, we establish in Proposition~\ref{prop.Decoupled} a key property of the Decoupled Model which is used in the characterization of its optimal scheduling policy. Since the Decoupled Model considers only a single client, hereafter in this section, we omit the client index $i$.

Consider the network model from Sec. \ref{sec.Model} with $M=1$ client. Recall that at the beginning of every frame, the BS generates a new packet that replaces any undelivered packet from previous frame. Let $s_{k,n}$ represent the delivery status of this new packet at the beginning of slot $(k,n)$. If the packet has been successfully delivered to the client by the beginning of slot $(k,n)$, then $s_{k,n}=1$, and if the packet is still undelivered, $s_{k,n}=0$. The tuple $(s_{k,n},h_{k})$ depicts the system state, for it provides a complete characterization of the network at slot $(k,n)$. 

Denote by $u_{k,n}$ the scheduling decision in time slot $(k,n)$. This quantity is equal to $1$ if the BS transmits the packet in slot $(k,n)$, and $u_{k,n}=0$ otherwise. Since the BS can only transmit undelivered packets, if $s_{k,n}=1$, then the decision must be to idle $u_{k,n}=0$.

State transitions are different at frame boundaries and within frames. At the boundary between frames $k-1$ and $k$, namely, in the transition from slot $(k-1,T)$ to slot $(k,1)$, each component of the system state $(s_{k,n},h_k)$ evolves in a distinct way. Since the BS generates a new packet at the beginning of slot $(k,1)$, we have $s_{k,1}=0$ for every frame $k$. Whereas, the evolution of $h_k$ is divided into two cases: i) case $u_{k-1,T}=1$, when the BS transmits the packet during slot $(k-1,T)$, the value of $h_k$ depends on the feedback signal, as follows
\begin{align}
P(h_k=h_{k-1}+1|h_{k-1})=1-p \; ; & \quad\mbox{[failure]} \\%\label{eq.frametransition1}\\
P(h_k=1|h_{k-1})=p \; ; & \quad\mbox{[success]} %\label{eq.frametransition2}
\end{align}
and ii) case $u_{k-1,T}=0$, when the BS idles, the transition is deterministic
\begin{align}
P(h_k=h_{k-1}+1|h_{k-1})=1 \; , & \quad\mbox{if $s_{k-1,T}=0$ ;} \\
P(h_k=1|h_{k-1})=1 \; , & \quad\mbox{if $s_{k-1,T}=1$ .}
\end{align}

For state transitions that occur within the same frame, the quantity $h_k$ remains fixed and $s_{k,n}$ evolves
%At state transitions from slot $t-1$ to slot $t$ which are not at frame boundaries, namely $t \neq kT$, the vector $\vec{h}_{k}$ remains fixed. The evolution of $s_t$ 
according to the scheduling decisions and feedback signals. If the BS idles during slot $(k,n-1)$, the delivery status of the packet does not change, thus
\begin{equation} %\label{eq.transition1}
P(s_{k,n}=s_{k,n-1}|s_{k,n-1})=1 \; .
\end{equation}
If the BS transmits during slot $(k,n-1)$, the value of $s_{k,n}$ depends upon the outcome of the transmission, as given by 
\begin{align}
P(s_{k,n}=0|s_{k,n-1})=1-p \; ; & \quad\mbox{[failure]} \\%\label{eq.transition2}\\
P(s_{k,n}=1 |s_{k,n-1})=p \; . & \quad\mbox{[success]} %\label{eq.transition3}
\end{align}

The last concept to be discussed prior to the cost-to-go function is the objective. The objective function of the Decoupled Model, $\mathcal{J}_K^{\pi}$, is analogous to $J_K^\pi$ in \eqref{eq.Objective}, except that it represents a single client, introduces the service charge $C$ and evolves in slot increments (instead of frame increments). The expression for the objective function is given by
\begin{align}\label{eq.Objective_decoupled}
&\min_{\pi \in \Pi}\mathbb{E}\left[\mathcal{J}_K^{\pi}\right] \; , \\ 
\mbox{ where } &\mathcal{J}_K^{\pi}=\frac{1}{KT}\sum_{k=1}^{K}\sum_{n=1}^{T}\left( \alpha \; h_k + C \; u_{k,n} \right) \; . \nonumber 
\end{align}
%and $\mathbb{I}_{\{u_{k,n}=1\}}$ is an indicator function that takes the value $1$ if $\{u_{k,n}=1\}$ is true and takes the value $0$, otherwise.

%The expression in \eqref{eq.Objective_decoupled} presents two different a trade-off between the transmission cost $C$ and the aging cost $\alpha h_k$. With this last concept introduced, the cost-to-go function is presented next.

%The objective evolves in discrete steps and has an additive reward, making its DP formulation straightforward. 

%The optimization problem in \eqref{eq.Objective_decoupled} can be solved by applying the cost-to-go function $J_{k,n}(s_{k,n},h_k)$ iteratively, backwards in time. For notational simplicity, we omit the slot and frame indexes, $n$ and $k$, respectively, whenever possible. The initial value of the cost-to-go is $J_{K+1,1}(s,h)=0,\; \forall (s,h)$. The cost-to-go is of two types.

The cost-to-go function $\mathcal{J}_{k,n}(s_{k,n},h_k)$ associated with the optimization problem in \eqref{eq.Objective_decoupled} has two forms. For the \emph{last} slot of any frame $k$, namely slot $(k,T)$, the cost-to-go is expressed as
%\mathcal{J}_{k,T}(s_{k,T},h_k)=\alpha h_k + \min_{u_{k,T}} \left\{ C \; \mathbb{I}_{\{u_{k,T}=i\}} +  \mathbb{E}[\mathcal{J}_{k+1,1}(0,h_{k+1})] \right\}
\begin{align}\label{eq.recurssion1}
\mathcal{J}_{k,T}(s_{k,T}&,h_k) = \alpha h_k + \nonumber\\
+&\min_{u_{k,T}\in\{0,1\}} \left\{ C \; u_{k,T} +  \mathbb{E}[\mathcal{J}_{k+1,1}(0,h_{k+1})] \right\} \; ,
\end{align}
and for slots \emph{other than the last}, we have
%\mathcal{J}_{k,n}(s_{k,n},h_k)=\alpha h_k + \min_{u_{k,n}} \left\{ C \; \mathbb{I}_{\{u_{k,n}=i\}} +  \mathbb{E}[\mathcal{J}_{k,n+1}(s_{k,n+1},h_k)] \right\}
\begin{align}\label{eq.recurssion2}
\mathcal{J}_{k,n}(s_{k,n}&,h_k)=\alpha h_k + \nonumber\\
+&\min_{u_{k,n}\in\{0,1\}} \left\{ C \; u_{k,n} +  \mathbb{E}[\mathcal{J}_{k,n+1}(s_{k,n+1},h_k)] \right\} \; .
\end{align}

%Denote $\pi^*$ as the optimal scheduling policy for the Decoupled Model. Proposition~\ref{prop.Decoupled} uses \eqref{eq.recurssion1} and \eqref{eq.recurssion2} to establish an important feature of $\pi^*$. This feature is key for finding the solution

Given a network setup $(K,T,p,\alpha,h_1,C)$, it is possible to use backward induction on \eqref{eq.recurssion1} and \eqref{eq.recurssion2} to compute the optimal scheduling policy $\pi^*$ for the Decoupled Model. However, for the purpose of designing the Index policy, it is not sufficient to provide an algorithm that computes the optimal policy. The Index policy is based on a complete characterization of $\pi^*$. Proposition~\ref{prop.Decoupled} provides a key feature of the optimal scheduling policy which is used in its characterization.

%Given a network setup $(K,T,p,\alpha,h_1,C)$, it is possible to use backward induction on \eqref{eq.recurssion1} and \eqref{eq.recurssion2} to find the optimal scheduling policy $\pi^*$ for the Decoupled Model. However, for the purpose of designing the index policy, this method does not provide information about the features of $\pi^*$ which are necessary for the design of the Index policy. Proposition~\ref{prop.Decoupled} describes a key property of $\pi^*$ which is used for its

%For extracting features of $\pi^*$, we analyze the expressions \eqref{eq.recurssion1} and \eqref{eq.recurssion2}. Proposition~\ref{prop.Decoupled} establishes a class of policies $\Gamma$, such that $\pi^* \in \Gamma$.

\begin{proposition}\label{prop.Decoupled}
Consider the Decoupled Model and its optimal scheduling policy $\pi^*$. During any frame $k$, the optimal policy either: (i) idles in every slot; or (ii) transmits until the packet is delivered or the frame ends.
\end{proposition}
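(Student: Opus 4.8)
The plan is to work \emph{backward} on the cost-to-go recursions \eqref{eq.recurssion1} and \eqref{eq.recurssion2}, exploiting that within a single frame $k$ the age $h_k$ is frozen, so that the only part of the state that changes is the delivery status $s_{k,n}$ together with the slot index $n$. The central object I would track is the \emph{value of a delivery} $\delta_{k,n}:=\mathcal{J}_{k,n}(0,h_k)-\mathcal{J}_{k,n}(1,h_k)$, i.e.\ the reduction in cost-to-go obtained when the current packet is already delivered. Since a delivered packet forces idling, \eqref{eq.recurssion2} gives the identity $\mathcal{J}_{k,n}(1,h_k)=\alpha h_k+\mathcal{J}_{k,n+1}(1,h_k)$; the interesting branch is $s_{k,n}=0$, where comparing the idle and transmit terms inside the minimization in \eqref{eq.recurssion2} shows that transmitting is optimal at slot $n$ if and only if $C\le p\,\delta_{k,n+1}$.

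Next I would derive a one-step recursion for $\delta_{k,n}$ by substituting the optimal action back into \eqref{eq.recurssion2}: this yields $\delta_{k,n}=\delta_{k,n+1}$ whenever idling is optimal at slot $n$, and $\delta_{k,n}=C+(1-p)\,\delta_{k,n+1}$ whenever transmitting is optimal. At the frame boundary, applying \eqref{eq.recurssion1} with $\Phi_0:=\mathbb{E}[\mathcal{J}_{k+1,1}(0,h_k+1)]$ and $\Phi_1:=\mathbb{E}[\mathcal{J}_{k+1,1}(0,1)]$ shows that the same rule holds at the last slot with terminal value $\delta_{k,T+1}:=\Phi_0-\Phi_1$, which encapsulates the entire optimized continuation from frame $k+1$ onward.

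The key step is then a monotonicity (propagation) argument showing that the optimal decision is \emph{the same in every slot of the frame}. If transmitting is optimal at slot $n$, then $C\le p\,\delta_{k,n+1}$, and substituting $\delta_{k,n}=C+(1-p)\,\delta_{k,n+1}$ gives $p\,\delta_{k,n}=pC+(1-p)\,p\,\delta_{k,n+1}\ge pC+(1-p)C=C$, so transmitting is again optimal at slot $n-1$. Symmetrically, if idling is optimal at slot $n$ then $\delta_{k,n}=\delta_{k,n+1}$ and $C>p\,\delta_{k,n+1}=p\,\delta_{k,n}$, so idling is optimal at slot $n-1$. Hence the action taken at slot $T$ (idle if $C>p\,\delta_{k,T+1}$, transmit otherwise) is replicated at every earlier slot, which is exactly the dichotomy in the statement: case (i) when $C>p\,\delta_{k,T+1}$, and case (ii) when $C\le p\,\delta_{k,T+1}$, where in case (ii) the forced idling after a success accounts for the ``until the packet is delivered'' clause.

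The conceptual reason behind this all-or-nothing structure, and a useful sanity check, is that because the age is updated only at frame boundaries, \emph{when} a delivery occurs inside a frame is immaterial; only whether it occurs by the frame's end matters, and it does so solely through $\delta_{k,T+1}$. Consequently the expected within-frame cost depends only on the \emph{number} of transmission attempts $r$, and equals an affine, hence monotone, function of $1-(1-p)^{r}$, whose minimizer over $r\in\{0,1,\dots,T\}$ is $r=0$ or $r=T$. I expect the main obstacle to be purely bookkeeping: correctly isolating $\delta_{k,n}$ across the two distinct forms of the cost-to-go in \eqref{eq.recurssion1} and \eqref{eq.recurssion2} and pinning down the boundary term $\delta_{k,T+1}$, after which the propagation inequality closes the argument.
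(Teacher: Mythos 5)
Your argument is correct, and its skeleton is the same as the paper's: analyze the within-frame backward induction, characterize the decision at the last slot via the sign of $C-p\left[\mathcal{J}_{k+1,1}(0,h_k+1)-\mathcal{J}_{k+1,1}(0,1)\right]$, and propagate that decision backward through the frame. The difference is that the paper merely \emph{asserts} the propagation step (``it is possible to use mathematical induction to establish that\dots''), whereas you actually supply it: introducing $\delta_{k,n}=\mathcal{J}_{k,n}(0,h_k)-\mathcal{J}_{k,n}(1,h_k)$, deriving the recursion $\delta_{k,n}=\delta_{k,n+1}$ (idle) versus $\delta_{k,n}=C+(1-p)\delta_{k,n+1}$ (transmit), and checking that $p\,\delta_{k,n}\geq C$ whenever $p\,\delta_{k,n+1}\geq C$ (and conversely for idling) is precisely the missing induction, and your computations are right. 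Your closing ``sanity check'' is in fact a genuinely different and arguably cleaner route: since $h_k$ is frozen within the frame and the continuation depends only on whether delivery occurs by the frame's end, any within-frame policy is summarized by the number $r$ of slots at which it would transmit if still undelivered, and the expected cost is affine in $1-(1-p)^r$, hence minimized at $r=0$ or $r=T$; this one-shot argument avoids the slot-by-slot induction entirely and directly reproduces the threshold expression \eqref{eq.expression_Decoupled}. The only cosmetic discrepancy is tie-breaking: you break ties toward transmitting while the paper breaks them toward idling; in the degenerate case $C=p\,\delta_{k,n+1}$ both actions are optimal at every slot, so the dichotomy should be read as ``some optimal policy has this form,'' which is all that is needed downstream.
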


\begin{proof}
The proof follows from the analysis of the backward induction algorithm on \eqref{eq.recurssion1} and \eqref{eq.recurssion2}. For this proof, we assume that the algorithm has been running and that the values of $\mathcal{J}_{k+1,1}(s_{k+1,1},h_{k+1})$ for all possible system states are known. The proof is centered around the backward induction during frame $k$ and for a fixed value of $h_k$.

%For showing that during any frame $k$, the optimal policy can only be of two types, we analyze the equations that come out from the backward induction on \eqref{eq.recurssion1} and \eqref{eq.recurssion2}. The proof is centered around the computation of the optimal policy during frame $k$ and for a fixed value of $h_k$. 

First, we analyze the (trivial) case in which the packet has already been delivered by the beginning of slot $(k,n)$, i.e. $s_{k,n}=1$. In this case, the optimal scheduling policy always idles. 

For the more interesting case of an undelivered packet, we start by analyzing the last slot of the frame, namely slot $(k,T)$. It follows from the cost-to-go in \eqref{eq.recurssion1} that the optimal scheduling decision $u_{k,T}^*$ depends only on the expression
\begin{equation}\label{eq.expression_Decoupled}
C-p \left[ \mathcal{J}_{k+1,1}(0,h_k+1) - \mathcal{J}_{k+1,1}(0,1)\right] \; .
\end{equation}
The optimal policy idles in slot $(k,T)$ if \eqref{eq.expression_Decoupled} is non-negative and transmits if \eqref{eq.expression_Decoupled} is negative. %Recall that during the computation of the optimal policy for frame $k$, the values of $\mathcal{J}_{k+1,1}(s,h)$ for all possible system states are known.
By analyzing the cost-to-go function in \eqref{eq.recurssion2}, which is associated with the optimal scheduling decisions in the remaining slots of frame $k$, it is possible to use mathematical induction to establish that:
\begin{itemize}
\item if it is optimal to transmit in slot $(k,n+1)$, then it is also optimal to transmit in slot $(k,n)$; and
\item if it is optimal to idle in slot $(k,n+1)$, then it is also optimal to idle in slot $(k,n)$.
\end{itemize}

%Thus, by solving the tail sub-problems for all possible system states $(s,h)$, 
We conclude that if \eqref{eq.expression_Decoupled} is non-negative, the optimal policy idles in every slot of frame $k$, and if \eqref{eq.expression_Decoupled} is negative, the optimal policy transmits until the packet is delivered or until frame $k$ ends. %The complete proof is in Appendix~\ref{appendix_D}.
\end{proof}

Let $\Gamma \subset \Pi$ be the subclass of all scheduling policies that satisfy Proposition~\ref{prop.Decoupled}. Since the optimal policy is such that $\pi^* \in \Gamma$, we can reduce the scope of the Decoupled Model to policies in $\Gamma$ without loss of optimality. In the following section, we redefine the Decoupled Model so that scheduling decisions are made only once per frame, rather than once per slot. This new model is denoted Frame-Based Decoupled Model.\\

%Define $\Gamma \subset \Pi$ as the subclass of all policies that fulfill the conditions in Proposition~\ref{prop.Decoupled}. Since $\pi^* \in \Gamma$, we can focus on policies that belong to $\Gamma$. The central property of the policies in $\Gamma$ is that scheduling decisions remain unchanged throughout a frame. Thus, we can redefine the Decoupled Model, so that scheduling decisions are made only once per frame, rather than once per slot. Next, we present the Frame-Based Decoupled Model and characterize its optimal stationary scheduling policy, which is, by definition, the same as the optimal policy $\pi^*$ for the Decoupled Model.

%\subsubsection{\textbf{Frame-Based Decoupled Model}}\label{sec.Frame_Decoupled}
\subsection{Frame-Based Decoupled Model}\label{sec.Frame_Decoupled}
%Consider the network model with $M=1$ and service charge $C$. 
Denote by $u_{k}$ the scheduling decision at the beginning of frame $k$. We let $u_{k}=0$ if the BS idles in every slot of frame $k$ and $u_{k}=1$ if the BS transmits repeatedly until the packet is delivered or the frame ends. 

Since this discrete-time decision problem evolves in frames and every frame begins with $s_{k,1}=0$, we can fully represent the system state by $h_k$. State transitions follow the evolution of $h_{k}$ in \eqref{eq.evolution_h} and can be divided into two cases: i) case $u_{k-1}=0$, when the BS idles during frame $k-1$
\begin{equation}%\label{eq.frametransition5}
P(h_k=h_{k-1}+1|h_{k-1})=1 \; ,
\end{equation}
and ii) case $u_{k-1}=1$, when the BS transmits, the state transition depends on whether the packet was delivered or discarded during frame $k-1$, as follows
\begin{align}
P(h_k=h_{k-1}+1|h_{k-1})=(1-p)^T \; ; & \quad\mbox{[discarded]}  \\%\label{eq.frametransition6}\\
P(h_k=1|h_{k-1})=1-(1-p)^T \; . & \quad\mbox{[delivered]}  %\label{eq.frametransition7}
\end{align}

The objective function of the Frame-Based Decoupled Model, $\hat{\mathcal{J}}_K^{\pi}$, is given by
\begin{equation}\label{eq.Objective_decoupled_frame}
\min_{\pi \in \Gamma}\mathbb{E}\left[\hat{\mathcal{J}}_K^{\pi}\right] \; , \mbox{ where } \hat{\mathcal{J}}_K^{\pi}=\frac{1}{KT}\sum_{k=1}^{K}\left( T \alpha \; h_k + \hat{C} \; u_{k} \right) \; ,
\end{equation}
and $\hat{C}=C(1-(1-p)^T)/p$ is the expected value of the service charge incurred during a frame in which the BS transmits. By construction, the Frame-Based Decoupled Model is equivalent to the Decoupled Model when the optimization is carried over the policies in $\Gamma$. Thus, both models have the same optimal scheduling policy $\pi^*\in\Gamma\subset\Pi$. Next, we characterize $\pi^*$ for the infinite-horizon problem.

%\subsection{Bellman equations}\label{sec.Frame_Decoupled_Sol}
%In this section, we characterize the optimal \emph{stationary} scheduling policy for the Frame-Based Decoupled Model. 

Consider the Frame-Based Decoupled Model over an infinite-horizon with $K \rightarrow \infty$. The state and control of the system in \emph{steady-state} are denoted $h$ and $u$, respectively. Then, Bellman equations are given by $S(1)=0$ and
\begin{align}\label{eq.Bellman}
S(h)+&\lambda T=\min\{ T \alpha h+S(h+1) \; ; \\ &\hat{C} + T \alpha h + (1-p)^TS(h+1) +(1-(1-p)^T)S(1) \} \; , \nonumber
\end{align}
for all $h \in \{1,2,\cdots\}$, where $\lambda$ is the optimal average cost and $S(h)$ is the differential cost-to-go function. %After some algebra, \eqref{eq.Bellman_preliminar} can be expressed as follows
%\begin{equation}\label{eq.Bellman}
%S(h)+\lambda T=T \alpha h +\min\left\{ \begin{array}{c} S(h+1) \; ; \\ \hat{C} + (1-p)^TS(h+1) \end{array} \right\} \; .
%\end{equation}
Notice that the upper part of the minimization in \eqref{eq.Bellman} is associated with choosing $u=0$, i.e. idling in every slot of the frame, and the lower part with $u=1$, i.e. transmitting until the packet is delivered or the frame ends, with ties being broken in favor to idling. %The optimal average cost captures the ratio of $\hat{\mathcal{J}}_K^\pi$ to $K$ when $K$ grows to infinity. 
The stationary scheduling policy that solves Bellman equations\footnote{In general, Expected Average Cost problems over an infinite-horizon and with \emph{countably infinite state space} are challenging to address. For the Frame-Based Decoupled Model, it can be shown that \cite[Proposition 5.6.1]{dynProg} is satisfied under some additional conditions on $\Gamma$. The results in \cite[Proposition 5.6.1]{dynProg} and Proposition~\ref{prop.Threshold} are sufficient to establish the optimality of the stationary scheduling policy $\pi^*$.} is given in Proposition~\ref{prop.Threshold}.

%\begin{equation}\label{eq.Bellman_1}
%S(h)+\lambda=T \alpha h+S(h+1)+\frac{\hat{C}}{C} \min\left\{ 0 ; C - pS(h+1) \right\}
%\end{equation}

\begin{proposition}[Threshold Policy]\label{prop.Threshold}
Consider the Frame-Based Decoupled Model over an infinite-horizon. The stationary scheduling policy $\pi^*$ that solves Bellman equations \eqref{eq.Bellman} is a threshold policy in which the BS transmits during frames that have $h > H-1$ and idles when $1\leq h\leq H-1$, where the threshold $H$ is given by
\begin{equation}\label{eq.H}
H=\left\lfloor 1-Z + \sqrt{Z^2+\frac{2C}{pT\alpha}} \right\rfloor \; ,
\end{equation}
and the value of $Z$ is
\begin{equation}
Z=\frac{1}{2}+\frac{(1-p)^T}{(1-(1-p)^T)} \; .
\end{equation}
\end{proposition}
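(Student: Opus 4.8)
The plan is to exploit the structure already extracted in Proposition~\ref{prop.Decoupled}, which lets us treat the problem as a one-dimensional average-cost MDP on the age $h$, and to proceed in two stages: first establish that the optimal stationary policy is of threshold type, then pin down the threshold by a renewal-reward computation. Write $q:=(1-p)^T$ for the per-frame discard probability and substitute $S(1)=0$ into \eqref{eq.Bellman}. The two branches of the minimization then reduce to comparing the idle cost $S(h+1)$ against the transmit cost $\hat{C}+qS(h+1)$, so transmitting is optimal exactly when $(1-q)S(h+1)>\hat{C}$, i.e. when $S(h+1)>\hat{C}/(1-q)=C/p$ (ties resolved in favor of idling, as in the statement). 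Thus the decision at age $h$ is governed by a single comparison of $S(h+1)$ against the constant $C/p$.

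To convert this comparison into a threshold on $h$, I would show that the differential cost-to-go $S(\cdot)$ is non-decreasing. The cleanest route is a coupling argument: run two copies of the Decoupled Model started at ages $h<h'$, feed them the same scheduling actions and the same channel outcomes, and observe that the age processes remain ordered for all time (each age either increments or resets to $1$ simultaneously). Since the per-frame cost $T\alpha h$ is increasing in $h$, the higher-age copy accumulates at least as much cost, which yields $S(h)\le S(h')$. Combined with the comparison above, monotonicity produces a threshold $H$ such that the optimal policy transmits iff $h\ge H$; existence of a bounded solution $(\lambda,S)$ to \eqref{eq.Bellman} on the countable state space is supplied by the cited \cite[Prop.~5.6.1]{dynProg} under the stated conditions on $\Gamma$.

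With the threshold structure in hand, the age process under a threshold-$H$ policy regenerates at state $1$, so I would evaluate the average cost by renewal-reward. A cycle consists of $H-1$ idle frames with ages $1,\dots,H-1$, followed by a geometric number of transmit frames (success probability $1-q$), during which the age keeps climbing from $H$. Summing the AoI cost $T\alpha h$ over a cycle, adding the expected service charge (which collapses to $C/p$ per cycle, since $\hat{C}/(1-q)=C/p$), and dividing by the expected cycle length $H-1+r$ with $r:=1/(1-q)$, gives a closed form $\lambda(H)T=\bar{c}(H)$ as a ratio whose numerator is quadratic in $H$. The same expression can instead be obtained directly from \eqref{eq.Bellman}: $S$ is concave-quadratic on the idle region $1\le h\le H$ and affine (slope $T\alpha r$) on the transmit region $h\ge H$, and matching the two forms at the boundary $h=H$ reproduces $\bar{c}(H)$.

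Finally I would optimize over the integer threshold. Because $\bar{c}(H)$ is unimodal, the optimizer is the smallest $H\ge 1$ with $\bar{c}(H)\le \bar{c}(H+1)$. Forming this first difference, the quadratic numerators telescope and the inequality simplifies remarkably to $\frac{C}{pT\alpha}\le \frac{H(H-1)}{2}+Hr$. Completing the square and using the identity $Z=r-\tfrac{1}{2}$ (immediate from $Z=\tfrac{1}{2}+q/(1-q)$) turns this into $(H+Z)^2\ge Z^2+\frac{2C}{pT\alpha}$, i.e. $H\ge \sqrt{Z^2+2C/(pT\alpha)}-Z$; the least such integer, with the indifferent case assigned to idling, is exactly $H=\lfloor 1-Z+\sqrt{Z^2+2C/(pT\alpha)}\rfloor$, matching \eqref{eq.H}. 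The main obstacle is the first stage, namely rigorously establishing monotonicity of $S$ (equivalently, that the optimum is genuinely threshold rather than merely that threshold policies are natural candidates); once that is secured, both the renewal-reward evaluation and the integer minimization are routine algebra, and it is the surprising cancellation in the first-difference condition that delivers the clean closed form.
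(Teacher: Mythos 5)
Your proposal is correct and arrives at exactly \eqref{eq.H}, but by a genuinely different route from the paper's. The paper proceeds by guess-and-verify: it \emph{assumes} the threshold form, solves \eqref{eq.Bellman} in closed form on each region (a telescoping geometric sum gives an affine $S$ on the transmit side, a backward recursion gives a concave quadratic on the idle side), pins down $H$ and $\lambda$ from the indifference point $S(H+\gamma)=\hat{C}/(1-(1-p)^T)$ with $\gamma\in[0,1)$ together with the normalization $S(1)=0$, and finally checks consistency via $S(H)\le \hat{C}/(1-(1-p)^T)<S(H+1)$, which is also the verification-theorem certificate that the exhibited policy is optimal. You instead (i) reduce the Bellman comparison to $S(h+1)$ versus $C/p$ and derive the threshold structure from monotonicity of $S$ by coupling, and (ii) evaluate $\lambda(H)$ for each threshold by renewal--reward and minimize over integers via the first difference. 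I checked your key computations: with $q=(1-p)^T$ and $r=1/(1-q)$, the per-cycle service charge is indeed $\hat{C}r=C/p$, your renewal formula agrees with the paper's \eqref{eq.L_proof} after using $\mathbb{E}[N(N+1)]/2=r^2$ for $N$ geometric, the condition $\lambda(H)\le\lambda(H+1)$ does telescope to $C/(pT\alpha)\le H(H-1)/2+Hr$, and with $Z=r-\tfrac12$ this gives $(H+Z)^2\ge Z^2+2C/(pT\alpha)$ and the stated floor, with the tie correctly assigned to idling. What each approach buys: your integer first-difference optimization is cleaner than the paper's manipulation of the fractional indifference parameter $\gamma$, and the renewal--reward identity makes $\lambda(H)$ transparent; the paper's consistency check, however, delivers optimality over all of $\Gamma$ essentially for free, whereas your route still owes a rigorous proof that $S$ is non-decreasing (e.g., show the discounted value functions are non-decreasing by induction on value iteration and pass to the vanishing-discount limit), which you correctly flag as the one non-routine step. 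Absent that, your step (ii) alone only identifies the best threshold policy, so you would need either (i) made rigorous or a paper-style verification that your matched quadratic/affine $S$ satisfies \eqref{eq.Bellman} --- your boundary-matching remark essentially supplies the latter.
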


The proof of Proposition~\ref{prop.Threshold} is in Appendix~\ref{app.Prop_Threshold} of the supplementary material. Intuitively, we expect that the optimal scheduling decision is to transmit during frames in which $h$ is high (attempting to reduce the value of $h$) and to idle when $h$ is low (avoiding the service charge $\hat{C}$). Moreover, if the optimal decision is to transmit when the state is $h=H$, it is natural to expect that for all $h \geq H$ the optimal decision is also to transmit. This behavior characterizes a threshold policy. In Appendix~\ref{app.Prop_Threshold}, we demonstrate this behavior and find the minimum integer $H$ for which the optimal decision is to transmit. With the complete characterization of $\pi^*$ provided in Proposition~\ref{prop.Threshold}, we have the necessary background to establish indexability and to obtain the Whittle's Index policy for the AoI minimization problem.\\

%\subsubsection{\textbf{Indexability and Index Policy}}\label{sec.Decoupled_Index}
\subsection{Indexability and Index Policy}\label{sec.Decoupled_Index}
Consider the Decoupled Model and its optimal scheduling policy $\pi^*$. Let $\mathcal{P}(C)$ be the set of states $h$ for which it is optimal to idle when the service charge is $C$, i.e. $\mathcal{P}(C)=\{h \in \mathbb{N} | h < H\}$. Note from \eqref{eq.H} that the threshold $H$ is a function of $C$. The definition of indexability is given next.

\begin{definition}[Indexability] The Decoupled Model associated with client $i$ is indexable if $\mathcal{P}(C)$ increases monotonically from $\emptyset$ to the entire state space, $\mathbb{N}$, as the service charge $C$ increases from $0$ to $+\infty$. The AoI minimization problem is indexable if the Decoupled Model is indexable for all clients $i$.
\end{definition}

The indexability of the Decoupled Model follows directly from the expression of $H$ in \eqref{eq.H}. Clearly, the threshold $H$ is monotonically increasing with $C$. Also, substituting $C=0$ yields $H=1$, which implies $\mathcal{P}(C)=\emptyset$, and the limit $C \rightarrow +\infty$ gives $H \rightarrow +\infty$ and, consequently, $\mathcal{P}(C)=\mathbb{N}$. Since this is true for the Decoupled Model associated with every client $i$, we conclude that the AoI minimization problem is indexable. Prior to introducing the Index policy, we define the Whittle's Index.

\begin{definition}[Index] Consider the Decoupled Model and denote by $C(h)$ the Whittle's Index in state $h$. Given indexability, $C(h)$ is the infimum service charge $C$ that makes both scheduling decisions (idle, transmit) equally desirable in state $h$.
\end{definition}
The closed-form expression for $C(h)$ comes from the fact that, for both scheduling decisions to be equally desirable in state $h$, the threshold must be $H=h+1$. Substituting $H=h+1$ into \eqref{eq.H} and isolating $C$, gives
%can be obtained by noting that, by the definition of the threshold, 
\begin{equation}
C(h)=p \alpha h \left[ h + \frac{1+(1-p)^T}{1-(1-p)^T} \right] \; .
\end{equation}

After establishing indexability and finding the closed-form expression for the Whittle's Index, we return to our original problem, with the BS transmitting packets to $M$ clients. Recall that there is no service charge in the original problem. The Whittle's Index policy is described next. %defined as follows. Let $C_i(h_{k,i})$ be the Index associated with client $i$ during frame $k$. 

%\begin{policy}\label{policy.Whittle}
\emph{Whittle's Index policy schedules in each slot $(k,n)$ a transmission to the client with highest value of}
\begin{equation}\label{eq.index}
C_i(h_{k,i})=p_i \alpha_i h_i \left[ h_i + \frac{1+(1-p_i)^T}{1-(1-p_i)^T} \right] \; ,
\end{equation}
\emph{that has an undelivered packet, with ties being broken arbitrarily.} %in favor of the client with lowest index $i$.
%\end{policy}

Denote the Whittle's Index policy as $WI$. By construction, the index $C_i(h_{k,i})$ represents the service charge that the network would be willing to pay in order to transmit a packet to client $i$ during frame $k$. Intuitively, by selecting the client with highest $C_i(h_{k,i})$, the Whittle's Index policy is transmitting the most valuable packet. Note that the Whittle's Index policy is similar to the Max-Weight policy despite the fact that they were developed using different methods. Both the Whittle's Index and Max-Weight policies have strong performances and both are equivalent to the Greedy policy when the network is symmetric, implying that $WI$ and $MW$ are AoI-optimal when $\alpha_i=\alpha$ and $p_i=p$. Next, we derive the performance guarantee $\rho^{WI}$ for the Whittle's Index policy.

\begin{theorem}[Performance of Whittle]\label{theo.performance_Whittle}
Consider a network $(M,T,p_i,\alpha_i)$ with an infinite time-horizon. The Whittle's Index policy is $\rho^{WI}$-optimal, where
\begin{equation}\label{eq.performance_Whittle}
\rho^{WI}=4 \frac{\displaystyle\left(\sum_{i=1}^{M}\sqrt{\frac{\mathbf{\widetilde{\alpha}_i}}{p_i}}\right)^2+(T-1)\sum_{i=1}^M\frac{\mathbf{\widetilde{\alpha}_i}}{p_i}}{\displaystyle\left(\sum_{i=1}^{M}\sqrt{\frac{\alpha_i}{p_i}}\right)^2+T\left(\sum_{i=1}^M\alpha_i\right)} \; ,
\end{equation}
and 
\begin{equation}
\mathbf{\widetilde{\alpha}_i} = \frac{\alpha_i}{2}\left(\frac{2}{1-(1-p_i)^T}+1\right)^2\; .
\end{equation}
\end{theorem}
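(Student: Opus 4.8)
The plan is to exploit the structural identity between the Whittle's Index and Max-Weight policies and thereby reuse the Lyapunov-drift analysis behind Theorem~\ref{theo.performance_MaxWeight}. Writing $\gamma_i:=(1+(1-p_i)^T)/(1-(1-p_i)^T)$, the index in \eqref{eq.index} is $C_i(h_{k,i})=p_i\alpha_ih_{k,i}(h_{k,i}+\gamma_i)$, which is \emph{exactly} the Max-Weight priority $p_i\alpha_ih_{k,i}(h_{k,i}+2)$ with the additive constant $2$ replaced by $\gamma_i\in[1,\infty)$. The difference between the two policies is thus confined to this single constant inside the quadratic, and the first step is to find a Lyapunov function for which Whittle's Index is the \emph{exact} one-frame drift minimizer, in the same way the quadratic $L(\vec h_k)$ in \eqref{eq.Lyapunov_function} makes Max-Weight the drift minimizer.

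First I would replace \eqref{eq.Lyapunov_function} by the shifted quadratic $\tilde L(\vec h_k)=\tfrac1M\sum_{i=1}^M\alpha_i\big(h_{k,i}+\tfrac{\gamma_i-2}{2}\big)^2$, which is nonnegative since $\gamma_i\ge1$ forces $h_{k,i}+\tfrac{\gamma_i-2}{2}\ge\tfrac12$. Repeating the computation \eqref{eq.h_MW}--\eqref{eq.Lyapunov_drift} with $f_i(h)=(h+\tfrac{\gamma_i-2}{2})^2$ yields the policy-dependent drift term $-\tfrac1M\sum_i\mathbb{E}[d_i^\pi(k)\,|\,\vec h_k]\,\alpha_ih_{k,i}(h_{k,i}+\gamma_i)$, together with the same linear-plus-constant remainder as in \eqref{eq.Lyapunov_drift} (now $2h_{k,i}+\gamma_i-1$ in place of $2h_{k,i}+1$). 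Maximizing the expected weighted throughput $\sum_i\mathbb{E}[d_i^\pi(k)\,|\,\vec h_k]\,\alpha_ih_{k,i}(h_{k,i}+\gamma_i)$ via \cite[Eq.(2)]{multicast} then shows that the drift-minimizing policy selects, in every slot, the undelivered client with largest $p_i\alpha_ih_{k,i}(h_{k,i}+\gamma_i)$, i.e.\ Whittle's Index. Crucially, the shift is linear, so it leaves the form of the selection rule unchanged.

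Next I would re-run the Max-Weight upper-bound argument of Appendix~\ref{app.Theo_performance_MaxWeight} using this shifted drift: bound $\Delta(\vec h_k)$ for Whittle by the drift of a conveniently parametrized stationary randomized comparison policy, complete the square in $h_{k,i}$, and optimize the comparison parameters together with a Cauchy--Schwarz step. Because the surviving quadratic weight is $h(h+\gamma_i)$ rather than $h(h+2)$, the constant that emerges from the completion of square is $\tfrac{\alpha_i}{2}(\gamma_i+2)^2$ in place of $\alpha_i$; since $\tfrac{2}{1-(1-p_i)^T}=\gamma_i+1$, this is precisely $\widetilde\alpha_i$. The result is an upper bound $U_B^{WI}$ of the same algebraic form as $U_B^{MW}$ but with every $\alpha_i$ in the numerator replaced by $\widetilde\alpha_i$; dividing by the universal lower bound $L_B$ in \eqref{eq.LowerBound}---whose denominator retains the true weights $\alpha_i$ because the objective \eqref{eq.Objective} is unchanged---gives \eqref{eq.performance_Whittle}.

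The main obstacle is the one the paper already flags for Max-Weight: Whittle's Index has neither a Round-Robin nor a renewal structure, so the bound must be extracted entirely from the drift inequality, and the delicate point is to carry the constants through the completion-of-square and Cauchy--Schwarz steps so that they collapse to the clean effective weight $\widetilde\alpha_i=\tfrac{\alpha_i}{2}(\gamma_i+2)^2$ rather than a messier expression; matching the optimal comparison policy to the shifted weight function is where the real work lies. A secondary point is justifying the telescoping/Fatou step for $\tilde L$ exactly as in the proof of Theorem~\ref{theo.LowerBound}, which is routine once nonnegativity and vanishing time-averaged increments are checked. As a sanity check, $\widetilde\alpha_i\ge\tfrac92\alpha_i$ for all $i$ (since $\gamma_i\ge1$), so $\rho^{WI}\ge\tfrac92\rho^{MW}$; this looser guarantee is an artifact of the drift bound rather than a genuine performance gap, consistent with the paper's observation that these worst-case ratios are not tight and that Whittle empirically matches Max-Weight.
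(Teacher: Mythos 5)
Your strategy departs from the paper's in one structural respect: you redesign the Lyapunov function, shifting each coordinate by $(\gamma_i-2)/2$ so that Whittle's Index becomes the \emph{exact} one-frame drift minimizer, whereas the paper keeps the unshifted quadratic $L(\vec{h}_k)$ of \eqref{eq.Lyapunov_function} and instead \emph{upper-bounds} its drift \eqref{eq.Lyapunov_drift} --- via the lossy step $\sum_{i}\mathbb{E}[d_i(k)|\vec{h}_k]\alpha_ih_{k,i}\gamma_i\le\sum_{i}\alpha_ih_{k,i}\gamma_i$ --- by an expression that Whittle's Index minimizes. Your drift computation is correct: with $f_i(h)=\bigl(h+\tfrac{\gamma_i-2}{2}\bigr)^2$ the policy-dependent term is indeed $-\mathbb{E}[d_i^\pi(k)|\vec{h}_k]\,\alpha_ih_{k,i}(h_{k,i}+\gamma_i)$, the remainder is $\alpha_i(2h_{k,i}+\gamma_i-1)$, and the nonnegativity of the shifted function (needed for the telescoping step) holds as you note. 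Up to that point the route is sound and arguably cleaner than the paper's.

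The gap is in your third paragraph: the effective weight $\widetilde{\alpha}_i=\tfrac{\alpha_i}{2}(\gamma_i+2)^2$ does \emph{not} emerge from completing the square in \emph{your} shifted drift. In your setup the linear-in-$h$ remainder has coefficient $2$, so the square is centered at $\tfrac{1}{\mathbb{E}[d_i]}-\tfrac{\gamma_i}{2}$ and the leftover constant is $\mathbb{E}[d_i]\bigl(\tfrac{1}{\mathbb{E}[d_i]}-\tfrac{\gamma_i}{2}\bigr)^2+\gamma_i-1$; using $\mathbb{E}[d_i]\le 1-(1-p_i)^T$ one checks that both the center and this constant are bounded by small absolute multiples of $1/\mathbb{E}[d_i]$, so carrying them through the Cauchy--Schwarz, Jensen and telescoping chain yields a bound of the form $\tfrac{c}{M}\sum_i\alpha_i/\mathbb{E}[d_i]$ with $c$ a universal constant --- an effective weight proportional to $\alpha_i$, not to $\alpha_i(\gamma_i+2)^2$. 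In the paper, the factor $(2+\gamma_i)^2$ arises precisely from the slack you eliminated: the lossy bounding step inflates the linear remainder from $2h_{k,i}$ to $(2+\gamma_i)h_{k,i}$, which moves the center of the square to $\tfrac{2+\gamma_i}{2\mathbb{E}[d_i]}-\tfrac{\gamma_i}{2}$ and is what produces $\widetilde{\alpha}_i$. Consequently your route, executed honestly, proves a strictly \emph{tighter} upper bound than $U_B^{WI}$ --- which still implies the theorem a fortiori, but not by the algebra you describe, and it does not reproduce \eqref{eq.performance_Whittle}. Your own sanity check should have been the warning sign: it would be odd for an exact drift minimization to yield a guarantee a factor $9/2$ worse than the lossy Max-Weight analysis. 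To land on the paper's stated formula you must either adopt its drift upper-bounding step on the unshifted Lyapunov function, or else state and prove the stronger bound your construction actually delivers and observe that the stated $\rho^{WI}$ follows from it.
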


To find the expression for the performance guarantee of the Whittle's Index policy $\rho^{WI}$ in \eqref{eq.performance_Whittle}, we use similar arguments to the ones for deriving $\rho^{MW}$. The proof of Theorem~\ref{theo.performance_Whittle} is in Appendix~\ref{app.Theo_performance_Whittle} of the supplementary material. Next, we evaluate the performance of the four low-complexity scheduling policies discussed in this paper using MATLAB simulations.
%\newpage
\section{SIMULATION RESULTS}\label{sec.Simulation}
In this section, we evaluate the performance of the scheduling policies in terms of the Expected Weighted Sum Age of Information in \eqref{eq.EWSAoI}. We compare five scheduling policies: i) Greedy policy; ii) Randomized policy with $\beta_i=\sqrt{\alpha_i/p_i}$; iii) Max-Weight policy; iv) Whittle's Index policy and v) the optimal Dynamic Program. The numerical results associated with the first four policies are \emph{simulations}, while the results associated with the Dynamic Program are \emph{computations} of EWSAoI obtained by applying Value Iteration to the objective function \eqref{eq.Objective}. By definition, the Dynamic Program yields the optimal performance.

Figs.~\ref{fig.Sym} and \ref{fig.Gen} evaluate the scheduling policies in a variety of network settings. In Fig. \ref{fig.Sym}, we consider a two-user \emph{symmetric} network with $T=5$ slots in a frame, a total of $K=150$ frames and both clients having the same weight $\alpha_1=\alpha_2=1$ and channel reliability $p_1=p_2 \in \{1/15,\cdots,14/15\}$. In Fig. \ref{fig.Gen}, we consider a two-user \emph{non-symmetric} network with $K=200$, $p_1=2/3$, $p_2=1/10$, $T \in \{1,\cdots,10\}$ and both clients having $\alpha_1=\alpha_2=1$. The initial vector is $\vec{h}_1=[1,1,\cdots,1]^T$ in all simulations. %Recall that the service charge $C$ is used only for the purpose of deriving the Index policy. Since $C$ is not part of the AoI minimization problem, it is not present in the numerical results shown in this section.

\begin{figure}[b!]
\begin{center}
\includegraphics[height=5.2cm]{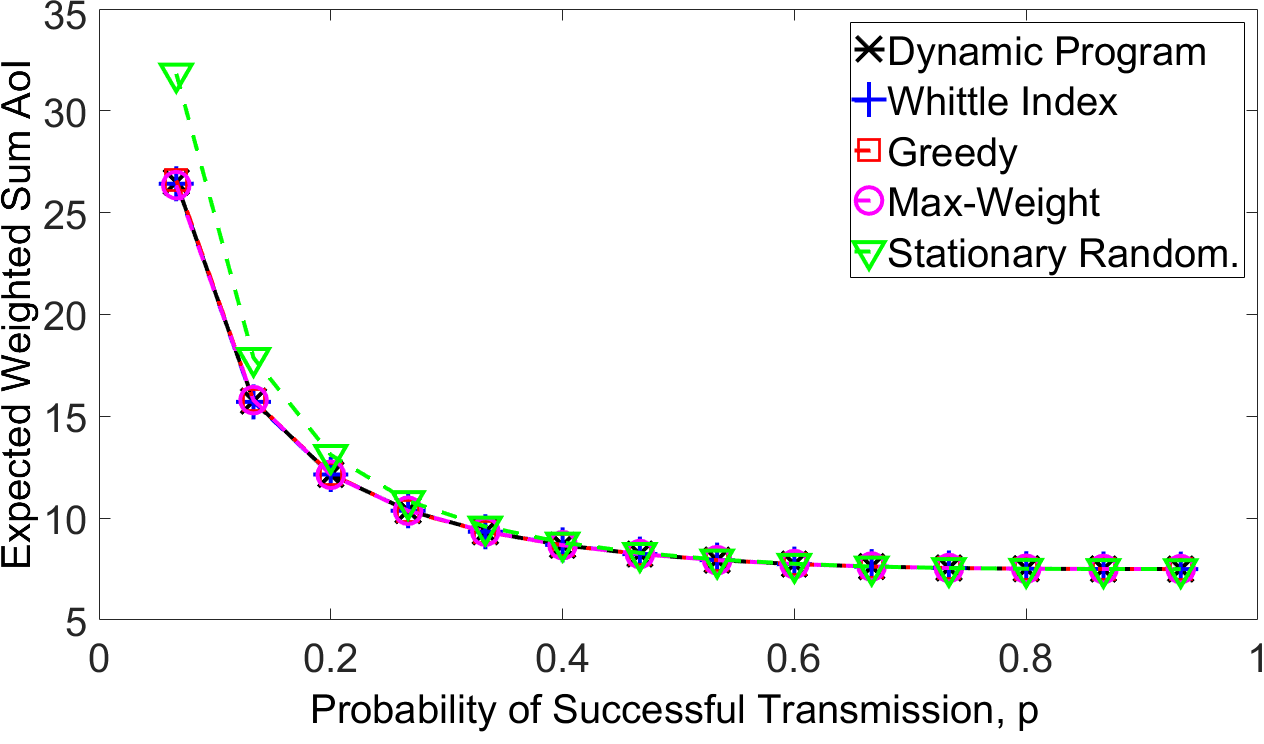}
\end{center}
\caption{Two-user symmetric network with $T=5, K=150, \alpha_i=1, p_i=p, \;\forall i$. The simulation result for each policy and for each value of $p$ is an average over $1,000$ runs.}\label{fig.Sym}
\end{figure}

\begin{figure}[b!]
\begin{center}
\includegraphics[height=5.2cm]{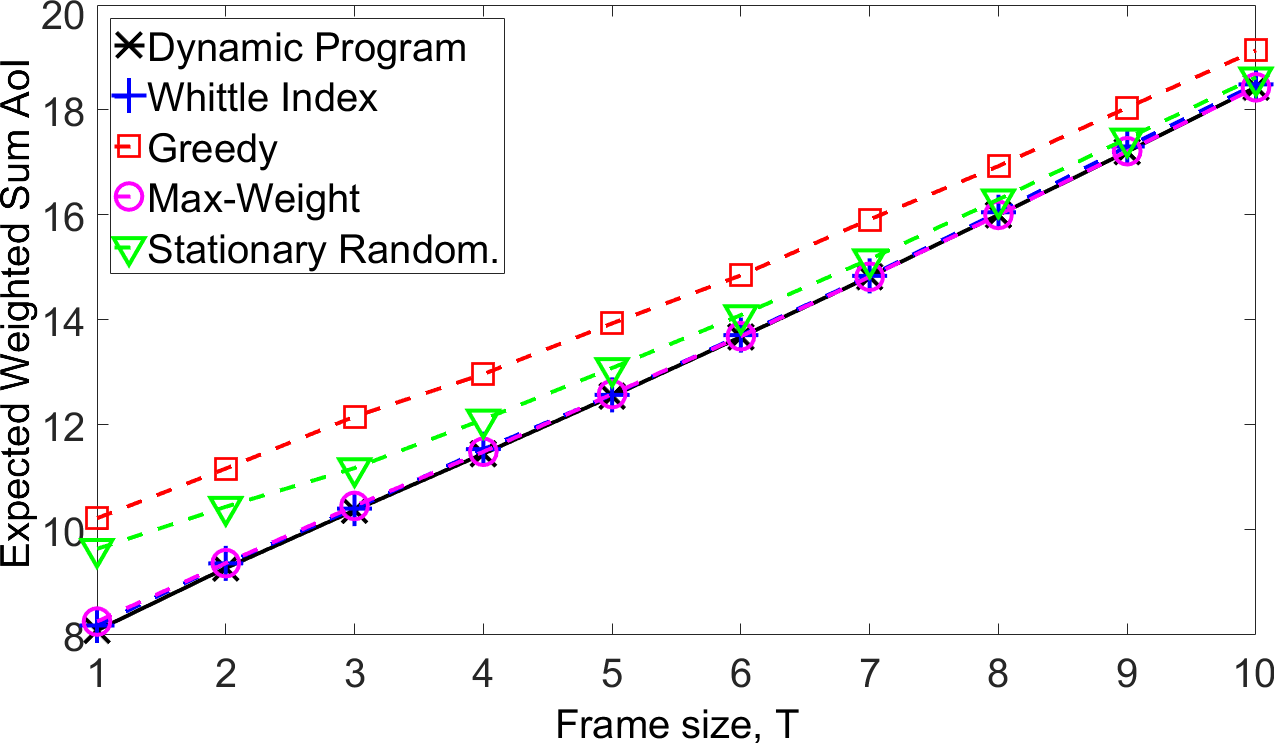}
\end{center}
\caption{Two-user general network with $K=200, \alpha_i=1, p_1=2/3, p_2=1/10, \;\forall i$. The simulation result for each policy and for each value of $T$ is an average over $1,000$ runs.}\label{fig.Gen}
\end{figure}

Figs.~\ref{fig.Gen_client} and \ref{fig.Gen_index} display the performance of the scheduling policies for larger networks. Due to the high computation complexity associated with the Dynamic Program, we show the Lower Bound $L_B$ from \eqref{eq.LowerBound} instead. In Fig.~\ref{fig.Gen_client}, we consider a network with an increasing number of clients $M \in \{5,10,\cdots,45,50\}$, $T=2$ slots in a frame, a total of $K=50,000$ frames, channel reliability $p_i=i/M, \;\forall i \in \{1,2,\cdots,M\}$ and all clients having the same weight $\alpha_i=1$. 

\begin{figure}[b!]
\begin{center}
\includegraphics[height=4.9cm]{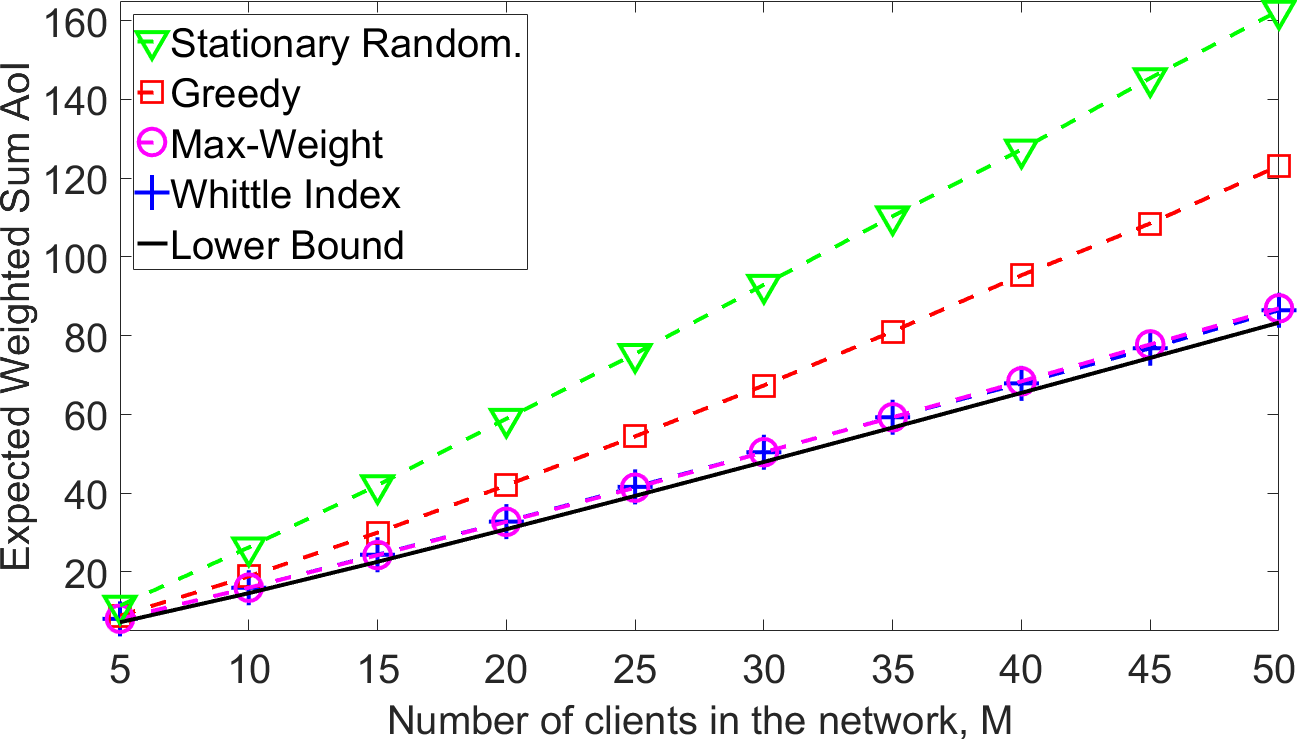}
\end{center}
\caption{Network with $T=2, K=50,000, \alpha_i=1, p_i=i/M, \;\forall i$. The simulation result for each policy and for each value of $M$ is an average over $10$ runs.}\label{fig.Gen_client}
\end{figure}

In Fig.~\ref{fig.Gen_index}, we consider a network with $M=4$ clients, $T=2$ slots in a frame, a total of $K=50,000$ frames, identical client weights $\alpha_i=1, \forall i \in \{1,2,3,4\}$ and channel reliabilities $\{p_i\}_{i=1}^M$ chosen uniformly at random in the interval $(0.1)$. A total of $2,000$ different choices of $\{p_i\}_{i=1}^M$ are considered. Network setups are displayed in ascending order of $L_B$.

\begin{figure}[b!]
\begin{center}
\includegraphics[height=5cm]{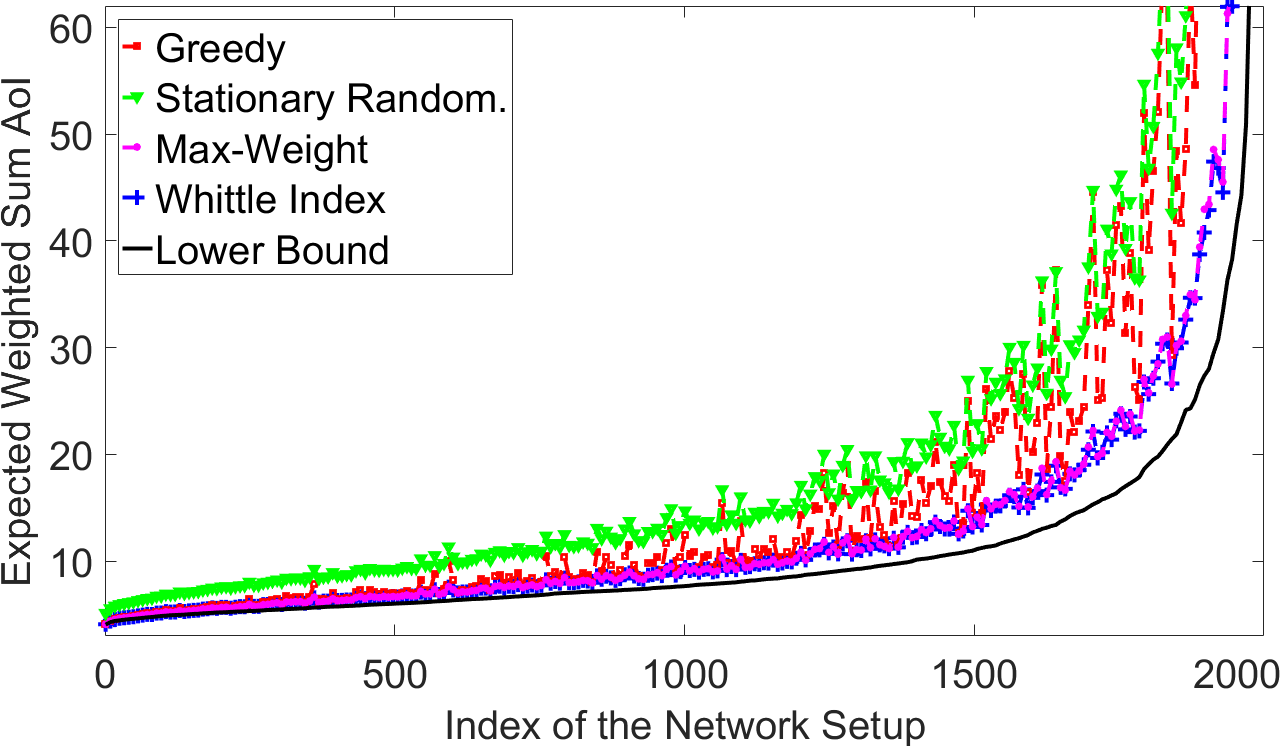}
\end{center}
\caption{Networks with $M=4, T=2, K=50,000, \alpha_i=1, \;\forall i$ and different channel reliabilities $p_i$. For each network setup, the values of $p_i$ are sampled uniformly at random from the range $(0,1)$. For a given network setup, the performance of the policies is an average over $10$ runs. For the sake of clarity, we display $250$ out of the $2,000$ network setups by keeping only every $8$th data point.}\label{fig.Gen_index}
\end{figure}

Our results in Figs.~\ref{fig.Sym} and \ref{fig.Gen} show that the performances of the Max-Weight and Whittle Index policies are comparable to the optimal performance (DP) in every network setting considered. Moreover, the results in Fig.~\ref{fig.Sym} support the optimality of the Greedy, Max-Weight and Whittle Index policies for any symmetric network. Figs.~\ref{fig.Gen}, \ref{fig.Gen_client} and \ref{fig.Gen_index} suggest that, in general, the Max-Weight and Whittle Index Policies outperform Greedy and Randomized. An important feature of all policies examined in this paper is that they require low computational resources even for networks with a large number of clients. 
%\newpage
\section{CONCLUDING REMARKS}\label{sec.Conclusion}
This paper considered a wireless broadcast network with a BS sending time-sensitive information to multiple clients over unreliable channels. We studied the problem of optimizing scheduling decisions with respect to the expected weighted sum AoI of the clients in the network. Our main contributions include developing the Greedy, Randomized, Max-Weight and Whittle's Index policies; showing that for the case of symmetric networks, Greedy, Max-Weight and Whittle's Index are AoI-optimal; and deriving performance guarantees for all four low-complexity policies. Numerical results demonstrate the strong performances of the Max-Weight and Whittle's Index policies in a variety of network conditions.
%Since this is the first work to characterize a link scheduling policy for the purpose of minimizing AoI in a wireless network with unreliable channels, there are a number of possible extensions that can be considered. 

The mathematical model in Sec.~\ref{sec.Model} describes a network that periodically generates packets at the BS and then transmits those packets to the clients. It is easy to see that the same model can represent other types of networks. A simple example is a polling network in which the BS requests packets from the clients and each client, once polled, generates fresh data and transmits that data back to the BS. This network with uplink traffic and on-demand generation of data can be represented by our model for the case $T=1$. Interesting extensions of this work include considering stochastic arrivals, time-varying channels and multi-hop networks.

%\textbf{[Notes we may add]} The system model represents two common types of traffic: periodic and on-demand. The network model can also model uplink traffic. Notice that replacing packets can only improve AoI. (FIFO, LIFO and LIFO with preemption)

%\newpage
\appendices
\section{Proof of Lemma \ref{lem.Greedy}}\label{app.Lemma_Greedy}
\noindent \textbf{Lemma \ref{lem.Greedy}} (Round Robin). Without loss of generality, reorder the client index $i$ in descending order of $\vec{h}_1$, with client $1$ having the highest $h_{1,i}$ and client $M$ the lowest $h_{1,i}$. The Greedy policy \textbf{delivers} packets according to the index sequence $(1,2,\cdots,M,1,2,\cdots)$ until the end of the time-horizon $K$, i.e. Greedy follows a Round Robin pattern.

\begin{proof}
Suppose that $p_i=1$ for all clients, meaning that every transmission is a successful packet delivery. Consider the first frame $k=1$ and assume that there are less clients in the network than slots in a frame, i.e. $M < T$. In this case, the Greedy Policy delivers a packet to client $1$ in the first slot, client $2$ in the second slot, and so on, until the $M$th packet is delivered. At this point, there are no undelivered packets left, and Greedy idles until the end of the frame. In the next frame $k=2$, new packets are generated at the BS and the value of $h_{k,i}$ is updated to $1$ for all clients. Since Greedy breaks ties arbitrarily, we choose to select clients in the same order $(1,2,\cdots,M)$ during frame $k=2$ and during all subsequent frames. This client ordering characterizes a circular order. Thus, for the case $M<T$ and $p_i=1$, the Greedy Policy delivers packets to clients following a Round Robin pattern.

Now, consider the case $M\geq T$ and $p_i=1$. Let $m_1 \in \mathbb{N}$ and $m_2 \in \{0,1,\cdots,T-1\}$ be the quotient and remainder of the division of $M$ by $T$, namely $M=m_1T+m_2$. For simplicity of exposition, let the client index $i$ be \emph{reordered in descending order of $h_{k,i}$ at the beginning of every frame} $k$. Then, within every frame $k$, the Greedy Policy schedules clients in the following order $(1,\cdots,T)$. The evolution of the Greedy Policy is described in detail next:
\begin{itemize}
\item In the first frame, the Greedy policy delivers packets to clients $1$ through $T$ in order. 
\item At the beginning of the second frame, new packets are generated at the BS and the value of $h_{k,i}$ is updated to $1$ for clients $\{1,\cdots,T\}$ and incremented by $1$ for the remaining clients. Then, the client index $i$ is reordered such that vector $\vec{h}_2$ is in descending order. Reordering can be accomplished with a cyclic shift of $T$ elements, in particular, clients $\{1,\cdots,T\}$ become $\{M-T+1,\cdots,M\}$ and clients that did not receive packets in the first frame have their index subtracted by $T$. With these reordered indexes, during the second frame, the Greedy policy delivers packets to clients $1$ through $T$ in order. 
\item Similarly, at the beginning of the third frame, new packets are generated at the BS and the value of $h_{k,i}$ is updated to $1$ for clients $\{1,\cdots,T\}$ and incremented by $1$ for the remaining clients. The vector $\vec{h}_3$ is reordered by applying the same cyclic shift of $T$ elements. Notice that the value of $h_{3,i}$ is $h_{3,i}=1$ for the clients that received packets in the second frame and $h_{3,i}=2$ for the clients that received packets in the first frame. During the third frame, Greedy delivers packets to clients $1$ through $T$ in order. 
\item This process is repeated until frame $k=m_1$. Then, at the beginning of frame $k=m_1+1$, the reordered vector of $h_{k,i}$ is
\begin{equation}
\vec{h}_{m_1+1}=\left[ \begin{array}{c} h_{1,i}+m_1 \\ m_1 \\ m_1-1 \\ \vdots \\ 2 \\ 1 \end{array} \right] \begin{array}{c} m_2\mbox{ elements} \\ T\mbox{ elements} \\ T\mbox{ elements} \\ \vdots \\ T\mbox{ elements} \\ T\mbox{ elements} \end{array}
\end{equation}
Clients $\{1,\cdots,m_2\}$ are the only ones that did not receive a packet so far. During frame $k=m_1+1$, the Greedy Policy delivers packets to clients $1$ through $T$ in order, where, by definition, $T>m_2$. %Notice that we chose to break ties according to a circular order. 
\item Therefore, at the beginning of frame $k=m_1+2$, all clients have received at least one packet and the reordered vector of $h_{k,i}$ is
\begin{equation}\label{eq.vectorh_k}
\vec{h}_{m_1+2}=\left[ \begin{array}{c} m_1+1 \\ m_1 \\ m_1-1 \\ \vdots \\ 2 \\ 1 \end{array} \right] \begin{array}{c} m_2\mbox{ elements} \\ T\mbox{ elements} \\ T\mbox{ elements} \\ \vdots \\ T\mbox{ elements} \\ T\mbox{ elements} \end{array}
\end{equation}
During frame $k=m_1+2$, the Greedy policy delivers packets to clients $1$ through $T$ in order.
\item At the beginning of frame $k=m_1+3$, the reordered vector of $h_{k,i}$ is
\begin{equation}\label{eq.vectorh_k2}
\vec{h}_{m_1+3}=\left[ \begin{array}{c} m_1+1 \\ m_1 \\ m_1-1 \\ \vdots \\ 2 \\ 1 \end{array} \right] \begin{array}{c} m_2\mbox{ elements} \\ T\mbox{ elements} \\ T\mbox{ elements} \\ \vdots \\ T\mbox{ elements} \\ T\mbox{ elements} \end{array}
\end{equation}
Observe that \eqref{eq.vectorh_k2} and \eqref{eq.vectorh_k} are identical. Clearly, in all frames that follow, the same sequence of events occur: i) vector $\vec{h}_{k}$ is updated according to \eqref{eq.evolution_h}; ii) vector $\vec{h}_{k}$ is reordered using a circular shift of $T$ elements, resulting in $\vec{h}_{k}$ identical to \eqref{eq.vectorh_k}; and iii) the Greedy Policy delivers clients $1$ through $T$ in order.
\end{itemize}

The description above for both cases $M<T$ and $M\geq T$ shows that when channels are error-free, namely $p_i=1$, and we iteratively apply cyclic shifts of $T$ elements to the client indexes, the Greedy Policy delivers packets to clients $1$ through $T$ in order at every frame $k$. Equivalently, when no cyclic shift is applied, the Greedy Policy delivers packets to clients in circular order. % when no cyclic shift is applied. %Thus, when no cyclic shift is applied, Greedy delivers packets to clients in circular order. %In addition, the analysis also shows that at frame $k=m_1+2$, the network reaches a steady-state that is characterized by the vector $\vec{h}_{k}$ in \eqref{eq.vectorh_k}. %Thus, for any frame $k$ in the steady-state, the sum of the elements of $\vec{h}_{k}$ is given by
%\begin{equation}\label{eq.Greedy_steady}
%\sum_{i=1}^M h_{k,i} = \frac{Tm_1\left(m_1+1\right)}{2}+m_2(m_1+1)
%\end{equation}

When channels are unreliable, the only difference in the analysis is that each packet transmission to client $i$ fails with probability $p_i \in (0,1], \forall i$. According to Remark \ref{rem.Greedy}, in the event of a transmission failure, Greedy continues to transmit to the same client. Thus, transmission failures do not affect the order in which packets are \emph{delivered}. Hence, irrespective of the network setup, the Greedy Policy delivers packets following a Round Robin pattern until the end of the time-horizon. %Consider frame $k$ and suppose that clients have their indexes in descending order of $h_{k,i}$. In each slot, the Greedy Policy schedules a packet transmission to the client with an undelivered packet and lowest index. In particular, Greedy schedules client $1$ in the first slot. Then. if this packet is delivered, Greedy schedules client $2$ in the second slot, otherwise, it retransmits client $1$. This process is repeated until the end of frame $k$ or until all packets are delivered, in which case the BS idles. Suppose that during frame $k$, policy $G$ delivered clients $\{1,\cdots,T_k\}$. At the beginning of the next frame, the vector $\vec{h}_k$ can be reordered with a circular shift of $T_k$ elements. It is clear that the method utilized for showing the case $p_i=1$ also works for general values of $p_i$. The only difference is that the circular shifts differ in the number of elements. Since this difference does not affect the order in which packets are delivered to the clients, the proof is complete.
\end{proof}
%\newpage
\section{Proof of Lemma \ref{prop.Greedy}}\label{app.Prop_Greedy}
\noindent \textbf{Lemma \ref{prop.Greedy}} (Optimality of Greedy for error-free channels). Consider a symmetric network with error-free channels $p_i=1$ and weights $\alpha_i=\alpha>0, \forall i$. Among the class of admissible policies $\Pi$, the Greedy policy attains the minimum sum AoI \eqref{eq.EWSAoI}, namely
\begin{equation}%\label{eq.Objective_Symmetric}
J_K^G \leq J_K^\pi, \forall \pi \in \Pi \; . %\quad \mbox{where} \quad J_K^{\pi}=\frac{\alpha}{KM}\sum_{k=1}^{K}\sum_{i=1}^{M} h_{k,i} \; .
\end{equation}

\begin{proof}
%Prior to delving into the proof, we introduce some notation. Recall from Sec.~\ref{sec.AoI} that $\hat{s}_k$ represents the set of clients that successfully received packets during frame $k$. Let $\mathbb{I}_i(\hat{s}_k)$ be an indicator function that takes the value $\mathbb{I}_i(\hat{s}_k)=1$ if client $i$ \emph{did not receive} a packet during frame $k$, namely $i \notin \hat{s}_k$, and takes the value $\mathbb{I}_i(\hat{s}_k)=0$, otherwise. Then, the evolution of $h_{k,i}$ in \eqref{eq.evolution_h} can be rewritten as $h_{k+1,i}=h_{k,i}\mathbb{I}_i(\hat{s}_k)+1$. Using vector notation, let $\vec{\mathbb{I}}(\hat{s}_k)=[\mathbb{I}_1(\hat{s}_k)~\mathbb{I}_2(\hat{s}_k)~\cdots~\mathbb{I}_M(\hat{s}_k)]^T$ and denote by $\vec{h}_k \odot \vec{\mathbb{I}}(\hat{s}_k)$ the entrywise product of vectors $\vec{h}_k$ and $\vec{\mathbb{I}}(\hat{s}_k)$. An alternative expression for the evolution of $\vec{h}_{k}$ at the beginning of frame $k+1$ is

Prior to delving into the proof, we introduce some notation. Let $\mathbb{I}_i(.)$ be an indicator function that takes the value $\mathbb{I}_i(s)=1$ if $\{i \notin s \}$ and $\mathbb{I}_i(s)=0$, otherwise. Recall from Sec.~\ref{sec.AoI} that $\hat{s}_k$ represents the set of clients that successfully received packets during frame $k$. Then, we can indicate that client $i$ \emph{did not receive} a packet during frame $k$ using $\mathbb{I}_i(\hat{s}_k)=1$ and the evolution of $h_{k,i}$ in \eqref{eq.evolution_h} can be rewritten as $h_{k+1,i}=h_{k,i}\mathbb{I}_i(\hat{s}_k)+1$. Using vector notation, let $\vec{\mathbb{I}}(\hat{s}_k)=[\mathbb{I}_1(\hat{s}_k)~\mathbb{I}_2(\hat{s}_k)~\cdots~\mathbb{I}_M(\hat{s}_k)]^T$ and denote by $\vec{h}_k \odot \vec{\mathbb{I}}(\hat{s}_k)$ the entrywise product of vectors $\vec{h}_k$ and $\vec{\mathbb{I}}(\hat{s}_k)$. A simple expression for the evolution of $\vec{h}_{k}$ at the beginning of frame $k+1$ is
\begin{equation}\label{eq.evolution_h2}
\vec{h}_{k+1}=\vec{h}_k \odot \vec{\mathbb{I}}(\hat{s}_k)+\vec{1} \; ,
\end{equation}
where $\vec{1}$ is the unity column vector of length $M$.

Next, we use mathematical induction to show that $\vec{h}_{k+1}$ can be expressed as a function of the initial AoI, $\vec{h}_1$, and of the clients that received packets during frames $1$ through $k$, namely $\{\hat{s}_j\}_{j=1}^{k}$, as follows
\begin{equation}\label{eq.evolution_h_hyp}
\vec{h}_{k+1}=\vec{h}_{1} \odot \vec{\mathbb{I}}\left(\bigcup\limits_{j=1}^{k} \hat{s}_{j}\right) +\sum_{a=2}^{k}\vec{\mathbb{I}}\left(\bigcup\limits_{j=a}^{k} \hat{s}_{j}\right)+\vec{1} \; .
\end{equation}

\noindent \textbf{Base case}: substitute $k=1$ in \eqref{eq.evolution_h_hyp}. The expression is identical to \eqref{eq.evolution_h2}.

\noindent \textbf{Inductive step}: Assume that \eqref{eq.evolution_h_hyp} holds for $\vec{h}_k$. The expression of $\vec{h}_{k+1}$ can be obtained by substituting \eqref{eq.evolution_h_hyp} in \eqref{eq.evolution_h2} as follows
\begin{align}
\vec{h}_{k+1}&=\vec{h}_k \odot \vec{\mathbb{I}}(\hat{s}_k)+\vec{1} \nonumber\\
             &=\left[\vec{h}_{1} \odot \vec{\mathbb{I}}\left(\bigcup\limits_{j=1}^{k-1} \hat{s}_{j}\right) +\sum_{a=2}^{k-1}\vec{\mathbb{I}}\left(\bigcup\limits_{j=a}^{k-1} \hat{s}_{j}\right)+\vec{1}\right] \odot \vec{\mathbb{I}}(\hat{s}_k)+\vec{1} \nonumber\\
             %&=\vec{h}_{1} \odot \vec{\mathbb{I}}\left(\bigcup\limits_{j=1}^{k-1} \hat{s}_{j}\right)\odot \vec{\mathbb{I}}(\hat{s}_k) +\sum_{a=2}^{k-1}\vec{\mathbb{I}}\left(\bigcup\limits_{j=a}^{k-1} \hat{s}_{j}\right)\odot \vec{\mathbb{I}}(\hat{s}_k)+\vec{1}\odot \vec{\mathbb{I}}(\hat{s}_k) +\vec{1} \nonumber\\
             &=\vec{h}_{1} \odot \vec{\mathbb{I}}\left(\bigcup\limits_{j=1}^{k} \hat{s}_{j}\right)+\sum_{a=2}^{k-1}\vec{\mathbb{I}}\left(\bigcup\limits_{j=a}^{k} \hat{s}_{j}\right)+\vec{\mathbb{I}}(\hat{s}_k) +\vec{1} \nonumber\\
             &=\vec{h}_{1} \odot \vec{\mathbb{I}}\left(\bigcup\limits_{j=1}^{k} \hat{s}_{j}\right)+\sum_{a=2}^{k}\vec{\mathbb{I}}\left(\bigcup\limits_{j=a}^{k} \hat{s}_{j}\right)+\vec{1} \; ,
\end{align}
which is identical to the expression in \eqref{eq.evolution_h_hyp}. The induction is complete.

Expression \eqref{eq.evolution_h_hyp} is central for this proof. To show that for a symmetric network with error-free channels, the Greedy policy minimizes $J_K^\pi$, i.e.
\begin{equation}
J_K^G \leq J_K^\pi, \forall \pi \in \Pi \quad \mbox{where } J_K^{\pi}=\frac{\alpha}{KM}\sum_{k=1}^{K}\sum_{i=1}^{M} h_{k,i} \; ,
\end{equation}
it suffices to show that employing Greedy yields the lowest sum $\sum_{i=1}^{M} h_{k,i}$ in every frame $k \in \{1,2,\cdots,K\}$. 
%Denote by $\hat{S}_{\, a}^{\, k}=\bigcup\limits_{j=a}^{k} \hat{s}_{j}$ the set of clients that received packets during the interval between frames $a$ and $k$. 
According to \eqref{eq.evolution_h_hyp}, the sum of the elements of $\vec{h}_k$ can be expressed as
\begin{align}
\sum_{i=1}^{M} h_{k,i}=&\sum_{i=1}^M \left[ h_{1,i}\mathbb{I}_i\left(\bigcup\limits_{j=1}^{k-1} \hat{s}_{j}\right)+\sum_{a=2}^{k-1}\mathbb{I}_i\left(\bigcup\limits_{j=a}^{k-1} \hat{s}_{j}\right) +1 \right] \nonumber\\
=&\sum_{i=1}^M h_{1,i}\mathbb{I}_i\left(\bigcup\limits_{j=1}^{k-1} \hat{s}_{j}\right)+\sum_{a=2}^{k-1}\sum_{i=1}^M\mathbb{I}_i\left(\bigcup\limits_{j=a}^{k-1} \hat{s}_{j}\right) + M \; . \label{eq.sum_h}
\end{align}

In the remaining part of this proof, we use Lemma \ref{lem.Greedy} (Circular Order) to show that the Greedy policy minimizes \eqref{eq.sum_h} for every frame $k$. Without loss of generality, we reorder the client index $i$ in descending order of $\vec{h}_1$. Then, Lemma \ref{lem.Greedy} states that Greedy delivers packets to clients in circular order, following the index sequence $(1,2,\cdots,M,1,2,\cdots)$. Clearly, when Greedy is employed in a network with $p_i=1,\forall i$, the following properties hold:
\begin{enumerate}[(i)]
\item the number of packets delivered in any frame $k$ is $|\hat{s}_k^{\;G}|=\min(M,T), \forall k \in \{1,2,\cdots,K\}$;
\item the set of clients that receive at least one packet during the first $k$ frames is 
\begin{equation}
\bigcup\limits_{j=1}^{k} \hat{s}_{j}^{\;G}=\{1,2,\cdots,\min(M,kT)\} \; ;
\end{equation}
\item by minimizing the number of common elements in neighboring sets, the circular order of Greedy maximizes the number of clients that receive at least one packet during the first $k$ frames. Thus, for every $\pi \in \Pi$ we have
\begin{equation}
\left|\bigcup\limits_{j=1}^{k} \hat{s}_{j}^{\;\pi}\right| \leq \left|\bigcup\limits_{j=1}^{k} \hat{s}_{j}^{\;G}\right|=\min(M,kT) \; ;
\end{equation}
\item for the same reason, for every $\pi \in \Pi$ and $a \in \{1,2,\cdots,k\}$ we also have
\begin{equation}
\left|\bigcup\limits_{j=a}^{k} \hat{s}_{j}^{\;\pi}\right| \leq \left|\bigcup\limits_{j=a}^{k} \hat{s}_{j}^{\; G}\right| = \min\{M,(k-a+1)T\} \; .
\end{equation}
\end{enumerate}
Properties (ii) and (iv) are used to show that Greedy minimizes the RHS of \eqref{eq.sum_h}.

The first term in the RHS of \eqref{eq.sum_h} is the sum of the elements of $\vec{h}_1$ that are associated with clients that did not receive packets during frames $1$ through $k-1$. Property (ii) shows that the Greedy policy minimizes this term by delivering packets to the clients with highest value of $h_{1,i}$, namely $\{1,2,\cdots,\min(M,(k-1)T)\}$. 

The second term in the RHS of \eqref{eq.sum_h} is a double sum. By the definition of $\mathbb{I}_i(.)$, it follows that
\begin{equation}
\sum_{a=2}^{k-1}\sum_{i=1}^M\mathbb{I}_i\left(\bigcup\limits_{j=a}^{k-1} \hat{s}_{j}\right) = \sum_{a=2}^{k-1} \left[ M - \left| \bigcup\limits_{j=a}^{k-1} \hat{s}_{j} \right|\right] \; .
\end{equation}
Property (iv) shows that the Greedy policy minimizes this double sum for every value of $a$. Since the last term in the RHS of \eqref{eq.sum_h} is a constant, we conclude that the Greedy policy minimizes the sum \eqref{eq.sum_h} in every frame $k$ and, consequently, the value of the objective function, $J_K^\pi$.\end{proof}
%\newpage
\section*{Acknowledgment}
This work was supported by NSF Grants AST-1547331, CNS-1713725, and CNS-1701964, by a grant from the army research office (ARO), by METU and by CAPES/Brazil.

\ifCLASSOPTIONcaptionsoff
  \newpage
\fi

\bibliographystyle{IEEEtran}
\bibliography{IEEEabrv}

\newpage
\setcounter{page}{1}
%\maketitle
\twocolumn[
\begin{@twocolumnfalse}
{
\begin{center}
\fontsize{18pt}{25pt}\selectfont Supplementary Material for the paper ``Scheduling Policies for Minimizing Age of Information in Broadcast Wireless Networks''
\end{center}
\vspace{1cm}
}
\end{@twocolumnfalse}]

%\end{multicols}
\section{Proof of Theorem \ref{theo.Greedy}}\label{app.Theo_Greedy}
\noindent \textbf{Theorem \ref{theo.Greedy}} (Optimality of Greedy). Consider a symmetric network with channel reliabilities $p_i=p \in (0,1]$ and weights $\alpha_i=\alpha > 0, \forall i$. Among the class of admissible policies $\Pi$, the Greedy Policy attains the minimum expected sum AoI \eqref{eq.EWSAoI}, namely $G=\argmin_{\pi \in \Pi}\mathbb{E}\left[J_K^{\pi}\right]$.

\begin{proof}
To show that the Greedy Policy minimizes the EWSAoI in \eqref{eq.EWSAoI} for symmetric networks, we utilize a stochastic dominance argument to compare the evolution of $\vec{h}_k$ when Greedy is employed to that when an arbitrary policy $\pi$ is employed. For the sake of simplicity and without loss of optimality, in this proof we assume that $\pi$ is work-conserving. There is no loss of optimality since for every non work-conserving policy, there is at least one work-conserving policy that is strictly dominant.

%Let $SH^\pi_k (\vec{h}_1)$ be the random variable that represent the sum of the elements of $\vec{h}_k$ when $\pi$ is employed and the initial age is $\vec{h}_1$. For notational simplicity, the initial age $\vec{h}_1$ is omitted hereafter. Using this notation and the symmetry assumptions of Theorem \ref{theo.Greedy}, the objective function in \eqref{eq.Objective} becomes 
Let $SH^\pi_k$ be the random variable that represents the sum of the elements of $\vec{h}_k$ when $\pi$ is employed. Using this notation and the symmetry assumptions of Theorem \ref{theo.Greedy}, the objective function in \eqref{eq.Objective} becomes
\begin{align}
\min_{\pi \in \Pi}\mathbb{E}\left[J_K^{\pi}\right] = & \frac{1}{KM}\min_{\pi \in \Pi}\mathbb{E}\left[\sum_{k=1}^{K}\sum_{i=1}^{M} \alpha \; h_{k,i}\right] \nonumber \\
= & \frac{\alpha}{KM}\min_{\pi \in \Pi}\mathbb{E}\left[\sum_{k=1}^{K}SH_k^\pi\right] \; .\label{eq.DP1}
\end{align}

For introducing the concept of stochastic dominance, denote the stochastic process associated with the sequence $\{SH^\pi_k\}_{k=1}^{K}$ as $SH^\pi$ and its sample path as $sh^{\pi}$. Let $\mathbb{D}$ be the space of all sample paths $sh^{\pi}$. Define by $\mathcal{F}$ the set of measurable functions $f:\mathbb{D} \rightarrow \mathbb{R}^+$ such that $f(sh^{G}) \leq f(sh^\pi)$ for every $sh^{G} , sh^\pi \in \mathbb{D}$ which satisfy $sh^{G}_k \leq sh^\pi_k, \forall k$. %Then, we can define stochastic dominance.

\begin{definition}
(Stochastic Dominance) We say that $SH^{G}$ is stochastically smaller than $SH^\pi$ and write $SH^{G} \leq_{st} SH^\pi$ if $\;P\{f(SH^{G})>z\}\leq P\{f(SH^\pi)>z\}, \forall z \in \mathbb{R}, \forall f \in \mathcal{F}$. 
\end{definition}

Since $f(SH^\pi)$ is positive valued, $SH^{G} \leq_{st} SH^\pi$ implies\footnote{Recall that for any positive valued $X$, it follows that $\mathbb{E}[X]=\int_{x=0}^{\infty}(1-P\{X \leq x\})dx=\int_{x=0}^{\infty}P\{X > x\}dx$.} $\mathbb{E}[f(SH^{G})] \leq \mathbb{E}[f(SH^\pi)], \forall f \in \mathcal{F}$. Knowing that one function that satisfies the conditions in $\mathcal{F}$ is $f(SH^\pi)=\sum_{k=1}^K SH^\pi_k$, it follows that if $SH^{G} \leq_{st} SH^\pi, \forall \pi \in \Pi$, then $\mathbb{E}[\sum_{k=1}^K SH^{G}_k] \leq \mathbb{E}[\sum_{k=1}^K SH^\pi_k], \forall \pi \in \Pi$, which is our target expression in \eqref{eq.DP1}. Therefore, it follows that for establishing the optimality of G, it is sufficient to confirm that $SH^{G}$ is stochastically smaller than $SH^{\pi}, \forall \pi \in \Pi$. 

Stochastic dominance can be demonstrated using its definition directly. However, this is often complex for it involves comparing the probability distributions of $SH^{G}$ and $SH^\pi$. Instead, we use the following result from \cite{coupling}, which is also used in works such as \cite{single_link,scheduling_coupling,index_schedule}: for verifying that $SH^{G} \leq_{st} SH^\pi$, it is sufficient to show that there exists two stochastic processes $\widetilde{SH}^{G}$ and $\widetilde{SH}^\pi$ such that
\begin{enumerate}
\item[(i)] $SH^\pi$ and $\widetilde{SH}^\pi$ have the same probability distribution;
\item[(ii)] $\widetilde{SH}^{G}$ and $\widetilde{SH}^\pi$ are on a common probability space;
\item[(iii)] $SH^{G}$ and $\widetilde{SH}^{G}$ have the same probability distribution;
\item[(iv)] $\widetilde{SH}^{G}_k \leq \widetilde{SH}^\pi_k$, with probability $1$, $\forall k$.
\end{enumerate}
This result allows us to establish stochastic dominance between $SH^{G}$ and $SH^\pi$ by properly designing the auxiliary processes $\widetilde{SH}^{G}$ and $\widetilde{SH}^\pi$. This design is achieved by utilizing Stochastic Coupling.

Prior to discussing stochastic coupling, we introduce the channel state. Let $E_i(k,n) \sim Ber(p)$ be the random variable that represents the channel state of client $i$ during slot $(k,n)$
\begin{equation}\label{eq.channel}
E_i(k,n)= \left\{ \begin{array}{cll} 
1, & \mbox{w.p. } p   \quad &\mbox{[\emph{Channel ON}]} \; ; \\
0, & \mbox{w.p. } 1-p \quad &\mbox{[\emph{Channel OFF}]} \; . \end{array} \right.
\end{equation}
%It takes the value $1$ when the channel is \emph{ON}, with probability $p$, and takes the value $0$ when the channel is \emph{OFF}, with probability $1-p$. %The probability of the channel being \emph{ON} is $p$. 
The channel state of each client is independent of the channel state of other clients and of scheduling decisions. Note that the BS has no knowledge of the channel state of the clients before transmissions.

\emph{Stochastic coupling} is a method utilized for comparing stochastic processes by imposing a common underlying probability space. We use stochastic coupling to construct $\widetilde{SH}^{\pi}$ and $\widetilde{SH}^{G}$ based on $SH^{\pi}$ and $SH^{G}$, respectively.

Let the process $\widetilde{SH}^{\pi}$ be identical to $SH^{\pi}$. Their (common) probability space is associated with the channel state of the client selected in each slot by policy $\pi$. Now, let us construct $\widetilde{SH}^{G}$ on the same probability space as $\widetilde{SH}^{\pi}$. For that, we couple $\widetilde{SH}^{G}$ to $\widetilde{SH}^{\pi}$ by dynamically connecting the channel state of Greedy to the channel state of policy $\pi$ as follows. Suppose that in slot $(k,n)$, policy $\pi$ schedules client $j$ while Greedy schedules client $i$, then, for the duration of that slot, we assign $E_i(k,n)\leftarrow E_j(k,n)$. For example, if the outcome associated with policy $\pi$ is $E_j(k,n)=1$, then we impose that Greedy has $E_i(k,n)=1$, regardless of the client $i$ selected by Greedy. This dynamic assignment imposes that, at every slot, the channel state of Greedy is identical to the channel state of $\pi$. Notice that this is only possible because the channel state $E_i(k,n)$ is i.i.d. with respect to the clients and slots, which is the same reason for $\widetilde{SH}^{G}$ and $SH^G$ having the same probability distribution. 

Returning to our four conditions, it follows from the coupling method described above that (i), (ii) and (iii) are satisfied. Thus, the only condition that remains to be shown is
\begin{equation}\label{eq.condition_final}
\mbox{(iv)} \;\; \widetilde{SH}^{G}_k \leq \widetilde{SH}^\pi_k, \mbox{with probability 1, }\forall k.
\end{equation}

Coupling between $\widetilde{SH}^{\pi}$ and $\widetilde{SH}^{G}$ is the key property to establish (iv). Assume that policy $\pi$ is employed and consider a sample path $\widetilde{sh}^\pi$ spanning the entire time-horizon. Use the sequence of channel states from $\widetilde{sh}^\pi$ to create the coupled sample path $\widetilde{sh}^{G}$. Figure~\ref{fig.SH} illustrates both sample paths. Notice that the scheduling decisions taken during slots in which the channel state is OFF cannot change the relationship ($\leq$ or $\geq$) between $\widetilde{sh}^{\pi}_k$ and $\widetilde{sh}^{G}_k$. Since these slots are irrelevant for comparing $\widetilde{sh}^{\pi}_k$ and $\widetilde{sh}^{G}_k$, they can be removed from the analysis and we can focus on slots with error-free channels.

\begin{figure}[b]
\begin{center}
\includegraphics[height=4.7cm]{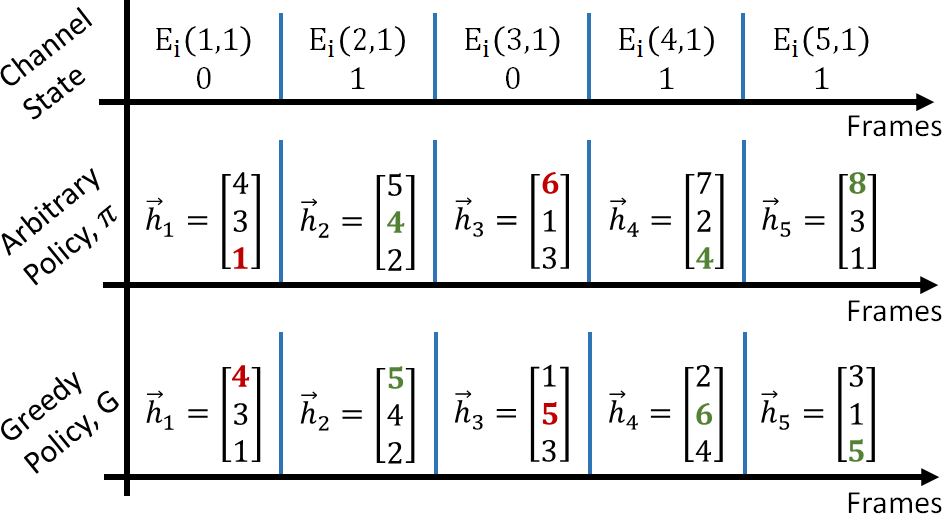}
\end{center}
\caption{Evolution of $\widetilde{sh}^{\pi}$ and $\widetilde{sh}^{G}$ for a network with $M=3$ clients,  $T=1$ slots in a frame, $K=5$ frames, initial AoI $\vec{h}_1=[4,3,1]^T$ and unreliable channels. Recall that channels are ON when $E_i(k,n)=1$ and OFF when $E_i(k,n)=0$. Successful deliveries are represented in green and failed transmissions in red. On the top, channel states associated with the scheduling decisions of the arbitrary policy $\pi$. Notice that, due to coupling, the Greedy policy has the same channel states. On the middle, the evolution of $\vec{h}_k$ when policy $\pi$ is employed. On the bottom, the evolution of $\vec{h}_k$ when Greedy is employed. Comparing the sum of $\vec{h}_k$ over time for both policies, we have $\widetilde{sh}^{\pi}=\{8;11;10;13;12\}$ and $\widetilde{sh}^{G}=\{8;11;9;12;9\}$, and we see that $\widetilde{sh}^{G}_k \leq \widetilde{sh}^\pi_k, \forall k$.}\label{fig.SH}
\end{figure}

Lemma \ref{prop.Greedy} established that, in a network with \emph{error-free channels}, we have $\widetilde{sh}^{G}_k \leq \widetilde{sh}^{\pi}_k$, for every frame $k$ and for every policy $\pi \in \Pi$. The difference between the setup in Appendix \ref{app.Prop_Greedy} and the setup here is that, by removing the slots in which the channel state is OFF, we create frames with different number of slots. However, it is easy to see that the proof in Appendix \ref{app.Prop_Greedy} does not rely on the fact that all frames have the same length. Hence, it follows that $\widetilde{sh}^{G}_k \leq \widetilde{sh}^{\pi}_k$, for every $k$ and for every sample path. Thus establishing condition (iv) and the stochastic dominance argument.
\end{proof}
%\newpage
\section{Proof of Theorem \ref{theo.performance_Greedy}}\label{app.Theo_performance_Greedy}
\noindent \textbf{Theorem \ref{theo.performance_Greedy}} (Performance of Greedy). Consider a network $(M,T,p_i,\alpha_i)$ with an infinite time-horizon. The Greedy policy is $\rho^G$-optimal as $M\rightarrow\infty$, where
\begin{equation}%\label{eq.performance_Greedy}
\rho^G=\frac{\displaystyle\left(\sum_{i=1}^M\alpha_i\right)\left(\sum_{i=1}^M \frac{1}{p_i}\right)\left[1+\frac{C_V^2}{M}\right]+T\left(\sum_{i=1}^M\alpha_i\right)}{\displaystyle\left(\sum_{i=1}^M \sqrt{\frac{\alpha_i}{p_i}}\right)^2+T\left(\sum_{i=1}^M\alpha_i\right)} \; .
\end{equation}

\begin{proof}
The performance guarantee is defined as $\rho^G=U_B^G/L_B$, where the denominator is the universal lower bound in \eqref{eq.LowerBound} and the numerator is an upper bound to the objective function, namely $\lim_{K \rightarrow \infty} \mathbb{E}[J_K^G] \leq U_B^G$, which is derived in this appendix. %As can be seen in \eqref{eq.Objective_infiniteHorizon}, the objective function for the infinite-horizon problem depends on the sample means of $I_i[m]$ and $I_i^2[m]$, which in turn, depend on the scheduling policy $\pi \in \Pi$.

To analyze the evolution of $h_{k,i}$ when the Greedy policy is employed, we utilize the properties introduced in Sec.~\ref{sec.Symmetric}. Without loss of generality, assume in this appendix that the client index $i$ is in descending order of $\vec{h}_1$, as in Lemma~\ref{lem.Greedy}. Then, the properties introduced in Sec.~\ref{sec.Symmetric} can be summarized as follows: i) Greedy \emph{transmits} packets to the same client $i$, uninterruptedly, until a packet is delivered; ii) Greedy \emph{delivers} packets to clients following a Round Robin pattern $(1,2,\cdots,M,1,2,\cdots)$ until the end of the time-horizon; iii) Greedy idles only when all $M$ clients receive their packets in the same frame.

Based on property (i), define $X_i[m]$ as the number of successive transmission attempts to client $i$ that precede the $m$th packet delivery to the same client, with $X_i[0]=0, \forall i$. For a given $i$, the random variables $X_i[m]$ are i.i.d. with geometric distribution. Moreover, transmissions to different clients are independent. Hence, we have
\begin{align}
\mathbb{E}[X_i[m]]&=1/p_i \; ;\\
\mathbb{E}[X_i[m]X_j[m-1]]&=1/p_i p_j \; ;\\
\mathbb{E}\left[X_i^2[m]\right]&=(2-p_i)/p_i^2 \; .
\end{align}

According to property (ii), packets are delivered to clients following a Round Robin pattern. Thus, the total number of packet transmissions (to any client) in the interval between the $(m-1)$th and $m$th deliveries \emph{to client} $i$ is given by 
%From property (ii), it follows that the number of transmission attempts in the interval between the $(m-1)$th and $m$th packet deliveries to client $i$ is given by 
\begin{align}%\label{eq.Bi}
%B_i[m]=&X_{i+1}[m-1]+\cdots+X_{M-1}[m-1]+X_M[m-1]+\nonumber\\
%&+X_1[m]+X_2[m]+\cdots+X_i[m] \nonumber\\
B_i[m]=&\sum_{j=i+1}^MX_j[m-1]+\sum_{j=1}^iX_j[m] \; ,
%B_i[m]=X_{i+1}[m-1]+\cdots+X_{M-1}[m-1]+X_M[m-1]+X_1[m]+X_2[m]+\cdots+X_i[m] \; .
\end{align}
with first moment
\begin{equation}\label{eq.Bi_mean}
\mathbb{E}[B_i[m]]=\sum_{i=1}^M \frac{1}{p_i} \; ,
\end{equation}
and second moment
\begin{align}\label{eq.Bi_power}
\mathbb{E}\left[B_i^2[m]\right]=\sum_{j=1}^M \frac{2-p_j}{p_j^2}+2\sum_{j=1}^M\sum_{k=j+1}^M \frac{1}{p_j p_k} \; .
\end{align}

In addition to packet transmissions, the interval between the $(m-1)$th and $m$th deliveries to client $i$ may also have idle slots. Let $W_i[m]$ be the number of idle slots in this interval. It follows from property (iii) that, if there are idle slots between two consecutive deliveries to client $i$, they occur one after the other and at the end of a frame in which all $M$ packets were delivered, implying that $0 \leq W_i[m] < T$.
%there can only be a \emph{single} ``idle period'' between any two consecutive deliveries to client $i$. The ``idle period'' is the remaining portion of a frame in which all client were delivered, hence $W_i[m] \leq \max(0,T-M), \forall i,m$ with probability one. 

The total number of \emph{slots} in the interval between the $(m-1)$th and $m$th packet deliveries to client $i$ is given by $B_i[m]+W_i[m]$. Recall that $I_i[m]$ is defined as the number of \emph{frames} in that interval, hence 
\begin{equation}\label{eq.transmission_cycle}
I_i[m]= \left\lfloor \frac{B_i[m]+W_i[m]}{T} \right\rfloor \; .
\end{equation}
It is evident from the analysis above that, when Greedy is employed, the sequence of packet deliveries to client $i$ is a renewal process with i.i.d. inter-delivery times $I_i[m]$. Therefore, using the generalization of the elementary renewal theorem for renewal-reward processes \cite[Sec.~5.7]{DSP}, we have
\begin{equation}\label{eq.UB_Greedy1}
\lim_{K\rightarrow\infty}\frac{1}{K}\sum_{k=1}^K\mathbb{E}[h_{k,i}]=\frac{\mathbb{E}[I_i[m]^2]}{2\mathbb{E}[I_i[m]]}+\frac{1}{2} \; .
\end{equation}

%Notice that 
%\begin{align*}
%B_i[1]&=X_1[1]+\cdots+X_i[1] \\
%B_i[2]&=X_{i+1}[1]+\cdots+X_{M}[1]+X_1[2]+\cdots+X_i[2] \\
%& \vdots \\
%B_i[N]&=X_{i+1}[N-1]+\cdots+X_{M}[N-1]+X_1[N]+\cdots+X_i[N] \; .
%\end{align*}

%Prior to discussing both sample means, we introduce some useful parameters. 

Next, we find an upper bound to \eqref{eq.UB_Greedy1} that is used for deriving the expression of $U_B^G$. In particular, we combine \eqref{eq.transmission_cycle}, \eqref{eq.Bi_mean} and the fact that $W_i[m]\geq 0$ to obtain a lower bound to the first moment of $I_i[m]$, as follows
\begin{align}\label{eq.SM_upper}
\mathbb{E}[I_i[m]] \geq \mathbb{E}\left[\frac{B_i[m]+W_i[m]}{T}-1\right] \geq \frac{1}{T}\sum_{i=1}^M\frac{1}{p_i} -1\; .
\end{align}
Moreover, an upper bound to the second moment of $I_i^2[m]$ is obtained by using \eqref{eq.transmission_cycle}, \eqref{eq.Bi_power} and the fact that $W_i[m]<T$, as follows 
\begin{align}\label{eq.SM2_lower}
\mathbb{E}&\left[I_i^2[m]\right] \leq \mathbb{E}\left[\left(\frac{B_i[m]+W_i[m]}{T}\right)^2\right] \\
	   &\leq \mathbb{E}\left[\frac{B_i^2[m]}{T^2}+\frac{2B_i[m]}{T}+1\right] \nonumber\\
	   &= \frac{1}{T^2}\left[\sum_{j=1}^M \frac{2-p_j}{p_j^2}+2\sum_{j=1}^M\sum_{k=j+1}^M \frac{1}{p_j p_k}\right]+\frac{2}{T}\left[\sum_{i=1}^M \frac{1}{p_i}\right]+1 \; . \nonumber
\end{align} 
Notice that \eqref{eq.SM_upper} and \eqref{eq.SM2_lower} do not depend on indexes $i$ or $m$.

%\begin{equation}\label{eq.p_var}
%\mathbb{V}ar\left(\frac{1}{p_i}\right) = \frac{1}{M}\sum_{i=1}^M \left(\frac{1}{p_i}-\frac{1}{M}\sum_{i=1}^M\frac{1}{p_i}\right)^2 \; ,
%\end{equation}
%that measures how disperse are the values of $1/p_i$.
Finally, substituting \eqref{eq.SM_upper} and \eqref{eq.SM2_lower} into \eqref{eq.UB_Greedy1} and then combining the result with the objective function in \eqref{eq.Objective} gives the upper bound $U_B^G \geq \lim_{K \rightarrow \infty}\mathbb{E}[J_K^G]$, where 
\begin{equation}\label{eq.performance_Greedy_M}
U_B^G = \frac{1}{2M}\left(\sum_{i=1}^M\alpha_i\right)\left(\frac{1}{T}\sum_{i=1}^M\frac{1}{p_i}-1\right)\mbox{Y}^G+\frac{1}{2M}\left(\sum_{i=1}^M\alpha_i\right) \; ,
\end{equation}
and
\begin{equation}%\label{eq.UBG_aux}
\mbox{Y}^G = 1+\frac{1}{M}+\frac{\displaystyle4-\frac{1}{T}+\frac{2}{M}}{\displaystyle\frac{1}{T}\sum_{i=1}^M\frac{1}{p_i}-1}+\frac{\displaystyle4-\frac{1}{T}+\frac{1}{M}+\frac{M}{T^2}\bar{\mathbb{V}}\left[\frac{1}{p_i}\right]}{\displaystyle\left[\frac{1}{T}\sum_{i=1}^M\frac{1}{p_i}-1\right]^2} \; .
\end{equation}
A more insightful expression for $U_B^G$ can be obtained by assuming that $M\rightarrow\infty$, as follows
\begin{equation}\label{eq.UBG_highM}
U_B^G = \displaystyle\frac{1}{2MT}\left(\sum_{i=1}^M\alpha_i\right)\left(\sum_{i=1}^M \frac{1}{p_i}\right)\left[1+\frac{C_V^2}{M}\right]+\frac{1}{2M}\left(\sum_{i=1}^M\alpha_i\right) \; .
\end{equation}
Dividing \eqref{eq.UBG_highM} by the lower bound in \eqref{eq.LowerBound} yields the performance guarantee $\rho^G$ in \eqref{eq.performance_Greedy}.
\end{proof}
%\newpage
\section{Proof of Theorem \ref{theo.performance_Random}}\label{app.Theo_performance_Random}
\noindent \textbf{Theorem \ref{theo.performance_Random}} (Performance of Randomized). Consider a network $(M,T,p_i,\alpha_i)$ with an infinite time-horizon. The Randomized policy with positive values of $\beta_i$ is $\rho^R$-optimal, where
\begin{equation}%\label{eq.performance_Random}
\rho^R=2\frac{\displaystyle\left(\sum_{j=1}^{M}\beta_j\sum_{i=1}^{M}\frac{\alpha_i }{p_i\beta_i}\right)+(T-1)\left(\sum_{i=1}^{M}\frac{\alpha_i}{p_i}\right)}{\displaystyle\left(\sum_{i=1}^M \sqrt{\frac{\alpha_i}{p_i}}\right)^2+T\left(\sum_{i=1}^M\alpha_i\right)} \; .
\end{equation}

\begin{proof}
Consider the objective function for the Randomized policy in \eqref{eq.Objective_Random} and assume a general frame length $T$. To find an upper bound $U_B^R$, we derive a lower bound on $\mathbb{E}[d_i]$. Recall that during frame $k$, the Randomized policy can select the same client multiple times. Hence, the probability of delivering a packet to client $i$ during frame $k$ is given by
\begin{align}\label{eq.prob_Random}
\mathbb{E}&\left[d_i\right]=\mathbb{P}(\mbox{delivery to client $i$ during frame $k$}) \nonumber\\
&=\sum_{s=0}^{T}\mathbb{P}(\,\mbox{delivery}\,|\,\mbox{s selections}\,)\mathbb{P}(\,\mbox{s selections}\,) \nonumber\\
&=\sum_{s=0}^{T}\left[1-(1-p_i)^s\right]\binom{T}{s}\left(\frac{\beta_i}{\sum_{j=1}^{M}\beta_j}\right)^s\left(1-\frac{\beta_i}{\sum_{j=1}^{M}\beta_j}\right)^{T-s} \nonumber\\
&\overset{(a)}{\geq} p_i\sum_{s=1}^{T}\binom{T}{s}\left(\frac{\beta_i}{\sum_{j=1}^{M}\beta_j}\right)^s\left(1-\frac{\beta_i}{\sum_{j=1}^{M}\beta_j}\right)^{T-s}\nonumber\\
&=p_i\left[1-\left(1-\frac{\beta_i}{\sum_{j=1}^{M}\beta_j}\right)^{T}\right] \nonumber\\
&\overset{(b)}{\geq} \frac{p_iT\beta_i}{\sum_{j=1}^{M}\beta_j+(T-1)\beta_i} \; ,
\end{align}
where (a) uses $p_i \leq 1-(1-p_i)^s$ for $s \in \{1,2,\cdots,T\}$ and (b) uses \cite[inequality $r_5$]{inequality} which is given below for convenience 
\begin{equation}
(1-x)^T \leq \frac{1}{\left(1+\displaystyle\frac{Tx}{1-x}\right)} \; , \mbox{ for } x \in (0,1) \mbox{ and } T \geq 1 \; .
\end{equation}

Substituting the lower bound \eqref{eq.prob_Random} into the objective function for the Randomized policy in \eqref{eq.Objective_Random} gives 
\begin{align}\label{eq.Objective_Random_UB}
\lim_{K \rightarrow \infty}\mathbb{E}\left[J_K^{R}\right] &=\frac{1}{M}\sum_{i=1}^{M}\frac{\alpha_i}{\mathbb{E}\left[d_i\right]} \leq U_B^R \; ,
\end{align}
where the upper bound is given by 
\begin{equation}\label{eq.UB_Random}
U_B^R = \frac{1}{TM}\sum_{j=1}^{M}\beta_j\sum_{i=1}^{M}\frac{\alpha_i }{p_i\beta_i}+\frac{T-1}{TM}\sum_{i=1}^{M}\frac{\alpha_i}{p_i} \; .
\end{equation}
Finally, dividing \eqref{eq.UB_Random} by the lower bound in \eqref{eq.LowerBound} yields the performance guarantee $\rho^R$ in \eqref{eq.performance_Random}.
\end{proof}
%\newpage
\section{Proof of Theorem \ref{theo.performance_MaxWeight}}\label{app.Theo_performance_MaxWeight}
\noindent \textbf{Theorem \ref{theo.performance_MaxWeight}} (Performance of Max-Weight). Consider a network $(M,T,p_i,\alpha_i)$ with an infinite time-horizon. The Max-Weight policy is $\rho^{MW}$-optimal, where
\begin{equation}%\label{eq.performance_MaxWeight}
\rho^{MW}=4 \frac{\displaystyle\left(\sum_{i=1}^{M}\sqrt{\frac{\alpha_i}{p_i}}\right)^2+(T-1)\sum_{i=1}^M\frac{\alpha_i}{p_i}}{\displaystyle\left(\sum_{i=1}^{M}\sqrt{\frac{\alpha_i}{p_i}}\right)^2+T\left(\sum_{i=1}^M\alpha_i\right)} \; .
\end{equation}

\begin{proof}
To obtain the upper bound, $\lim_{K \rightarrow \infty}\mathbb{E}\left[J_K^{MW}\right] \leq U_B^{MW}$, we manipulate the expression of the one-frame Lyapunov drift. Since $\Delta(\vec{h}_k)$ is central to this proof, we rewrite \eqref{eq.Lyapunov_drift} below for convenience 
\begin{align*}
\Delta(\vec{h}_k)=&-\frac{1}{M}\sum_{i=1}^M\mathbb{E}\left[d_i(k)|\vec{h}_{k}\right]\alpha_ih_{k,i}(h_{k,i}+2)+\nonumber\\
&+\frac{2}{M}\sum_{i=1}^M\alpha_ih_{k,i}+\frac{1}{M}\sum_{i=1}^M\alpha_i \; .
\end{align*}
Recall that the Max-Weight Policy minimizes $\Delta(\vec{h}_k)$ by choosing $\mathbb{E}\left[d_i(k)|\vec{h}_{k}\right]$ such that the sum 
$$
\sum_{i=1}^M\mathbb{E}\left[d_i(k)|\vec{h}_{k}\right]\alpha_ih_{k,i}(h_{k,i}+2)
$$
is maximized. Employing any other policy $\pi \in \Pi$ yields a lower (or equal) sum. %and, consequently, $\Delta(\vec{h}_k)\leq\Delta^{\pi}(\vec{h}_k)$. 
Consider a Randomized Policy as defined in Sec.~\ref{sec.Randomized}. Its expected throughput is constant in every frame $k$, independently of the value of $\vec{h}_k$, thus $\mathbb{E}\left[d_i(k)|\vec{h}_{k}\right]=\mathbb{E}\left[d_i\right]$. Substituting $\mathbb{E}\left[d_i\right]$ into the equation of the one-frame Lyapunov Drift gives 
\begin{align}\label{eq.drift_first}
\Delta(\vec{h}_k)\leq&-\frac{1}{M}\sum_{i=1}^M\mathbb{E}\left[d_i\right]\alpha_ih_{k,i}(h_{k,i}+2)+\frac{2}{M}\sum_{i=1}^M\alpha_ih_{k,i}+\frac{1}{M}\sum_{i=1}^M\alpha_i \nonumber\\
=&-\frac{1}{M}\left\{\sum_{i=1}^M \alpha_i\mathbb{E}\left[d_i\right]\left(h_{k,i}-\frac{1}{\mathbb{E}\left[d_i\right]}+1\right)^2\right\}+\nonumber\\
&+\frac{1}{M}\sum_{i=1}^M \alpha_i\left[\mathbb{E}\left[d_i\right]\left(\frac{1}{\mathbb{E}\left[d_i\right]}-1\right)^2+1\right] \; .
\end{align}
Consider the Cauchy-Schwarz inequality 
\begin{align*}
&\left\{\sum_{i=1}^M \alpha_i\mathbb{E}\left[d_i\right]\left(h_{k,i}-\frac{1}{\mathbb{E}\left[d_i\right]}+1\right)^2\right\}\left\{\sum_{i=1}^M \frac{\alpha_i}{\mathbb{E}\left[d_i\right]}\right\} \geq \\
&\hspace{2cm}\geq \left\{\sum_{i=1}^M\alpha_i\left|h_{k,i}-\frac{1}{\mathbb{E}\left[d_i\right]}+1\right|\right\}^2 \; .
\end{align*}
Applying this inequality to \eqref{eq.drift_first} gives 
\begin{align*}
\Delta(\vec{h}_k)\leq&-\frac{1}{M}\left\{\sum_{i=1}^M \frac{\alpha_i}{\mathbb{E}\left[d_i\right]}\right\}^{-1}\left\{\sum_{i=1}^M\alpha_i\left|h_{k,i}-\frac{1}{\mathbb{E}\left[d_i\right]}+1\right|\right\}^2+\\
&+\frac{1}{M}\sum_{i=1}^M \alpha_i\left[\mathbb{E}\left[d_i\right]\left(\frac{1}{\mathbb{E}\left[d_i\right]}-1\right)^2+1\right] \; ,
\end{align*}
and rearranging the terms 
\begin{align*}
\frac{1}{M}\left\{\sum_{i=1}^M\alpha_i\left|h_{k,i}-\frac{1}{\mathbb{E}\left[d_i\right]}+1\right|\right\}^2 \leq -\left\{\sum_{i=1}^M \frac{\alpha_i}{\mathbb{E}\left[d_i\right]}\right\}\Delta(\vec{h}_k)+\\+\left\{\sum_{i=1}^M \frac{\alpha_i}{\mathbb{E}\left[d_i\right]}\right\}\frac{1}{M}\sum_{i=1}^M \alpha_i\left[\mathbb{E}\left[d_i\right]\left(\frac{1}{\mathbb{E}\left[d_i\right]}-1\right)^2+1\right] \; .
\end{align*}

Now, taking the expectation with respect to $\vec{h}_k$ yields 
\begin{align*}
&\frac{1}{M}\mathbb{E}\left[\left\{\sum_{i=1}^M\alpha_i\left|h_{k,i}-\frac{1}{\mathbb{E}\left[d_i\right]}+1\right|\right\}^2\right] \leq \\
&\quad\quad\leq -\left\{\sum_{i=1}^M \frac{\alpha_i}{\mathbb{E}\left[d_i\right]}\right\}\mathbb{E}\left[\Delta(\vec{h}_k)\right]+\\
&\quad\quad+\left\{\sum_{i=1}^M \frac{\alpha_i}{\mathbb{E}\left[d_i\right]}\right\}\frac{1}{M}\sum_{i=1}^M \alpha_i\left[\mathbb{E}\left[d_i\right]\left(\frac{1}{\mathbb{E}\left[d_i\right]}-1\right)^2+1\right] \; ,
\end{align*}
summing over $k \in \{1,2,\cdots,K\}$ and dividing by $K$ results in 
\begin{align}
&\frac{1}{KM}\sum_{k=1}^K\mathbb{E}\left[\left\{\sum_{i=1}^M\alpha_i\left|h_{k,i}-\frac{1}{\mathbb{E}\left[d_i\right]}+1\right|\right\}^2\right] \leq \label{eq.drift_complete}\\
&\quad\quad\leq -\left\{\sum_{i=1}^M \frac{\alpha_i}{\mathbb{E}\left[d_i\right]}\right\}\frac{1}{K}\sum_{k=1}^K\mathbb{E}\left[\Delta(\vec{h}_k)\right]+\nonumber\\
&\quad\quad+\left\{\sum_{i=1}^M \frac{\alpha_i}{\mathbb{E}\left[d_i\right]}\right\}\frac{1}{M}\sum_{i=1}^M \alpha_i\left[\mathbb{E}\left[d_i\right]\left(\frac{1}{\mathbb{E}\left[d_i\right]}-1\right)^2+1\right] \; .\nonumber
\end{align}

For simplicity of exposition, we divide inequality \eqref{eq.drift_complete} in two terms LHS $\leq$ RHS, analyzing each part separately. Applying Jensen's inequality to the LHS twice, gives 
\begin{align*}
\frac{1}{KM}\sum_{k=1}^K\mathbb{E}\left[\left\{\sum_{i=1}^M\alpha_i\left|h_{k,i}-\frac{1}{\mathbb{E}\left[d_i\right]}+1\right|\right\}^2\right] &\leq \mbox{RHS}\; ;\\ 
\frac{1}{KM}\sum_{k=1}^K\mathbb{E}\left[\sum_{i=1}^M\alpha_i\left|h_{k,i}-\frac{1}{\mathbb{E}\left[d_i\right]}+1\right|\right]^2 &\leq \mbox{RHS} \; ;\\ 
\frac{1}{M}\left\{\frac{1}{K}\sum_{k=1}^K\mathbb{E}\left[\sum_{i=1}^M\alpha_i\left|h_{k,i}-\frac{1}{\mathbb{E}\left[d_i\right]}+1\right|\right]\right\}^2 &\leq \mbox{RHS} \; .
\end{align*}
Then, by further manipulating this expression, we have 
\begin{align}\label{eq.drift_LHS}
\frac{1}{M}&\left|\frac{1}{K}\sum_{k=1}^K\mathbb{E}\left[\sum_{i=1}^M\alpha_i\left|h_{k,i}-\frac{1}{\mathbb{E}\left[d_i\right]}+1\right|\right]\right| \leq \sqrt{\frac{\mbox{RHS}}{M}} \; ;\nonumber\\ 
\frac{1}{KM}&\sum_{k=1}^K\sum_{i=1}^M\mathbb{E}\left[\alpha_i\left(h_{k,i}-\frac{1}{\mathbb{E}\left[d_i\right]}+1\right)\right] \leq \sqrt{\frac{\mbox{RHS}}{M}} \; ;\nonumber\\
\frac{1}{KM}&\sum_{k=1}^K\sum_{i=1}^M\alpha_i\mathbb{E}\left[h_{k,i}\right] \leq \frac{1}{M}\sum_{i=1}^M\alpha_i\left(\frac{1}{\mathbb{E}\left[d_i\right]}-1\right) + \sqrt{\frac{\mbox{RHS}}{M}}\; ;\nonumber\\
&\mathbb{E}\left[J_K^{MW}\right] \leq \frac{1}{M}\sum_{i=1}^M\frac{\alpha_i}{\mathbb{E}\left[d_i\right]} + \sqrt{\frac{\mbox{RHS}}{M}} \; .
\end{align}

Going back to \eqref{eq.drift_complete} and analyzing the first term on the RHS gives 
\begin{align*}
-\frac{1}{K}\sum_{k=1}^K\mathbb{E}\left[\Delta(\vec{h}_k)\right]&=\frac{1}{K}\left\{\mathbb{E}\left[L(\vec{h}_1)\right]-\mathbb{E}\left[L(\vec{h}_{K+1})\right]\right\} \; ;\\
-\frac{1}{K}\sum_{k=1}^K\mathbb{E}\left[\Delta(\vec{h}_k)\right]&\leq\frac{\mathbb{E}[L(\vec{h}_1)]}{K} \; ,
\end{align*}
and the second term on the RHS is such that 
\begin{align*}
\frac{1}{M}\sum_{i=1}^M\alpha_i\left[\mathbb{E}\left[d_i\right]\left(\frac{1}{\mathbb{E}\left[d_i\right]}-1\right)^2+1\right] \leq \frac{1}{M}\sum_{i=1}^M\left[\frac{\alpha_i}{\mathbb{E}\left[d_i\right]}\right] \; .
\end{align*}
Using both results, the RHS of \eqref{eq.drift_complete} can be upper bounded by 
\begin{equation*}
RHS \leq \left\{\sum_{i=1}^M \frac{\alpha_i}{\mathbb{E}\left[d_i\right]}\right\}\left\{\frac{\mathbb{E}[L(\vec{h}_1)]}{K}+\frac{1}{M}\sum_{i=1}^M \frac{\alpha_i}{\mathbb{E}\left[d_i\right]}\right\} \; ,
\end{equation*}
and, in the limit $K \rightarrow \infty$ 
\begin{equation*}
RHS \leq \left\{\sum_{i=1}^M \frac{\alpha_i}{\mathbb{E}\left[d_i\right]}\right\}\left\{\frac{1}{M}\sum_{i=1}^M \frac{\alpha_i}{\mathbb{E}\left[d_i\right]}\right\}=\frac{1}{M}\left\{\sum_{i=1}^M \frac{\alpha_i}{\mathbb{E}\left[d_i\right]}\right\}^2 \; .
\end{equation*}

Substituting the upper bound for the RHS into the inequality \eqref{eq.drift_LHS} and applying the limit $K \rightarrow \infty$, gives 
\begin{align}\label{eq.drift_UB1}
\lim_{K\rightarrow\infty}\mathbb{E}\left[J_K^{MW}\right] &\leq \frac{1}{M}\sum_{i=1}^M\frac{\alpha_i}{\mathbb{E}\left[d_i\right]}+ \sqrt{\frac{\mbox{RHS}}{M}} \nonumber\\
&\leq \frac{1}{M}\sum_{i=1}^M\frac{\alpha_i}{\mathbb{E}\left[d_i\right]}+ \frac{1}{M}\sum_{i=1}^M\frac{\alpha_i}{\mathbb{E}\left[d_i\right]} \nonumber\\
&=\frac{2}{M}\sum_{i=1}^M\frac{\alpha_i}{\mathbb{E}\left[d_i\right]} \; .
\end{align}

The expression in \eqref{eq.drift_UB1} gives an upper bound to the performance of the Max-Weight Policy as a function of $\mathbb{E}\left[d_i\right]$. Using inequality \eqref{eq.prob_Random} from Appendix \ref{app.Theo_performance_Random}, namely 
\begin{equation*}
\mathbb{E}\left[d_i\right] \geq \frac{p_iT\beta_i}{\sum_{j=1}^{M}\beta_j+(T-1)\beta_i} \; ,
\end{equation*}
results in the following upper bound 
\begin{align*}
\lim_{K\rightarrow\infty}\mathbb{E}\left[J_K^{MW}\right] &\leq \frac{2}{MT}\sum_{j=1}^{M}\beta_j\sum_{i=1}^M\frac{\alpha_i}{p_i\beta_i}+\frac{2(T-1)}{MT}\sum_{i=1}^M\frac{\alpha_i}{p_i} \; ,
\end{align*}
where $\beta_i/\sum_{j=1}^M\beta_j$ is the probability of the Randomized Policy selecting client $i$ for transmission in any given slot. To obtain an upper bound that is a function of only the network setup, consider the Randomized Policy described in Corollary \ref{cor.performance_Random}, which assigns $\beta_i^2=\alpha_i/p_i$. Using this assignment, we obtain the upper bound $\lim_{K\rightarrow\infty}\mathbb{E}\left[J_K^{MW}\right] \leq U_B^{MW}$, where 
\begin{equation}\label{eq.UB_MaxWeight}
U_B^{MW}=\frac{2}{MT}\left\{\left(\sum_{i=1}^{M}\sqrt{\frac{\alpha_i}{p_i}}\right)^2+(T-1)\sum_{i=1}^M\frac{\alpha_i}{p_i}\right\} \; .
\end{equation}
Finally, dividing \eqref{eq.UB_MaxWeight} by the lower bound in \eqref{eq.LowerBound} yields the performance guarantee $\rho^{MW}$ in \eqref{eq.performance_MaxWeight}.
\end{proof}
%\newpage
\section{Proof of Proposition \ref{prop.Threshold}}\label{app.Prop_Threshold}
\noindent \textbf{Proposition \ref{prop.Threshold}} (Threshold Policy). Consider the Frame-Based Decoupled Model over an infinite-horizon. The stationary scheduling policy $\pi^*$ that solves Bellman equations \eqref{eq.Bellman} is a threshold policy in which the BS transmits during frames that have $h \geq H$ and idles when $1\leq h<H$, where the threshold $H$ is given by
\begin{equation}%\label{eq.H}
H=\left\lfloor 1-Z + \sqrt{Z^2+\frac{2C}{pT\alpha}} \right\rfloor \; ,
\end{equation}
and the value of $Z$ is
\begin{equation}
Z=\frac{1}{2}+\frac{(1-p)^T}{(1-(1-p)^T)} \; .
\end{equation}

\begin{proof}
During frame $k$, the scheduling policy must decide between transmitting and idling. If $\pi$ transmits, the value of $h$ may be reduced to $h=1$ and the network incurs an expected service charge of $\hat{C}=C(1-(1-p)^T)/p$. On the other hand, if $\pi$ idles, the value of $h$ is incremented by $1$ and there is no service charge. Intuitively, we expect that the optimal scheduling decision is to transmit during frames in which $h$ is high and idle when $h$ is low. In particular, if the optimal scheduling decision is to transmit when $h=H$, we expect that it is also optimal to transmit for all $h\geq H$. This behavior is characteristic of threshold policies.

In this appendix, we assume that $\pi^*$ is a threshold policy that idles when $1 \leq h <H$ and transmits when $h\geq H$, for a given value of $H\geq 1$. Using this assumption, we solve Bellman equations \eqref{eq.Bellman} and then show that the solution is consistent with the assumption. For convenience, we rewrite Bellman equations below as $S(1)=0$ and 
\begin{align}\label{eq.Bellman_2}
S(h)= & S(h+1)-\lambda T+T \alpha h + \nonumber\\
& +\min\left\{ 0 ; \hat{C} - \left[1-(1-p)^T\;\right]S(h+1) \right\} \; .
\end{align}

First, we \textbf{analyze the case} $\mathbf{h\geq H}$. According to \eqref{eq.Bellman_2}, the condition for the threshold policy $\pi^*$ to transmit in a frame with state $h$ is
\begin{equation}\label{eq.Condition_2}
S(h+1)>\frac{\hat{C}}{1-(1-p)^T} \;\mbox{ , for } h\geq H \; .
\end{equation}
Assuming that condition \eqref{eq.Condition_2} holds, it follows from \eqref{eq.Bellman_2} that
\begin{align*}
S(h)=-\lambda T+T \alpha h + \hat{C} +(1-p)^TS(h+1)\; .
\end{align*}
Since this expression is valid for all $h\geq H$, we can substitute $S(h+1)$ above and get
\begin{align*}
S(h)=&-\lambda T+T \alpha h + \hat{C} + \\
&+(1-p)^T\left[-\lambda T+T \alpha (h+1) + \hat{C}\right] +\\
&+(1-p)^{2T}S(h+2)\; .
\end{align*}
Repeating this procedure $n$ times, yields
\begin{align*}
S(h)=&[-\lambda T+T \alpha h + \hat{C}][1+(1-p)^T+\cdots+(1-p)^{nT}] + \\
&+T \alpha[(1-p)^{T}+2(1-p)^{2T}+\cdots+n(1-p)^{nT}] + \\
&+(1-p)^{(n+1)T}S(h+n+1)\; ,
\end{align*}
and in the limit $n \rightarrow \infty$ we have
\begin{equation*}
S(h)=\frac{T \alpha h + \hat{C}-\lambda T}{1-(1-p)^T} +\frac{T \alpha (1-p)^{T}}{\left(1-(1-p)^{T}\right)^2} \; .
\end{equation*}
Notice that $(1-p)^{(n+1)T}S(h+n+1) \rightarrow 0$ when $n \rightarrow \infty$. To emphasize that this expression is valid only for $h\geq H$, we substitute $h=H+j^+$ with $j^+\in\{0,1,2,\cdots\}$ and get
\begin{equation}\label{eq.S(H+)}
S(H+j^+)=\frac{T \alpha (H +\; j^+) + \hat{C}-\lambda T}{1-(1-p)^T} +\frac{T \alpha (1-p)^{T}}{\left(1-(1-p)^{T}\right)^2} \; .
\end{equation}

Next, we \textbf{analyze the case} $\mathbf{1\leq h < H}$. According to \eqref{eq.Bellman_2}, the condition for the threshold policy $\pi^*$ to idle in a frame with state $h$ is
\begin{equation}\label{eq.Condition_3}
S(h+1)\leq \frac{\hat{C}}{1-(1-p)^T} \;\mbox{ , for } 1\leq h < H \; .
\end{equation}
Assuming that condition \eqref{eq.Condition_3} holds, it follows from \eqref{eq.Bellman_2} that
\begin{align*}
S(h)=S(h+1)-\lambda T+T \alpha h \; .
\end{align*}
Since this expression is valid for $h \in \{1,2,\cdots,H-1\}$ and $S(H)$ is known from \eqref{eq.S(H+)}, we have
\begin{equation*}
S(H-1)=S(H)-\lambda T+T \alpha (H-1) \; .
\end{equation*}
Moreover,
\begin{align*}
S(H-2)&=S(H-1)-\lambda T+T \alpha (H-2) \\
	  &=S(H)-2\lambda T+2T \alpha H-T \alpha (1+2) \; .
\end{align*}
Repeating this procedure $n$ times, yields
\begin{align*}
S(H-n)&=S(H)-n\lambda T+nT \alpha H-T \alpha (1+2+\cdots+n) \; . \\
	  &=S(H)-n\lambda T+nT \alpha H-\frac{T \alpha (1+n)n}{2} \; . 
\end{align*}
To emphasize that this expression is valid only for $1 \leq h < H$, we substitute $h=H+j^-$, where $j^-\in\{-H+1,\cdots,-2,-1\}$ and get
\begin{equation}\label{eq.S(H-)}
S(H+j^-)=S(H)+j^-\left[\lambda T-T \alpha H-\frac{T \alpha (j^--1)}{2}\right] \; . 
\end{equation}

Expressions \eqref{eq.S(H+)} and \eqref{eq.S(H-)} give the differential cost-to-go $S(h)$ as a function of the threshold $H$ and the optimal average cost $\lambda$. To find both variables, $H$ and $\lambda$, we first analyze the optimal policy $\pi^*$ in the vicinity of the threshold. Policy $\pi^*$ idles when $h=H-1$ and transmits when $h=H$. Merging conditions \eqref{eq.Condition_2} and \eqref{eq.Condition_3} give:
\begin{align*}
S(H)\leq \frac{\hat{C}}{1-(1-p)^T} <S(H+1) \; .
\end{align*}
Since the expression for $S(H+j^+)$ in \eqref{eq.S(H+)} is monotonically increasing in $j^+ \in \{0,1,2,\cdots\}$, it follows that there exists $H+\gamma$ with $H \in \{1,2,3,\cdots\}$ and $\gamma \in [0,1)$ such that
\begin{equation}\label{eq.lambda_1}
S(H+\gamma)=\frac{\hat{C}}{1-(1-p)^T} \; .
\end{equation}
Substituting \eqref{eq.S(H+)} into \eqref{eq.lambda_1} yields
\begin{align}\label{eq.lambda_2}
&\frac{T \alpha (H+\gamma) -\lambda T}{1-(1-p)^T} +\frac{T \alpha (1-p)^{T}}{\left(1-(1-p)^{T}\right)^2}=0 \nonumber\\
&\lambda  T - T\alpha H= T\alpha\gamma+\frac{T\alpha(1-p)^{T}}{1-(1-p)^{T}} \; .
\end{align}

Next, we analyze the Bellman equation $S(1)=0$ using the expression for $S(H+j^-)$ in \eqref{eq.S(H-)} with $j^- = -H+1$, as follows
\begin{align*}
S(H)+(-H+1)&\left[\lambda T -T\alpha H + \frac{T \alpha H}{2}\right]=0 \; .
\end{align*}
Substituting $S(H)$ from \eqref{eq.S(H+)} gives
\begin{align}\label{eq.aux_2}
\frac{\hat{C}+ T \alpha H -\lambda T}{1-(1-p)^T} +\frac{T \alpha (1-p)^{T}}{\left(1-(1-p)^{T}\right)^2}=\nonumber\\
=(H-1)\left[\lambda T-T \alpha H+\frac{T \alpha H}{2}\right] \; .
\end{align}
Combining \eqref{eq.aux_2} and \eqref{eq.lambda_2} yields
\begin{align*}
\frac{\hat{C}-T \alpha \gamma}{1-(1-p)^T} =(H-1)\left[T\alpha\gamma+\frac{T\alpha(1-p)^{T}}{1-(1-p)^{T}}+\frac{T \alpha H}{2}\right] \; .
\end{align*}

Manipulating this quadratic equation on $H$ gives the unique positive solution:
\begin{align}\label{eq.H_proof_1}
H=\left(1-\gamma \right)-Z+\sqrt{\frac{2C}{T\alpha p}-\gamma\left(1-\gamma\right)+Z^2} \; ,
\end{align}
where 
\begin{align*}
\hat{C}=\frac{C(1-(1-p)^T)}{p} \; \mbox{ and } \; Z=\frac{1}{2}+\frac{(1-p)^T}{(1-(1-p)^T)} \; .
\end{align*}
It is easy to see from \eqref{eq.H_proof_1} that the derivative $dH/d\gamma<0$ when $\gamma \in [0,1)$, implying that $H$ is monotonically decreasing. Hence, in the range $\gamma \in [0,1)$, the value of $H$ decreases from
\begin{align*}
H(0)=1-Z+\sqrt{\frac{2C}{T\alpha p}+Z^2} \; \mbox{ to } \; H(1)=-Z+\sqrt{\frac{2C}{T\alpha p}+Z^2}
\end{align*}
Since $H(0)-H(1)=1$, there exists a unique $\gamma^* \in [0,1)$ such that $H(\gamma^*)$ is integer-valued and the expression for $H$ can be obtained as $H= H(\gamma^*)=\lfloor H(0) \rfloor$, or more explicitly by
\begin{equation}\label{eq.H_proof}
H=\left\lfloor 1-Z+\sqrt{Z^2+\frac{2C}{T\alpha p}} \right\rfloor \; .
\end{equation}

With the expression for $H$, we can obtain the optimal average cost per frame by isolating $\lambda$ in \eqref{eq.aux_2} as follows
\begin{equation}\label{eq.L_proof}
\lambda = \frac{\alpha}{1-(1-p)^T}+\frac{\displaystyle\frac{C}{Tp}+\frac{\alpha H (H-1)}{2}}{H+\displaystyle\frac{(1-p)^T}{1-(1-p)^T}} \; .
\end{equation}

Finally, with the closed-form expressions for the differential cost-to-go $S(h)$, threshold $H$ and optimal average cost per frame $\lambda$, it is possible to evaluate the consistency between the solution and the assumption of a threshold policy. For the solution of the Bellman equation \eqref{eq.Bellman_2} to be a threshold policy, the following condition must hold:
\begin{align*}
S(H+j^-+1) \leq \frac{\hat{C}}{1-(1-p)^T} < S(H+j^++1) \; ,
\end{align*}
for all $j^- \in \{-H+1,\cdots,-1\}$ and $j^+ \in \{0,1,\cdots\}$. Since $S(H+j^-)$ and $S(H+j^+)$ are monotonically increasing with $j^-$ and $j^+$, respectively, it is sufficient to show that
\begin{align}\label{eq.condition_S}
S(H)\leq \frac{\hat{C}}{1-(1-p)^T} < S(H+1) \; .
\end{align}

Recall from \eqref{eq.lambda_1} that there exists $\gamma \in [0,1)$ such that
\begin{equation*}
S(H+\gamma)=\frac{\hat{C}}{1-(1-p)^T} \; .
\end{equation*}
From the monotonicity of $S(H+j^+)$, it follows that condition \eqref{eq.condition_S} is satisfied. Thus, the solution to Bellman equations is consistent.
\end{proof}
%\newpage
\section{Proof of Theorem \ref{theo.performance_Whittle}}\label{app.Theo_performance_Whittle}
\noindent \textbf{Theorem \ref{theo.performance_Whittle}} (Performance of Whittle). Consider a network $(M,T,p_i,\alpha_i)$ with an infinite time-horizon. The Whittle's Index policy is $\rho^{WI}$-optimal, where
\begin{equation}%\label{eq.performance_Whittle}
\rho^{WI}=4 \frac{\displaystyle\left(\sum_{i=1}^{M}\sqrt{\frac{\mathbf{\widetilde{\alpha}_i}}{p_i}}\right)^2+(T-1)\sum_{i=1}^M\frac{\mathbf{\widetilde{\alpha}_i}}{p_i}}{\displaystyle\left(\sum_{i=1}^{M}\sqrt{\frac{\alpha_i}{p_i}}\right)^2+T\left(\sum_{i=1}^M\alpha_i\right)} \; ,
\end{equation}
and 
\begin{equation}
\mathbf{\widetilde{\alpha}_i} = \frac{\alpha_i}{2}\left(\frac{2}{1-(1-p_i)^T}+1\right)^2\; .
\end{equation}

\begin{proof}
From the definition of the Whittle Index Policy, it can be seen that the choice of $\mathbb{E}\left[d_i(k)|\vec{h}_{k}\right]$ is such that the sum 
$$
\sum_{i=1}^M\mathbb{E}\left[d_i(k)|\vec{h}_{k}\right]\alpha_ih_{k,i}\left(h_{k,i}+\frac{1+(1-p_i)^T}{1-(1-p_i)^T}\right)
$$
is maximized. Notice that the difference between Whittle and Max-Weight is only the last term in the sum. Denote this term by
$$
Y_i=\frac{1+(1-p_i)^T}{1-(1-p_i)^T} \; .
$$

The first step is to find an upper bound to the one-frame Lyapunov drift $\Delta(\vec{h}_k)$ that has the Whittle Index Policy as its minimizer. If this can be done, the subsequent arguments of the proof are analogous to the ones for the derivation of the performance guarantee of the Max-Weight Policy, $\rho^{MW}$, in Appendix \ref{app.Theo_performance_MaxWeight}. Consider the expression of $\Delta(\vec{h}_k)$ in \eqref{eq.Lyapunov_drift} stated below
\begin{align*}
\Delta(\vec{h}_k)=&-\frac{1}{M}\sum_{i=1}^M\mathbb{E}\left[d_i(k)|\vec{h}_{k}\right]\alpha_ih_{k,i}(h_{k,i}+2)+\nonumber\\
&+\frac{2}{M}\sum_{i=1}^M\alpha_ih_{k,i}+\frac{1}{M}\sum_{i=1}^M\alpha_i \; . \\
\end{align*}

Manipulating the first term on the RHS of $\Delta(\vec{h}_k)$, yields
\begin{align}\label{eq.WhittleBound_aux}
-&\sum_{i=1}^M\mathbb{E}\left[d_i(k)|\vec{h}_{k}\right]\alpha_ih_{k,i}(h_{k,i}+2)=\nonumber\\
=&-\sum_{i=1}^M\mathbb{E}\left[d_i(k)|\vec{h}_{k}\right]\alpha_ih_{k,i}(h_{k,i}+Y_i)+\nonumber\\
&+\sum_{i=1}^M\mathbb{E}\left[d_i(k)|\vec{h}_{k}\right]\alpha_ih_{k,i}Y_i-\sum_{i=1}^M\mathbb{E}\left[d_i(k)|\vec{h}_{k}\right]\alpha_ih_{k,i}2 \nonumber\\
\leq&-\sum_{i=1}^M\mathbb{E}\left[d_i(k)|\vec{h}_{k}\right]\alpha_ih_{k,i}(h_{k,i}+Y_i)+\sum_{i=1}^M\alpha_ih_{k,i}Y_i \; .
\end{align}

Substituting \eqref{eq.WhittleBound_aux} into the expression of $\Delta(\vec{h}_k)$ in \eqref{eq.Lyapunov_drift} gives
\begin{align}\label{eq.WhittleBound}
\Delta(\vec{h}_k)\leq &-\frac{1}{M}\sum_{i=1}^M\mathbb{E}\left[d_i(k)|\vec{h}_{k}\right]\alpha_ih_{k,i}(h_{k,i}+Y_i)+\nonumber\\
&+\frac{1}{M}\sum_{i=1}^M\alpha_ih_{k,i}(2+Y_i)+\frac{1}{M}\sum_{i=1}^M\alpha_i \; . 
\end{align}
Observe that the Whittle Index Policy minimizes this upper bound. Hence, employing any other policy $\pi \in \Pi$ yields a higher (or equal) bound. %and, consequently, $\Delta(\vec{h}_k)\leq\Delta^{\pi}(\vec{h}_k)$. 
Consider a Randomized Policy as defined in Sec.~\ref{sec.Randomized}. Its expected throughput is constant in every frame $k$, independently of the value of $\vec{h}_k$, thus $\mathbb{E}\left[d_i(k)|\vec{h}_{k}\right]=\mathbb{E}\left[d_i\right]$. Substituting $\mathbb{E}\left[d_i\right]$ into the upper bound gives
\begin{align}\label{eq.drift_first_Whittle}
\Delta(\vec{h}_k)\leq &-\frac{1}{M}\sum_{i=1}^M\mathbb{E}\left[d_i\right]\alpha_ih_{k,i}(h_{k,i}+Y_i)+\nonumber\\
&+\frac{1}{M}\sum_{i=1}^M\alpha_ih_{k,i}(2+Y_i)+\frac{1}{M}\sum_{i=1}^M\alpha_i \; ;\nonumber\\
\Delta(\vec{h}_k)\leq &-\frac{1}{M}\left\{\sum_{i=1}^M \alpha_i\mathbb{E}\left[d_i\right]\left(h_{k,i}-\frac{2+Y_i}{2\mathbb{E}\left[d_i\right]}+\frac{Y_i}{2}\right)^2\right\}+\nonumber\\
&+\frac{1}{M}\sum_{i=1}^M \alpha_i\left[\mathbb{E}\left[d_i\right]\left(\frac{2+Y_i}{2\mathbb{E}\left[d_i\right]}-\frac{Y_i}{2}\right)^2+1\right] \; .
\end{align}
Consider the Cauchy-Schwarz inequality
\begin{align*}
&\left\{\sum_{i=1}^M \alpha_i\mathbb{E}\left[d_i\right]\left(h_{k,i}-\frac{2+Y_i}{2\mathbb{E}\left[d_i\right]}+\frac{Y_i}{2}\right)^2\right\}\left\{\sum_{i=1}^M \frac{\alpha_i}{\mathbb{E}\left[d_i\right]}\right\} \geq \\
&\hspace{2cm}\geq \left\{\sum_{i=1}^M\alpha_i\left|h_{k,i}-\frac{2+Y_i}{2\mathbb{E}\left[d_i\right]}+\frac{Y_i}{2}\right|\right\}^2 \; .
\end{align*}
Applying this inequality to \eqref{eq.drift_first_Whittle} gives
\begin{align*}
\Delta(\vec{h}_k)\leq&-\frac{1}{M}\left\{\sum_{i=1}^M \frac{\alpha_i}{\mathbb{E}\left[d_i\right]}\right\}^{-1}\left\{\sum_{i=1}^M\alpha_i\left|h_{k,i}-\frac{2+Y_i}{2\mathbb{E}\left[d_i\right]}+\frac{Y_i}{2}\right|\right\}^2\\
&+\frac{1}{M}\sum_{i=1}^M \alpha_i\left[\mathbb{E}\left[d_i\right]\left(\frac{2+Y_i}{2\mathbb{E}\left[d_i\right]}-\frac{Y_i}{2}\right)^2+1\right] \; .
\end{align*}

Now, rearranging the terms, taking expectation with respect to $\vec{h}_k$, summing over $k \in \{1,2,\cdots,K\}$ and then dividing by $K$ results in
\begin{align}
&\frac{1}{KM}\sum_{k=1}^K\mathbb{E}\left[\left\{\sum_{i=1}^M\alpha_i\left|h_{k,i}-\frac{2+Y_i}{2\mathbb{E}\left[d_i\right]}+\frac{Y_i}{2}\right|\right\}^2\right] \leq \label{eq.drift_complete_Whittle}\\
&\quad\quad\leq -\left\{\sum_{i=1}^M \frac{\alpha_i}{\mathbb{E}\left[d_i\right]}\right\}\frac{1}{K}\sum_{k=1}^K\mathbb{E}\left[\Delta(\vec{h}_k)\right]+\nonumber\\
&\quad\quad+\left\{\sum_{i=1}^M \frac{\alpha_i}{\mathbb{E}\left[d_i\right]}\right\}\frac{1}{M}\sum_{i=1}^M \alpha_i\left[\mathbb{E}\left[d_i\right]\left(\frac{2+Y_i}{2\mathbb{E}\left[d_i\right]}-\frac{Y_i}{2}\right)^2+1\right] \; .\nonumber
\end{align}

For simplicity of exposition, we divide inequality \eqref{eq.drift_complete_Whittle} in two terms LHS $\leq$ RHS and analyze each part separately. Applying Jensen's inequality to the LHS twice and then manipulating the resulting expression gives
\begin{align}\label{eq.drift_LHS_Whittle}
\frac{1}{M}&\left\{\frac{1}{K}\sum_{k=1}^K\mathbb{E}\left[\sum_{i=1}^M\alpha_i\left|h_{k,i}-\frac{2+Y_i}{2\mathbb{E}\left[d_i\right]}+\frac{Y_i}{2}\right|\right]\right\}^2 \leq \mbox{RHS} \; ; \nonumber\\
\frac{1}{KM}&\sum_{k=1}^K\sum_{i=1}^M\mathbb{E}\left[\alpha_i\left(h_{k,i}-\frac{2+Y_i}{2\mathbb{E}\left[d_i\right]}+\frac{Y_i}{2}\right)\right] \leq \sqrt{\frac{\mbox{RHS}}{M}} \; ; \nonumber\\
\frac{1}{KM}&\sum_{k=1}^K\sum_{i=1}^M\alpha_i\mathbb{E}\left[h_{k,i}\right]\leq \frac{1}{M}\sum_{i=1}^M\alpha_i\left(\frac{2+Y_i}{2\mathbb{E}\left[d_i\right]}-\frac{Y_i}{2}\right) + \sqrt{\frac{\mbox{RHS}}{M}} \; ; \nonumber\\
&\mathbb{E}\left[J_K^{WI}\right] \leq \frac{1}{M}\sum_{i=1}^M\alpha_i\left[\frac{\left(2+Y_i\right)^2}{2\mathbb{E}\left[d_i\right]}\right] + \sqrt{\frac{\mbox{RHS}}{M}} \; .
\end{align}

Going back to \eqref{eq.drift_complete_Whittle} and analyzing the first term on the RHS gives
\begin{align*}
-\frac{1}{K}\sum_{k=1}^K\mathbb{E}\left[\Delta(\vec{h}_k)\right]&\leq\frac{\mathbb{E}[L(\vec{h}_1)]}{K} \; ,
\end{align*}
and the second term on the RHS is such that
\begin{align*}
\sum_{i=1}^M\alpha_i\left[\mathbb{E}\left[d_i\right]\left(\frac{2+Y_i}{2\mathbb{E}\left[d_i\right]}-\frac{Y_i}{2}\right)^2+1\right] \leq \sum_{i=1}^M\alpha_i\left[\frac{\left(2+Y_i\right)^2}{2\mathbb{E}\left[d_i\right]}\right] \; .
\end{align*}
Using both results, the RHS of \eqref{eq.drift_complete_Whittle} can be upper bounded by
\begin{equation*}
RHS \leq \left\{\sum_{i=1}^M \frac{\alpha_i}{\mathbb{E}\left[d_i\right]}\right\}\left\{\frac{\mathbb{E}[L(\vec{h}_1)]}{K}+\frac{1}{M}\sum_{i=1}^M\alpha_i\left[\frac{\left(2+Y_i\right)^2}{2\mathbb{E}\left[d_i\right]}\right]\right\} \; ,
\end{equation*}
and, in the limit $K \rightarrow \infty$
\begin{align*}
RHS &\leq \frac{1}{M}\left\{\sum_{i=1}^M \frac{\alpha_i}{\mathbb{E}\left[d_i\right]}\right\}\left\{\sum_{i=1}^M\alpha_i\left[\frac{\left(2+Y_i\right)^2}{2\mathbb{E}\left[d_i\right]}\right]\right\} \\
&\leq \frac{1}{M}\left\{\sum_{i=1}^M\alpha_i\left[\frac{\left(2+Y_i\right)^2}{2\mathbb{E}\left[d_i\right]}\right]\right\}^2 \; .
\end{align*}

Substituting the upper bound for the RHS into the inequality \eqref{eq.drift_LHS_Whittle} and applying the limit $K \rightarrow \infty$, gives
\begin{align}\label{eq.drift_UB1_Whittle}
\lim_{K\rightarrow\infty}\mathbb{E}\left[J_K^{WI}\right] &\leq \frac{2}{M}\sum_{i=1}^M\alpha_i\left[\frac{\left(2+Y_i\right)^2}{2\mathbb{E}\left[d_i\right]}\right] \; .
\end{align}

The expression in \eqref{eq.drift_UB1_Whittle} gives an upper bound to the performance of the Whittle Index Policy as a function of $\mathbb{E}\left[d_i\right]$. Define the auxiliary variable $\widetilde{\alpha}_i$ as
\begin{equation*}
\mathbf{\widetilde{\alpha}_i} = \frac{\alpha_i}{2}\left(2+Y_i\right)^2\; .
\end{equation*}
Substituting $\widetilde{\alpha}_i$ into \eqref{eq.drift_UB1_Whittle} gives
\begin{align*}
\lim_{K\rightarrow\infty}\mathbb{E}\left[J_K^{WI}\right] &\leq \frac{2}{M}\sum_{i=1}^M\left[\frac{\mathbf{\widetilde{\alpha}_i}}{\mathbb{E}\left[d_i\right]}\right] \; ,
\end{align*}
which is identical to the upper bound for the Max-Weight Policy in \eqref{eq.drift_UB1}. Thus, using the same arguments, we obtain the upper bound $\lim_{K\rightarrow\infty}\mathbb{E}\left[J_K^{WI}\right] \leq U_B^{WI}$, where
\begin{equation}\label{eq.UB_Whittle}
U_B^{WI}=\frac{2}{MT}\left\{\left(\sum_{i=1}^{M}\sqrt{\frac{\mathbf{\widetilde{\alpha}_i}}{p_i}}\right)^2+(T-1)\sum_{i=1}^M\frac{\mathbf{\widetilde{\alpha}_i}}{p_i}\right\} \; .
\end{equation}
Finally, dividing \eqref{eq.UB_Whittle} by the lower bound in \eqref{eq.LowerBound} yields the performance guarantee $\rho^{WI}$ in \eqref{eq.performance_Whittle}.
\end{proof}
%\newpage
\end{document}